\newcommand{\be}{\begin{equation}}
\newcommand{\ee}{\end{equation}}
\newcommand{\ba}{\begin{array}}
\newcommand{\ea}{\end{array}}
\newcommand{\bea}{\begin{eqnarray}}
\newcommand{\eea}{\end{eqnarray}}
\newcommand{\ra}{\rangle}
\newcommand{\calM}{{\cal M }}
\newcommand{\calU}{{\cal U }}
\newcommand{\calZ}{{\cal Z }}
\newcommand{\FF}{\mathbb{F}}
\newcommand{\ZZ}{\mathbb{Z}}
\newcommand{\rs}[1]{\mathsf{rs}{(#1)}}
\newcommand{\cs}[1]{\mathsf{cs}{(#1)}}
\renewcommand{\ker}[1]{\mathsf{ker}{(#1)}}
\newcommand{\rk}[1]{\mathsf{rk}{(#1)}}
\newcommand{\ord}[1]{\mathrm{ord}(#1)}
\newcommand{\qc}{\mathsf{QC}(A,B)}
\newcommand{\cnot}[2]{\mathsf{CNOT}\quad {#1} \quad {#2}}
\newcommand{\initX}[1]{\mathsf{InitX} \quad {#1}}
\newcommand{\initZ}[1]{\mathsf{InitZ} \quad {#1}}
\newcommand{\measX}[1]{\mathsf{MeasX} \quad {#1}}
\newcommand{\measZ}[1]{\mathsf{MeasZ} \quad {#1}}
\newcommand{\idle}[1]{\mathsf{Idle} \quad {#1}}
\newtheorem{dfn}{Definition}
\newtheorem{lemma}{Lemma}
\newcommand{\Gate}[1]{\mathsf{#1}}
\newcommand{\cnotgate}{\Gate{CNOT}}
\newcommand{\dcirc}{d_{\mathsf{circ}}}
\renewcommand{\sec}[1]{\hyperref[sec:#1]{Section~\ref*{sec:#1}}}
\newcommand{\ssec}[1]{\hyperref[ssec:#1]{Subsection~\ref*{ssec:#1}}}
\newcommand{\fig}[1]{\hyperref[fig:#1]{Figure~\ref*{fig:#1}}}
\newcommand{\tab}[1]{\hyperref[table:#1]{Table~\ref*{table:#1}}}
\newcommand{\lem}[1]{\hyperref[lem:#1]{Lemma~\ref*{lem:#1}}}
\newcommand{\propos}[1]{\hyperref[propos:#1]{Proposition~\ref*{propos:#1}}}
\newcommand{\thm}[1]{\hyperref[thm:#1]{Theorem~\ref*{thm:#1}}}
\newcommand{\alg}[1]{\hyperref[alg:#1]{Algorithm~\ref*{alg:#1}}}
\newcommand{\edit}[1]{{\color{black}{#1}}}
\title{High-threshold and low-overhead fault-tolerant quantum memory}
 \author[1]{Sergey~Bravyi}
 \author[1]{Andrew~W.~Cross}
 \author[1]{Jay~M.~Gambetta}
  \author[1]{Dmitri~Maslov}
 \author[2]{Patrick~Rall}
 \author[1]{Theodore~J.~Yoder} 
 \affil[1]{\large{IBM Quantum, IBM T.J. Watson Research Center, Yorktown Heights, NY 10598 (USA)}}
 \affil[2]{IBM Quantum, MIT-IBM Watson AI Lab, Cambridge, MA 02142 (USA)}
\begin{document}
\maketitle

\begin{abstract}
Quantum error correction becomes a practical possibility
only if the physical error rate is below a threshold value that depends 
on a particular quantum code, syndrome measurement circuit, and decoding algorithm.
Here we present an end-to-end quantum error correction
protocol that implements fault-tolerant memory based on a family of LDPC codes with a high encoding rate
that achieves an error threshold of $0.8\%$ for the standard
circuit-based noise model. This is on par with the surface code which has remained an uncontested leader
in terms of its high error threshold for nearly 20 years.
The full syndrome measurement cycle for a length-$n$ code
in our family 
requires $n$ ancillary qubits
and a depth-7 circuit composed of nearest-neighbor CNOT gates. 
The required qubit connectivity is a degree-6 graph
that consists of two edge-disjoint planar subgraphs. 
%We also show that certain non-Pauli logical gates can be realized fault-tolerantly
%without requiring any extra qubit connectivity.
As a concrete example, we
show that 12 logical qubits can be preserved for \edit{nearly one million} syndrome cycles
using 288 physical qubits in total, assuming the physical error rate of $0.1\%$.
We argue that achieving the same level of error suppression on 12 logical qubits with the surface code would require
\edit{nearly  3000 physical qubits}.
Our findings bring demonstrations of
a low-overhead fault-tolerant quantum memory within the reach of near-term quantum processors. 
\end{abstract}

\section{Introduction}

% 1st paragraph: general intro into QC 
Quantum computing attracted attention due to its ability to offer asymptotically faster solutions to a set of computational problems compared to the best known classical algorithms \cite{nielsen2002quantum}.  It is believed that a scalable functioning quantum computer may help solve computational problems in such areas as scientific discovery, materials research, chemistry, and drug design, to name a few \cite{lloyd1996universal,wang2008quantum,reiher2017elucidating,alexeev2021quantum}.  

% 2nd paragraph: a statement that QEC is needed
The main obstacle to building a quantum computer is the fragility of quantum information, owing to various sources of noise affecting it.  Since isolating a quantum computer from external effects and controlling it to induce a desired computation are in conflict with each other, noise appears to be inevitable.  The sources of noise include imperfections in qubits, materials used, controlling apparatus, State Preparation and Measurement (SPAM) errors, and a variety of external factors ranging from local man-made, such as stray electromagnetic fields, to those inherent to the Universe, such as cosmic rays. See Ref. \cite{gambetta2017building} for a summary.  While some sources of noise can be eliminated with better control \cite{mundada2023experimental}, materials \cite{de2021materials}, and shielding \cite{10.1063/1.3658630, vepsalainen2020impact, PRXQuantum.4.020356}, a number of other sources appear to be difficult if at all possible to remove.  The latter kind can include spontaneous and stimulated emission in trapped ions \cite{wu2018noise,boguslawski2023raman}, and the interaction with the bath (Purcell Effect) \cite{Houck2008} in superconducting circuits---covering both leading quantum technologies.  
%In both cases, it is estimated that the strength of the noise in practical implementations may limit the two-qubit gate fidelity to about $10^{-4}$ \cite{TBA}.  
Thus, error correction becomes a key requirement for building a functioning scalable quantum computer.

% 3rd paragraph: a short history of QEC putting LDPC into perspective
The possibility of quantum fault tolerance was established earlier~\cite{shor1995scheme}.  Encoding a logical qubit redundantly into many physical qubits enables one to diagnose and correct errors by repeatedly measuring syndromes of parity check operators.  However, error correction is only beneficial if the hardware  error rate is below a certain threshold value that depends on a particular error correction protocol.  The first proposals for quantum error correction, such as concatenated codes~\cite{aharonov1997fault,kitaev1997quantum,aliferis2005quantum}, focused on demonstrating the theoretical possibility of error suppression. As understanding of quantum error correction and the capabilities of quantum technologies matured, the focus shifted to finding practical quantum error correction protocols. This resulted in the development of the surface code \cite{2003faulkitaevt,bravyi1998quantum,dennis2002topological,fowler2009high} that offers a high error threshold close to $1\%$, fast decoding algorithms, and compatibility with the existing quantum processors relying on 2-dimensional (2D) square lattice qubit connectivity.  Small examples of the surface code with a single logical qubit have been already demonstrated experimentally by several groups~\cite{takita2016demonstration,marques2022logical,krinner2022realizing,zhao2022realization,google2023suppressing}.  However, scaling up the surface code to a hundred or more logical qubits would be prohibitively expensive due to its poor encoding efficiency.  This spurred interest in more general quantum codes known as Low-Density Parity-Check (LDPC) codes~\cite{gottesman2013fault}.  Recent progress in the study of LDPC codes suggests that they can achieve quantum fault-tolerance with a much higher encoding efficiency~\cite{tremblay2022constant}. Here, we focus on the study of LDPC codes, as our goal is to find quantum error correction codes and protocols that are both efficient and possible to demonstrate in practice, given the limitations of quantum computing technologies. 

% 4th paragraph: discuss LDPC
A quantum error correcting code is of LDPC type if each check operator of the code acts only on a few qubits and each qubit participates only in a few checks.   Multiple variants of the LDPC codes have been proposed recently including hyperbolic surface codes~\cite{breuckmann2016constructions,higgott2021subsystem,higgott2023constructions}, hypergraph product~\cite{tillich2013quantum}, balanced product codes~\cite{breuckmann2021balanced}, two-block codes based on finite groups~\cite{kovalev2013quantum,panteleev2021degenerate,lin2023quantum,wang2023abelian}, and quantum Tanner codes~\cite{panteleev2022asymptotically,leverrier2022quantum}.  The latter were shown~\cite{panteleev2022asymptotically,leverrier2022quantum} to be asymptotically ``good'' in the sense of offering a constant encoding rate and linear distance -- a parameter quantifying the number of correctable errors. In contrast, the surface code has an asymptotically zero encoding rate and only square-root distance. Replacing the surface code with a high-rate, high-distance LDPC code could have major practical implications. First, fault-tolerance overhead (the ratio between the number of physical and logical qubits) could be reduced dramatically. Secondly, high-distance codes exhibit a very sharp decrease in the logical error rate: as the physical error probability crosses the threshold value, the amount of error suppression achieved by the code can increase by orders of magnitude even with a small reduction of the physical error rate. This feature makes high-distance LDPC codes attractive for near-term demonstrations which are likely to operate in the near-threshold regime.  
\edit{However, it was previously believed that outperforming the surface code for realistic noise models 
 including memory, gate, and SPAM errors may require very large LDPC codes with more than 10,000 physical qubits~\cite{higgott2021subsystem}.}

% 6th paragraph: discuss contributions
Here we present several concrete examples of high-rate LDPC codes with a few hundred physical qubits equipped with a low-depth syndrome measurement circuit, an efficient decoding algorithm, and a fault-tolerant protocol for addressing individual logical qubits.  These codes exhibit an error threshold close to 1\%, show excellent performance in the near-threshold regime, and offer \edit{more than 10X reduction} of the encoding overhead compared with the surface code.  Hardware requirements for realizing our error correction protocols are relatively mild, as each physical qubit is coupled by two-qubit gates with only six other qubits. Although the qubit connectivity graph is not locally embeddable into a 2D grid, it can be decomposed into two planar degree-3 subgraphs.  As we argue below, such qubit connectivity is well-suited for architectures based on superconducting qubits.  Before stating our results, let us describe several must-have  features for a quantum error-correcting code to be suitable for near-term experimental demonstrations
and formally pose the problem addressed in this work.

% 5th paragraph: address physical reality of manufacturing suitable hw 
% hw discussion is deferred to the next section and  Conclusions 

\section{Code selection criteria}

In this work, we study the problem of realizing a fault-tolerant quantum memory with a small qubit overhead and a large code distance.  Our goal is to construct a combination of the LDPC code, syndrome measurement circuitry, and the decoding (error correction) algorithms, suitable for a near-term demonstration, but also offering long-term utility, while taking into account the capabilities and limitations of the superconducting circuits quantum hardware.  In other words, we seek to develop a practical error correction protocol.  Our selection criteria reflect this goal.

We focus on encoding $k\gg 1$ logical qubits into  
$n$ data qubits and use $c$ ancillary check qubits to measure the error syndrome.
In total, the code relies on $n\,{+}\,c$ physical qubits. The net encoding rate 
 is therefore
\[
r=\frac{k}{n+c}.
\]
For example, the standard surface code architecture encodes $k\,{=}\,1$ logical qubit into $n\,{=}\,d^2$ data qubits for a distance-$d$ code and uses $c\,{=}\,n{-}1$ check qubits for syndrome measurements.
The net encoding rate is $r\approx 1/(2d^2)$, which quickly becomes impractical as one is forced to choose a large code distance, due to, for instance, the physical errors being close to the threshold value.  In contrast, we seek a high-rate LDPC code with $r\gg 1/d^2$.

%SBB: new stuff
To prevent the accumulation of errors one must be able to  measure the 
error syndrome   frequently enough. This is accomplished by a syndrome measurement (SM) circuit that couples data qubits in the support of each check operator with the respective ancillary qubit by a sequence of $\cnotgate$ gates.
Check qubits are then measured revealing the value of the error syndrome.
The time it takes to implement the SM circuit is proportional to its depth
— the number of gate layers composed of non-overlapping $\cnotgate$s.
Since new errors continue to occur while the SM circuit is executed, its depth
should be minimized.
Thus we seek an LDPC code with a high rate $r$ and low-depth SM circuit.

A noisy version of the SM circuit may include 
several types of faulty operations such as
memory errors
on  data or check qubits, faulty $\cnotgate$ gates, qubit initializations
and measurements. 
We consider the circuit-based noise model~\cite{fowler2009high}
where each operation 
fails  with the probability $p$.
Faults on different operations are independent.
A logical error occurs when the final error-corrected state of $k$ logical qubits differs from the initial encoded state. 
The probability of a logical error $p_L$ depends on the  error rate $p$,
details of the SM circuits, and a decoding algorithm.
A pseudo-threshold $p_0$ of an error correction protocol is defined
as a solution of  the break-even equation $p_L(p)=kp$.
Here $kp$ is an estimate of the probability that at least one of $k$ unencoded qubits suffers from an error.
 To achieve a significant error suppression in the regime $p\,{\sim}\,10^{-3}$, which is relevant for near-term demonstrations, it is desirable to have pseudo-threshold 
close to 1\% or higher.
For example, the surface code architecture achieves  pseudo-threshold $p_0\approx 1\%$ for a large enough code distance~\cite{fowler2009high}.
We seek a high-rate LDPC code with a low-depth %distance-preserving 
SM circuit and a high pseudo-threshold.

A logical error is undetectable if it can be generated without
triggering any syndromes.  Such errors 
span at least $d$ data qubits for a distance-$d$ code.
Let us say that a SM circuit has distance $\dcirc$ if
it takes at least $\dcirc$ faulty operations in the circuit to generate an undetectable logical error.
By definition, $\dcirc\,{\le}\,d$ for any distance-$d$ code
and typically $\dcirc\,{<}\,d$ since a few faulty operations in the SM circuit may create a high-weight error on the data qubits. 
We say that a SM circuit is  distance-preserving
if $\dcirc\,{=}\,d$ meaning the  circuit is designed so as to avoid accumulating high-weight errors, which is the best one can hope for. 
It is preferred (but not required)  that the SM circuit is distance-preserving.

Another criterion is dictated by the limited qubit connectivity of near-term quantum devices.
Each quantum code can be described by a Tanner graph $G$ such that 
each vertex of $G$ represents either a data qubit or a check operator. 
A check vertex  $i$
and a data vertex $j$ are connected by an edge if the $i$-th check operator
acts non-trivially  
on the $j$-th data qubit (by applying Pauli $X$ or $Z$). \fig{2Dlayout}~A) shows the Tanner graph describing a distance-3 surface code. 
To keep the SM circuit depth small, it is desirable that two-qubit gates such as $\cnotgate$ can be applied along every edge of the Tanner graph. By construction, the Tanner graph of any LDPC code
has a small degree. 
One drawback of high-rate LDPC codes is that their Tanner graphs may not be locally embeddable 
into the 2D grid~\cite{baspin2022quantifying,bravyi2010tradeoffs}. This poses a challenge
for hardware implementation with superconducting qubits
coupled by   microwave resonators.
A useful VLSI design concept is graph {\em thickness},  see~\cite{mutzel1998thickness,tremblay2022constant} for details.
A graph $G\,{=}\,(V,E)$ is said to have thickness $\theta$ if one can partition its set of edges $E$ into disjoint union of $\theta$ sets 
$E_1\sqcup E_2\sqcup\ldots \sqcup E_\theta\,{=}\,E$ such that each subgraph $(V,E_i)$ is planar.
Informally, a graph with thickness $\theta$ can be viewed as a vertical stack of $\theta$ planar graphs.
Qubit connectivity described by a planar graph (thickness $\theta\,{=}\,1$) is the simplest one from hardware
perspective since the couplers do not cross. Graphs with thickness $\theta\,{=}\,2$ might still be implementable since two planar layers of couplers and their control lines can be attached to the top and the bottom side of the chip hosting qubits, and the two sides mated (see \sec{conclusions} for a detailed discussion).
% A reference to the possibility of doing this is required
Graphs with thickness $\theta\,{\ge}\, 3$ are much harder to implement. Thus we seek a high-rate LDPC code with a low-depth %distance-preserving 
SM circuit, high pseudo-threshold, and a low-degree Tanner graph with thickness $\theta\le 2$.

Finally, the code must perform a useful function within a larger architecture for quantum computation, the simplest of which is a quantum memory. In a quantum memory it must be possible to measure every logical qubit in at least one Pauli basis, permitting initialization and readout of individual qubits. Furthermore it should be possible to connect the code to another error correction code and facilitate Pauli product measurements between their logical qubits. This enables load-store operations that transfer quantum data out of and into the code via quantum teleportation. For the purpose of the shorter-term goal of demonstrating the code in practice, the code should also feature enough logical operations to facilitate experiments to verify correct operation.

Our code selection criteria are summarized below.
\begin{enumerate}
\item We desire a code with a large distance $d$ and
a high encoding rate $r\,{\gg}\, 1/d^2$, 
\item that is complemented by a short-depth syndrome measurement circuit, 
%that is both
%\begin{itemize}
%    \item short-depth, and
%     \item %\end{itemize}
\item offers a pseudo-threshold close to 1\% (or higher) for the circuit-based noise model,
\item is constructed over thickness-2 or less Tanner graph,
\item and possesses fault-tolerant load-store operations as well as readout and initialization of individual qubits. 
\end{enumerate}

\section{Main results}
\label{sec:results}

Here we give concrete examples of LDPC codes equipped with syndrome
measurement circuits and efficient decoding algorithms that 
meet all above conditions.
Our examples fall into the family of  tensor product generalized bicycle codes
proposed by Kovalev and  Pryadko~\cite{kovalev2013quantum}.
\edit{We named our codes Bivariate Bicycle (BB) since they are based on bivariate polynomials, as detailed below.}
These are stabilizer codes of CSS-type~\cite{steane1996multiple,calderbank1996good} 
that can be described by a collection of few-qubit check (stabilizer) operators
composed of Pauli $X$ and $Z$.
At a high level, a BB code is similar to the two-dimensional toric code~\cite{2003faulkitaevt}.
In particular, physical qubits of a BB code can be laid out on a two-dimensional  grid
with periodic boundary conditions such that all check operators are obtained from a single pair of $X$- and $Z$-checks  by applying horizontal and vertical shifts of the grid. However, in contrast to the plaquette and vertex stabilizers
describing the toric code, check operators of a BB code are not geometrically local.
Furthermore, each check acts on six qubits rather than four qubits.  See \fig{2Dlayout}~B) an example Tanner graph of a BB code.
We give a formal definition of BB codes in \sec{codedefinition}.
The Tanner graph of any BB code has vertex degree six. 
Although this graph  may not be locally embeddable into a 2D grid, we show that it has thickness $\theta\,{=}\,2$, as desired. 
This result may be surprising since it is known that 
a general degree-$6$ graph can have thickness $\theta \,{=}\, 3$, see~\cite{mutzel1998thickness}.

\begin{figure}[ht]
  \centering
    \includegraphics[height=9cm]{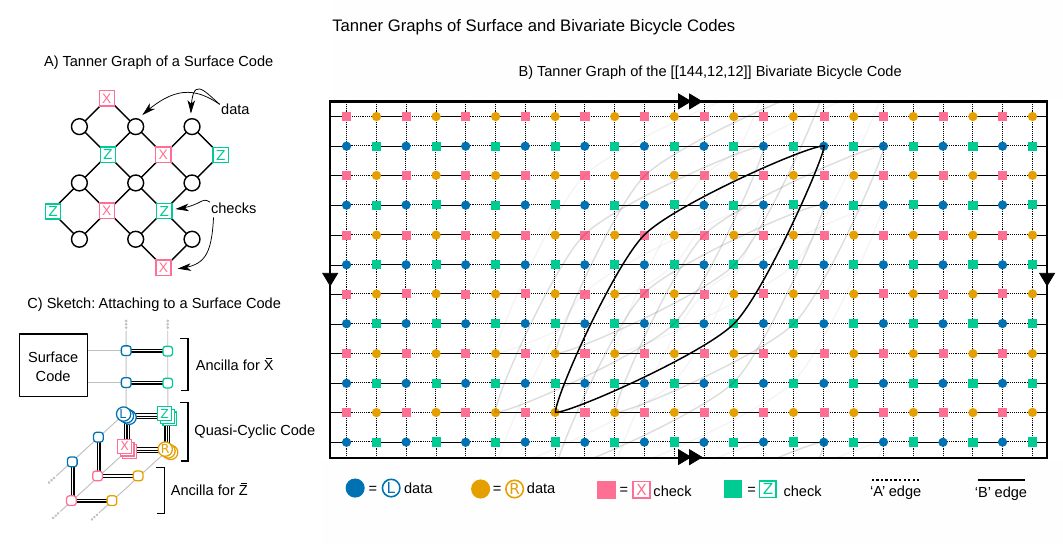}
         \caption{A) Tanner graph of a surface code, for comparison. 
         C) Tanner graph of a Bivariate Bicycle code with parameters $[[144,12,12]]$  embedded into a torus.
     Any edge of the Tanner graph connects a data and a check vertex.
      Data qubits associated with the registers $q(L)$ and $q(R)$ are shown by bLue and oRange circles. 
     Each vertex has six incident edges including four short-range edges (pointing north, south, east, and west) and two long-range edges. There are also several long-range edges, of which we only show a few to to avoid clutter.  Dashed and solid edges indicate two planar subgraphs spanning the Tanner graph, see \sec{codedefinition}.     
      B) Sketch of a Tanner graph extension for measuring $\bar Z$ and $\bar X$ following \cite{cohen2022lowoverhead}.
         The ancilla corresponding to the $\bar X$ measurement can be connected to a surface code, enabling load-store operations for all logical qubits via quantum teleportation and some logical unitaries. This extended Tanner graph has a thickness-2 implementation, see \sec{logicals}.
         }
  \label{fig:2Dlayout}
\end{figure}

Below we use the standard notation $[[n,k,d]]$ for code parameters.
Here $n$ is the code length (the number of data qubits), $k$ is the number of logical  qubits,
and $d$ is the code distance. \tab{codes_intro} shows small examples
of BB codes along with several metrics of the error suppression 
achieved by each codes. 
 The distance-12 code $[[144,12,12]]$ may be the most promising for near-term demonstrations, as it combines large distance and high net encoding rate $r\,{=}\,1/24$.
For comparison, the distance-13 surface code has net encoding rate $r\,{=}\,1/338$.
Below we show that the distance-12
BB code  outperforms the distance-13 surface code for the experimentally relevant range of error rates, see \fig{numerics} B).
To the best of our knowledge, all codes shown in \tab{codes_intro} are new.

\begin{table}[t]
\begin{center}
\begin{tabular}{c|c|c|c|c|c}
\hline
& & & &    \\
$[[n,k,d]]$ & \parbox{3cm}{\centering Net Encoding \\ Rate $r$}  & 
\parbox{3cm}{\centering Circuit-level\\ distance $\dcirc$} &
Pseudo-threshold $p_{0}$ & 
$p_L(0.001)$ & $p_L(0.0001)$ \\
& & & & \\
\hline
\hline
$[[72,12,6]]$ &  1/12 & $\le 6$ &  $0.0048$ & $7\times 10^{-5}$ & $7 \times 10^{-8}$    \\
\hline 
$[[90,8,10]]$ & $1/23$ & $\le 8$ &   $0.0053$  & $5\times 10^{-6}$ & $4 \times 10^{-10}$ \\
\hline
$[[108,8,10]]$ &  1/27 & $\le 8$ & $0.0058$  & 
$3\times 10^{-6}$ & $1\times 10^{-10}$ \\
\hline
$[[144,12,12]]$ & $1/24$ & $\le 10$ & $0.0065$  & $2\times 10^{-7}$ & $8\times 10^{-13}$ \\
\hline
$[[288,12,18]]$ & $1/48$ & $\le 18$ &  $0.0069$  & $2\times 10^{-12}$ & $1\times 10^{-22}$ \\
\end{tabular}
\end{center}
\caption{Small examples of Bivariate Bicycle LDPC codes and their 
performance for the circuit-based noise model.
All codes have weight-6 checks, thickness-2 Tanner graph, and a  depth-$7$ syndrome measurement circuit. A code with parameters $[[n,k,d]]$
requires $2n$ physical qubits in total
and achieves the net encoding rate $r=k/2n$
(we round $r$ down to the nearest inverse integer).
Circuit-level distance $\dcirc$
is the minimum number of faulty operations in the syndrome measurement circuit required to generate an undetectable logical error.
The pseudo-threshold $p_{0}$ is
a solution of the  break-even equation $p_L(p)=kp$,
where $p$ and $p_L$ are the physical and logical error rates
respectively. The logical error rate $p_L$ was computed numerically for $p\ge 10^{-3}$
and extrapolated to lower error rates.
}
\label{table:codes_intro}
\end{table}

To quantify the level of error suppression achieved by a code
we introduce 
%error correction 
SM circuits that repeatedly measure
the syndrome of each check operator. The full cycle of syndrome measurement for
a length-$n$ BB code requires $n$ ancillary check qubits
to store the measured syndromes.
According, the net encoding rate is $r=k/(2n)$.
Check qubits are coupled with the data qubits
by applying a sequence of $\cnotgate$ gates. The full cycle of syndrome measurement
requires only $7$ layers of $\cnotgate$s regardless of the code length.
The check qubits are initialized and measured at the beginning and at the end of the syndrome cycle respectively, see \sec{syndrome_circuit} for details.
We emphasize that our SM circuit applies to any BB code beyond those listed in \tab{codes_intro}.
The circuit respects the cyclic shift symmetry of the underlying code.
Assuming that the physical qubits (data or check) are located at vertices
of the Tanner graph, all $\cnotgate$ gates in the SM circuit act on nearest-neighbor
qubits. Thus the required qubit connectivity is described by a degree-6 thickness-2
graph, as desired. 
We conjecture, based on the numerical simulations, that our SM circuit is distance-preserving for the code $[[72,12,6]]$, see \tab{codes_intro} for the upper bounds on $\dcirc$ (the upper bound $\dcirc\le 18$ for the 288-qubit code is unlikely to be tight \edit{and this affects the fit and extrapolations}).

The full error correction protocol performs $N_c\gg 1$ syndrome measurement
cycles  and calls a decoder — a classical algorithm
that takes as input the measured syndromes and outputs a guess of the final
error on the data qubits. 
Error correction succeeds if the guessed and the actual error coincide modulo a product of check operators. In this case the two errors have the same action on any encoded (logical) state.
Thus applying the inverse of the guessed error would return data qubits to the initial logical sate.
Otherwise, if the guessed and the actual error differ by a non-trivial logical operator, error correction fails resulting in a logical error.
Our numerical experiments are based on the Belief Propagation with an Ordered Statistics Decoder (BP-OSD)
proposed by Panteleev and Kalachev~\cite{panteleev2021degenerate}.
The original work~\cite{panteleev2021degenerate} described BP-OSD in the context of a toy noise model with memory errors only.
Here we show how to extend BP-OSD to the circuit-based noise model.
\edit{Our approach closely follows Refs.~\cite{ delfosse2023spacetime,mcewen2023relaxing,higgott2023improved,geher2023tangling}.}
We also show that BP-OSD can be applied to other problems in quantum fault-tolerance such as estimating the distance of a quantum LDPC code, see \sec{decoder} for details. 
These tasks can be accomplished with a relatively minor extension of the publicly available BP-OSD software developed by Roffe et al.~\cite{roffe2020decoding}

Let $P_L(N_c)$ be the logical error probability after performing
$N_c$ syndrome cycles.
Define the logical error rate as $p_L=1 - (1-P_L(N_c))^{1/N_c}\approx P_L(N_c)/N_c$.
Informally, $p_L$ can be viewed
as the logical error probability per syndrome cycle. 
Following common practice, we choose $N_c=d$ for a distance-$d$ code.
\fig{numerics} A) shows the logical error rate  achieved by codes from \tab{codes_intro}.
The logical error rate was computed numerically for $p\ge 10^{-3}$ and extrapolated to lower error rates
using  a fitting formula $p_L=p^{\dcirc'/2} e^{c_0 + c_1 p + c_2p^2}$,
where $c_0,c_1,c_2$ are fitting parameters and $\dcirc'$ is an upper bound on $\dcirc$ from \tab{codes_intro}. The observed pseudo-threshold  \edit{for the $144$-qubit and $288$-qubit codes is close to $0.007$}, which is nearly the same as the error threshold of the surface code~\cite{groszkowski2009high}.
To the best of our knowledge, this provides the first example of high-rate, large-distance LDPC codes achieving the pseudo-threshold close to 1$\%$ under the circuit-based noise model.

\begin{figure}[h]
%\centering
%\includegraphics[width=\textwidth]{figures_colorblind/noise_performance.pdf}\begin{subfigure}{0.5\textwidth}
\begin{subfigure}{.5\textwidth}
  \centering
  \includegraphics[width=7cm]{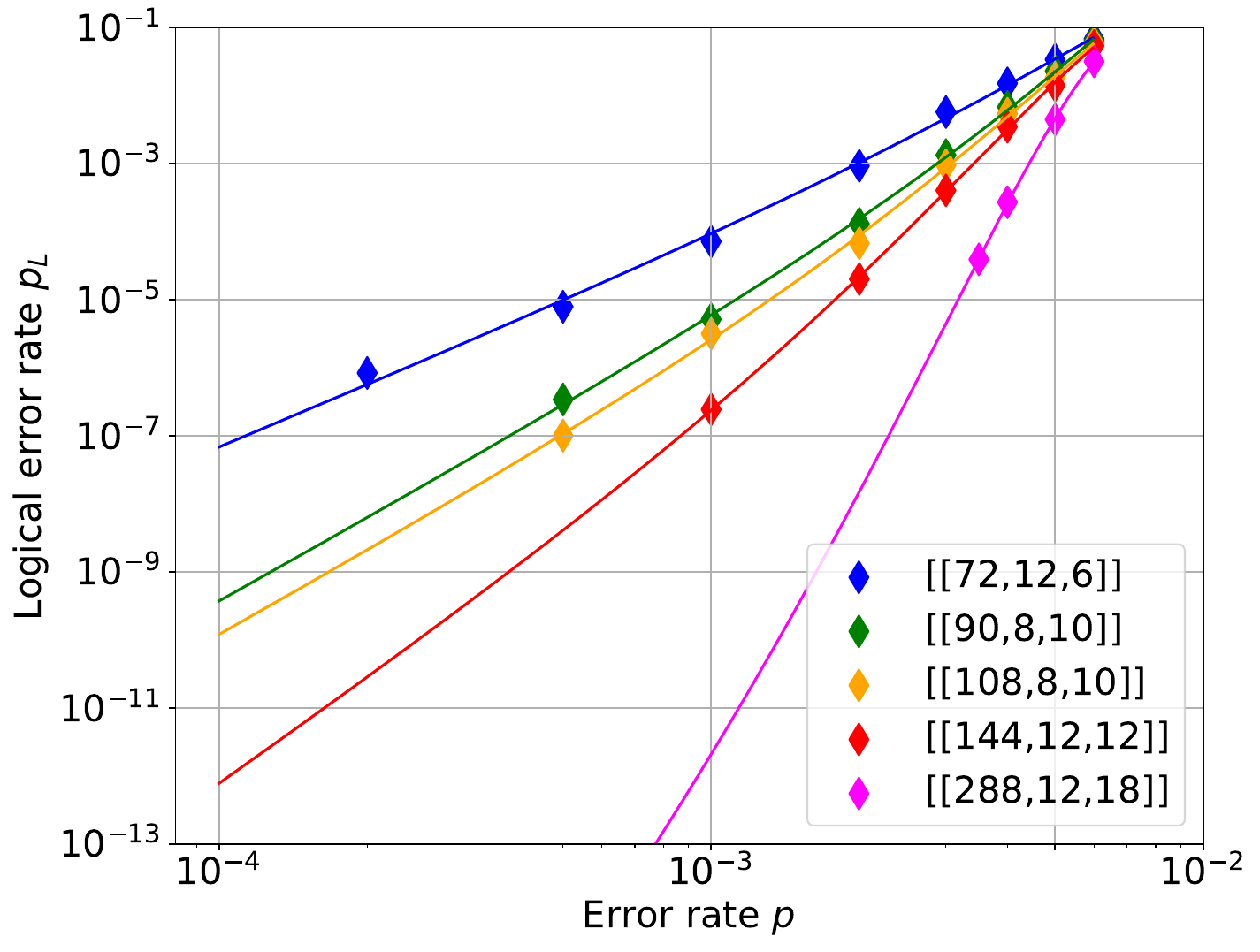}
 \caption{A subfigure}
\label{fig:sub1}
\end{subfigure}%
\begin{subfigure}{.5\textwidth}
  \centering
  \includegraphics[width=7cm]{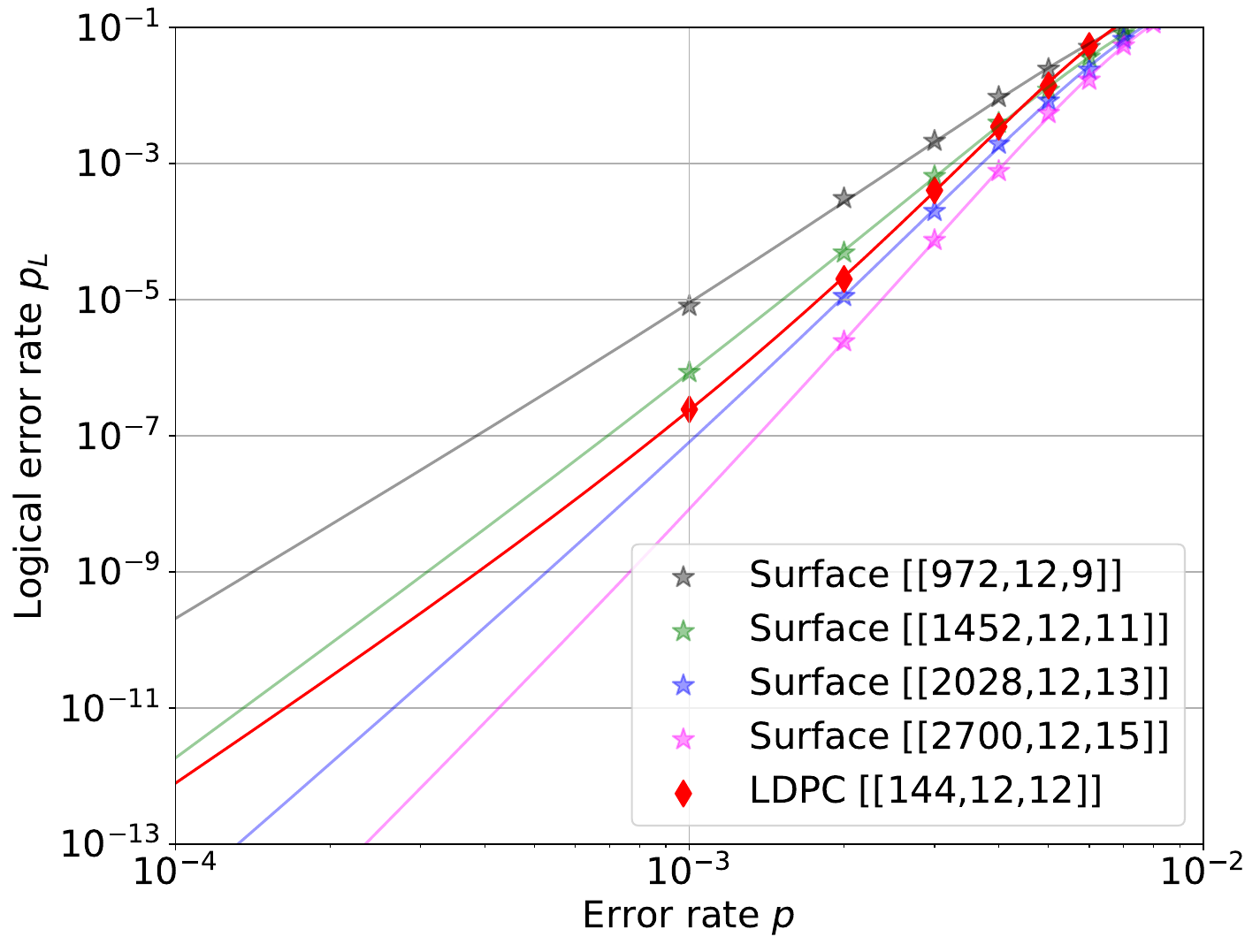}
\caption{A subfigure}
\label{fig:sub2}
\end{subfigure}
     \caption{A) Logical vs physical error rate for small examples
     of Bivariate Bicycle  LDPC codes. A numerical estimate of $p_L$ (diamonds) was
     obtained by simulating $d$ syndrome cycles for a distance-$d$ code.
     Most of the data points have error bars $\approx p_L/10$ due to sampling errors.
     B) Comparison between the Bivariate Bicycle LDPC code  $[[144,12,12]]$
     and surface codes with $12$ logical qubits and distance \edit{$d\in \{9,11,13,15\}$}.
     The distance-$d$ surface code with $12$ logical qubits has length $n=12d^2$
     since each logical qubit is encoded into a separate $d{\times}d$ patch of the surface code
     lattice.
     }
  \label{fig:numerics}
\end{figure}

For example, suppose that the physical error rate is $p=10^{-3}$, which is a realistic goal for near-term demonstrations. 
Encoding $12$ logical qubits using the distance-$12$ code from \tab{codes_intro} would offer the logical error rate \edit{$2\times 10^{-7}$} which is enough to preserve $12$ logical qubits for
\edit{nearly one million syndrome cycles}.
The total number of physical qubits required
for this encoding is  $288$.
The distance-$18$ code from \tab{codes_intro} would require $576$ physical qubits while suppressing the error rate from $10^{-3}$ to \edit{$2\times 10^{-12}$}
enabling roughly hundred billion syndrome cycles.
For comparison, encoding  $12$ logical qubits into separate patches of the surface code
would require \edit{nearly $3000$} physical qubits to suppress the error rate from $10^{-3}$ to $10^{-6}$,
see \fig{numerics}~B).
In this example the distance-$12$ BB  code offers \edit{more than $10$X} saving in the number of  physical qubits compared with the surface code.

We also find that BB LDPC codes admit extensions that allow them to function as a logical memory with load-store operations. In \sec{logicals} we show how to use methods from \cite{cohen2022lowoverhead} to attach two ancilla systems to the code that permit logical measurement of all logical qubits in the $X$ and $Z$ bases. Which logical qubit is being measured can be controlled via a set of fault tolerant unitary operations. The extended Tanner graph is not only thickness-2, but the extension from the $X$ ancilla system is ``effectively planar'' (in a sense we define later) facilitating interconnection with other codes on the same chip.

Our findings bring experimental demonstration of high-rate LDPC codes within the reach of near-term quantum processors which are expected to offer a few hundred physical qubits, gate error rates close to $10^{-3}$, and long range qubit connectivity~\cite{bravyi2022future}. 

The rest of this paper is organized as follows. \sec{codedefinition} formally defines BB LDPC codes and proves their basic properties.  The construction of the syndrome measurement circuit is detailed in \sec{syndrome_circuit}.  The circuit-based noise model and BP-OSD decoder for this noise model are discussed in \sec{decoder} with some implementation details deferred to \sec{numerics}. We describe fault tolerant memory capabilities in \sec{logicals}.  A summary of our findings and some open questions can be found in \sec{conclusions}.

\section{Bivariate Bicycle quantum LDPC codes}\label{sec:codedefinition}

Let $I_\ell$ and $S_\ell$ be the identity matrix and the cyclic shift matrix of size $\ell \times \ell$
respectively.
The $i$-th row of $S_\ell$ has a single nonzero entry equal to one at the column $i\,{+}\,1{\pmod \ell}$.
For example,
\[
S_2=\left[\ba{cc} 0 & 1 \\ 1 & 0 \\ \ea\right] \quad \mbox{and} \quad
S_3= \left[\ba{ccc} 0 & 1 & 0 \\ 0 & 0 & 1 \\ 1 & 0 & 0 \\ \ea \right].
\]
Consider matrices
\[
x=S_\ell \otimes I_m \quad \mbox{and} \quad y = I_\ell \otimes S_m.
\]
Note that $xy=yx$ and $x^\ell = y^m =I_{\ell m}$.
A BB code is defined by a pair of matrices
\be
\label{AB}
A = A_1 + A_2 + A_3  \quad \mbox{and} \quad B = B_1+B_2+B_3
\ee
where each matrix $A_i$ and $B_j$ is a power of $x$ or $y$. Here and below the addition and multiplication of binary matrices is performed modulo two, unless stated otherwise. Thus, we also assume the $A_i$ are distinct and the $B_j$ are distinct to avoid cancellation of terms. For example,
one could choose $A=x^3+y+y^2$ and $B=y^3+x+x^2$.
Note that $A$ and $B$ have exactly three non-zero entries in each row and each column. Furthermore, $AB\,{=}\,BA$ since $xy\,{=}\,yx$.
The above data defines a
BB LDPC code denoted $\qc$ with 
length $n\,{=}\,2\ell m$  and 
 check matrices
\be
\label{HXHZ}
H^X = \left[A | B\right] \quad \mbox{and} \quad  H^Z=\left[B^T | A^T\right].
\ee
Here the vertical bar indicates stacking matrices
horizontally and $T$ stands for the matrix transposition. 
Both matrices $H^X$ and $H^Z$ have size $(n/2){\times} n$.
Each row $v\,{\in}\, \FF_2^n$ of $H^X$ defines an $X$-type check operator $X(v)=\prod_{j=1}^n X_j^{v_j}$.
Each row $v\,{\in}\, \FF_2^n$ of $H^Z$ defines a $Z$-type check operator $Z(v)=\prod_{j=1}^n Z_j^{v_j}$.
Any X-check and Z-check commute since they overlap on even number of qubits (note that $H^X (H^Z)^T = AB+BA=0 {\pmod 2}$).
To describe the code parameters we use certain linear subspaces associated with the check matrices, see Table~1 for our notations.
Then the code $\qc$ has parameters $[[n,k,d]]$ with 
\be
n=2\ell m,\quad 
k = 2\cdot\mathrm{dim}\left(\ker{A} \cap \ker{B}\right)
\quad \mbox{and} \quad
d= \min\bigl\{ |v|{:} \,\, v\in \ker{H^X}{\setminus} \rs{H^Z} \bigr\},
\ee
see \lem{kd}.
Here $|v|=\sum_{i=1}^n v_i$ is the Hamming weight of a vector $v\in \FF_2^n$.
We note that the code $\qc$ can be viewed as a special case of the Lifted Product construction~\cite{panteleev2021quantum}
based on the abelian group $\ZZ_\ell \times \ZZ_m$.
Here $\ZZ_j$ denotes the cyclic group of order $j$.

\begin{table}[t]
\begin{center}
\begin{tabular}{c|c|c}
\hline
Notation & Name & Definition \\
\hline
\hline
$\rs{H}$ & row space & Linear span of rows of $H$ \\
\hline
$\cs{H}$ & column space & Linear span of columns of $H$\\
\hline
$\ker{H}$ & nullspace & Vectors orthogonal to each row of $H$\\
\hline
$\rk{H}$ & rank &  $\rk{H}=\mathrm{dim}(\rs{H})=\mathrm{dim}(\cs{H})$ \\
\hline
\end{tabular}
\caption{Notations for  linear spaces associated with a binary matrix $H$.
Here the linear span, orthogonality, and dimension are computed over the binary field
$\FF_2=\{0,1\}$.
If $H$ has size $s\,{\times}\, n$ then $\rs{H}\subseteq \FF_2^n$,
$\cs{H}\subseteq \FF_2^s$, and $\ker{H}\subseteq \FF_2^n$.
}
\end{center}
\end{table}

\begin{table}[t]
\begin{center}
\begin{tabular}{c|c|c|c|c|c}
\hline
& & & & &   \\
$[[n,k,d]]$ & \parbox{3cm}{\centering Net Encoding \\ Rate $r$}   & 
$\ell,m$ & $A$ & $B$ \\
& & & & &   \\
\hline
\hline
$[[72,12,6]]$ &  1/12 &   $6,6$ & $x^3+y+y^2$  & $y^3+x+x^2$ \\
%\hline 
%$[[98,6,12]]$ &  1/33 & $7\times 10^{-7}$&    $7,7$ & $x^3+y^2+y^5$ & $y^3+x^2+x^5$ \\
\hline 
$[[90,8,10]]$ & $1/23$  & $15,3$ & $x^9 + y + y^2$ & $1+ x^2 +x^7$ \\ 
\hline
$[[108,8,10]]$ &  1/27 &    $9,6$ &  $x^3+y+y^2$ & $y^3+x+x^2$  \\
\hline 
$[[144,12,12]]$ & $1/24$ &   $12,6$ &  $x^3+y+y^2$ & $y^3+x+x^2$ \\
%\hline
%$[[270,24,10]]$  & $1/23$ & &  $15,9$ &  $x^9 + y^3 + y^6$ & $1+x^2 +x^7$  \\
\hline
$[[288,12,18]]$ & $1/48$ &   $12,12$ & $x^3 + y^2 + y^7$  & $y^3 + x+x^2$\\
%\hline
%$[[378,10,\le 26]]$ & $1/76$ &  $(27,7)$ & $( 14 , 10 , 18 , 3 , 23 , 13 )$ & $( 0 , 1 , 3 , 3 , 2 , 5 )$ \\
%\hline
%$[[360,32,10]]$ & $1/23$ & &  
%$15,12$ & $x^3+y^4+y^8$ & 
%$1+x^{10}+x^{11}$\\
\hline
$[[360,12,\le 24]]$ & $1/60$ &  $30,6$ &  $x^9 + y + y^2$  &  $y^3 + x^{25}+x^{26}$\\
%\hline
%$[[450,40,10]]$ & $1/23$ & & $15,15$ &  $x^3+y+y^{11}$ & $y^3+x^4+x^{14}$  \\
\hline
$[[756,16,\le 34]]$ & $1/95$  & $21,18$ &
$x^{3}+y^{10} + y^{17}$ & $y^5+x^3+x^{19}$\\
% Obsolete code examples
%\hline
%$[[432,12,\le 24]]$ & %$1/72$ &  $18,12$ &  $x^3 %+ y^2 + y^7$  &  $y^3 + %x+x^2$\\
%\hline
%$[[450,40,10]]$ & $1/23$ & %& $15,15$ &  %$x^3+y+y^{11}$ & %$y^3+x^4+x^{14}$  \\
%\hline
%$[[784,24,\le 24]]$ & %$1/66$  & $28,14$ &
%$x^{26}+y^6 + y^8$ & %$y^7+x^9+x^{20}$\\
%%$[[450,8,d\le 28]]$ & %$1/113$ &   & $(15,15)$ & %$(3,1,2)$ & $(3,1,8)$ \\
%\hline
\end{tabular}
\end{center}
\caption{Small examples of Bivariate Bicycle LDPC codes and their parameters.  All codes have weight-6 checks, thickness-2 Tanner graph, and a depth-7 syndrome measurement circuit. 
Code distance was computed by the mixed integer programming approach of
Ref.~\cite{landahl2011fault}.
 Notation $\le d$ indicates that only an upper bound on the code distance is known at the time of this writing.  We round $r$ down to the nearest inverse integer. 
The codes have check matrices  $H^X=[A|B]$
and $H^Z=[B^T|A^T]$ with  $A$ and $B$ defined in the last two columns.
The matrices $x$, $y$ obey
$x^\ell=y^m=1$ and $xy=yx$.}
\label{table:codes}
\end{table}

\tab{codes} describes the polynomials $A$ and $B$ that give rise to examples of high-rate, high-distance  BB codes found by a numerical search. 
This includes all codes from \tab{codes_intro} and two examples of higher distance codes. 
To the best of our knowledge,
all these examples are new. 
The code $[[360,12,\le 24]]$  improves upon a code  $[[882,24,\le 24]]$ with weight-6 checks found by Panteleev and Kalachev in~\cite{panteleev2021degenerate} (assuming that our distance upper bound is  tight). Indeed, taking two independent copies of the 360-qubit code gives parameters $[[720,24,\le 24]]$.

By construction, the  code $\qc$ has 
weight-6 check operators and each qubit participates in six checks
(three $X$-type plus three $Z$-type checks). 
%Here we ignore possible cancellations in the sum Eq.~(\ref{AB}) that can reduce the weight below six.
Accordingly, the  code $\qc$ has a degree-$6$ Tanner graph.
Below we show that the Tanner graph  has thickness $\theta\le 2$, as desired, see \lem{thickness}. 

We note that the recent work by Wang, Lin, and Pryadko~\cite{wang2023abelian,lin2023quantum}
described examples of group-based codes closely related to the codes considered here. Some of the group-based codes with weight-8 checks found in~\cite{lin2023quantum} outperform our BB codes with weight-6 checks in terms of the parameters $n,k,d$.
It remains to be seen whether group-based codes can achieve a similar or better level of error suppression for the circuit-based noise model.  

In the rest of this section we establish some
properties of BB LDPC codes.
\begin{lemma}
\label{lem:kd}
The code $\qc$ has parameters $[[n,k,d]]$, where
\[
n = 2\ell m,\quad 
k = 2\cdot\mathrm{dim}\left(\ker{A} \cap \ker{B}\right),
\quad \mbox{and} \quad
d= \min\bigl\{ |v|{:} \,\, v\in \ker{H^X}{\setminus} \rs{H^Z} \bigr\}.
\]
The code offers equal distance for $X$-type and $Z$-type errors.
\end{lemma}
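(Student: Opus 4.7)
The plan is to exploit the commutative group algebra structure: $A$ and $B$ live in $R = \FF_2[\ZZ_\ell \times \ZZ_m]$ (with basis $\{x^i y^j\}$), acting on $R \cong \FF_2^{\ell m}$ by multiplication. The key observation is that matrix transposition in this representation corresponds to the algebra involution $\alpha: a(x,y) \mapsto a(x^{-1}, y^{-1})$; explicitly, since $(x^i y^j)^{-1} = x^{\ell - i} y^{m-j}$, the adjoint of multiplication by $a \in R$ with respect to the standard bilinear form on $R$ is multiplication by $\alpha(a)$, so $A^T = \alpha(A)$ and $B^T = \alpha(B)$. Note also that $\alpha$ itself acts on $\FF_2^{\ell m}$ as a coordinate permutation, namely the one induced by $g \mapsto g^{-1}$.

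The length $n = 2\ell m$ is immediate. For the dimension, I would start from the standard CSS formula $k = n - \rk{H^X} - \rk{H^Z}$, valid because $H^X (H^Z)^T = AB + BA = 0 \pmod{2}$. Since $\rk{H^X} = \dim(\cs{A} + \cs{B})$ and $\rk{H^Z} = \dim(\cs{A^T} + \cs{B^T}) = \dim \alpha(\cs{A} + \cs{B}) = \dim(\cs{A} + \cs{B})$, the two ranks coincide. Taking orthogonal complements in $\FF_2^{\ell m}$ gives $\ell m - \rk{H^X} = \dim(\ker{A^T} \cap \ker{B^T})$, and one more application of $\alpha$ identifies this with $\dim(\ker{A} \cap \ker{B})$, yielding $k = 2\dim(\ker{A} \cap \ker{B})$.

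For the distance, I would introduce the involution $\tau: \FF_2^{2\ell m} \to \FF_2^{2\ell m}$ defined by $\tau(u,v) = (\alpha(v), \alpha(u))$, which is a coordinate permutation and hence weight-preserving. A direct calculation shows that if $(u,v) \in \ker{H^X}$, i.e., $Au + Bv = 0$, then applying $\alpha$ gives $A^T \alpha(u) + B^T \alpha(v) = 0$, so $\tau(u,v) \in \ker{H^Z}$. Analogously, writing a typical element of $\rs{H^Z}$ as $(Bc, Ac)$ for $c \in \FF_2^{\ell m}$ and applying $\tau$ produces $(A^T \alpha(c), B^T \alpha(c)) \in \rs{H^X}$, so $\tau(\rs{H^Z}) = \rs{H^X}$. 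Therefore $\tau$ restricts to a weight-preserving bijection $\ker{H^X} \setminus \rs{H^Z} \to \ker{H^Z} \setminus \rs{H^X}$, proving that the $X$- and $Z$-distances coincide and both equal the displayed minimum. The main subtlety I anticipate is setting up the identification ``matrix transpose equals algebra involution'' cleanly; once that is in place, the rest is bookkeeping with principal ideals and orthogonal complements.
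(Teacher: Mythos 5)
Your proposal is correct and takes essentially the same route as the paper: your involution $g \mapsto g^{-1}$ is precisely the paper's permutation matrix $C = C_\ell \otimes C_m$ with $A^T = CAC$ and $B^T = CBC$, your rank/kernel bookkeeping for $k$ matches the paper's, and your map $\tau$ (swap the two halves of a vector and apply the inversion permutation to each half) is exactly the paper's construction of the $Z$-type logical $h$ from a minimum-weight $X$-type logical $f$. The only difference is minor: you certify nontriviality of the image by checking directly that $\tau$ exchanges $\ker{H^X}$ with $\ker{H^Z}$ and $\rs{H^Z}$ with $\rs{H^X}$, whereas the paper instead exhibits an anticommuting logical partner; both arguments are valid.
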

\begin{proof}
It is known~\cite{steane1996multiple,calderbank1996good} that 
\[
k = n - \rk{H^X}-\rk{H^Z}.
\]
We claim that $\rk{H^X}\,{=}\,\rk{H^Z}$.
Indeed, define a self-inverse permutation matrix $C_\ell$ of size $\ell \,{\times}\, \ell$ such that the $i$-th column of $C_\ell$ has a single nonzero entry equal to one at the row $j=-i{\pmod \ell}$. Define $C_m$ similarly
and let $C=C_\ell \otimes C_m$. Since $C_\ell S_\ell C_\ell = S_\ell^T$
and $C_m S_m C_m = S_m^T$, one gets
\be
\label{ABC2}
A^T = C AC \quad \mbox{and} \quad B^T = CBC.
\ee
Therefore one can write
\[
H^Z= [B^T|A^T] = [CBC |CAC] = C[A|B]\left[ \ba{cc} 0 & C  \\ C & 0\\ \ea \right]
=CH^X\left[ \ba{cc} 0 & C  \\ C & 0\\ \ea \right].
\]
Thus $H^Z$ is obtained from $H^X$ by multiplying on the left and on the right by invertible matrices.
This implies $\rk{H^X}=\rk{H^Z}$. Therefore
\begin{align*}
k& = n-2{\cdot}\rk{H^Z}=n-2{\left(\frac{n}2-\dim{(\ker{(H^Z)^T)})}\right)} = n - 2{\left(\frac{n}2-\dim{(\ker{A}\cap \ker{B})}\right)}\\
&= 2\cdot\mathrm{dim}\left(\ker{A} \cap \ker{B}\right).
\end{align*}
Here we noted that $H^Z$ has size $(n/2)\,{\times}\, n$ and $\ker{(H^Z)^T)}=\ker{A}\cap \ker{B}$ since $H^Z=[B^T|A^T]$.

It is known~\cite{steane1996multiple,calderbank1996good}  that 
a CSS code with check matrices $H^X$ and $H^Z$ has distance
$d\,{=}\,\min{(d^X,d^Z)}$, where 
$d^X$ and $d^Z$ are the code distances for $X$-type and $Z$-type errors
defined as
\[
d^X =  \min\bigl\{ |v|{:} \,\, v\in \ker{H^Z}{\setminus} \rs{H^X} \bigr\}
\quad \mbox{and} \quad
d^Z =  \min\bigl\{ |v|{:} \,\, v\in \ker{H^X}{\setminus} \rs{H^Z} \bigr\}.
\]
We claim that $d^Z\,{\le}\, d^X$.
Indeed, let $X(f)\,{=}\,\prod_{j=1}^n X_j^{f_j}$ be a minimum weight logical $X$-type Pauli operator
such that $|f|\,{=}\,d^X$. 
Then $H^Z f\,{=}\,0$ and $f\,{\notin}\, \rs{H^X}$.
Thus there exists a logical $Z$-type operator $Z(g)\,{=}\,\prod_{j=1}^n Z_j^{g_j}$ anti-commuting with $X(f)$. In other words, $H^X g\,{=}\,0$ and $f^T g \,{=}\, 1$. Here, $f$ and $g$ are length-$n$ binary vectors.
Write $f\,{=}\,(\alpha,\beta)$ and $g\,{=}\,(\gamma, \delta)$, where $\alpha,\beta,\gamma,\delta$ are length-$(n/2)$ vectors.
Conditions $H^Z f =0$ and $H^X g=0$ are equivalent to 
\be
\label{alpha_beta_gamma_delta}
B^T \alpha = A^T \beta \quad \mbox{and} \quad A \gamma = B \delta.
\ee
Here and below all arithmetics is modulo two.
Define length-$n$ vectors 
\be
e = (C\delta, C\gamma) \quad \mbox{and} \quad h=(C \beta, C\alpha).
\ee
From Eqs.~(\ref{ABC2},\ref{alpha_beta_gamma_delta}) one gets
\[
H^X h = [A|B]\left[\ba{c} C \beta \\ C \alpha\\ \ea\right]
=AC\beta+BC\alpha = C(A^T \beta + B^T \alpha) = 0.
\]
Likewise, 
\[
H^Z e = [B^T|A^T]\left[\ba{c} C \delta \\ C \gamma\\ \ea\right]
=B^T C\delta + A^T C\gamma = C(B \delta + A \gamma) = 0.
\]
Furthermore, 
\[
h^T e = \beta^T C C \delta + \alpha^T C C\gamma = \beta^T \delta + \alpha^T \gamma =f^T g=1.
\]
Thus $X(e)$ and $Z(h)$ are non-identity logical operators. 
It follows that $d^Z\le |h|$. We get
\[
d^Z\le |h|=|C\beta|+|C\alpha|=|\beta|+|\alpha|=|f|=d^X.
\]
Thus $d^Z\le d^X$. Similar argument shows that $d^X\le d^Z$, that is, $d^X=d^Z$.
\end{proof}
\noindent
We note that
the equality $d^X=d^Z$ can also be established using the machinery of
Ref.~\cite{panteleev2021quantum} 
by viewing $\qc$ as a Lifted Product code.

In the following, we partition the set of data qubits as $[n]\,{=}\,LR$, where $L$ and $R$ are the left and right blocks of $n/2=\ell m$ data qubits. Then, data qubits $L$ and $R$ and checks $X$ and $Z$ may each be labeled by integers $\mathbb{Z}_{\ell m}=\{0,1,\dots,\ell m-1\}$ which are indices into the matrices $A,B$. Alternatively, qubits and checks can be labeled by monomials from $\mathcal{M} =\{1,y,\dots,y^{m-1},x,xy,\dots,xy^{m-1},\dots,x^{\ell-1}y^{m-1}\}$ in this order, so that $i\in\mathbb{Z}_{\ell m}$ labels the same qubit or check as $x^{a_i}y^{i-ma_i}$ for $a_i=\text{floor}(i/m)$. Using the monomial labeling, $L$ data qubit $\alpha\in \mathcal{M}$ is part of $X$ checks $A^T_i\alpha$ and $Z$ checks $B_i\alpha$ for $i=1,2,3$. Similarly, $R$ data qubit $\beta\in \mathcal{M}$ is part of $X$ checks $B^T_i\beta$ and $Z$ checks $A_i\beta$. A unified notation assigns each qubit or check a label $q(T,\alpha)$ where $T\in\{L,R,X,Z\}$ denotes its type and $\alpha\in \mathcal{M}$ its monomial label\footnote{The monomial notations should not be confused with the matrix notations used earlier in this section. For example, multiplication of monomials such as $B_i\alpha$ is different from multiplying a vector $\alpha$ by a matrix $B_i$.}.

\begin{figure}[ht]
     \centering
    \includegraphics[width=0.9\textwidth]{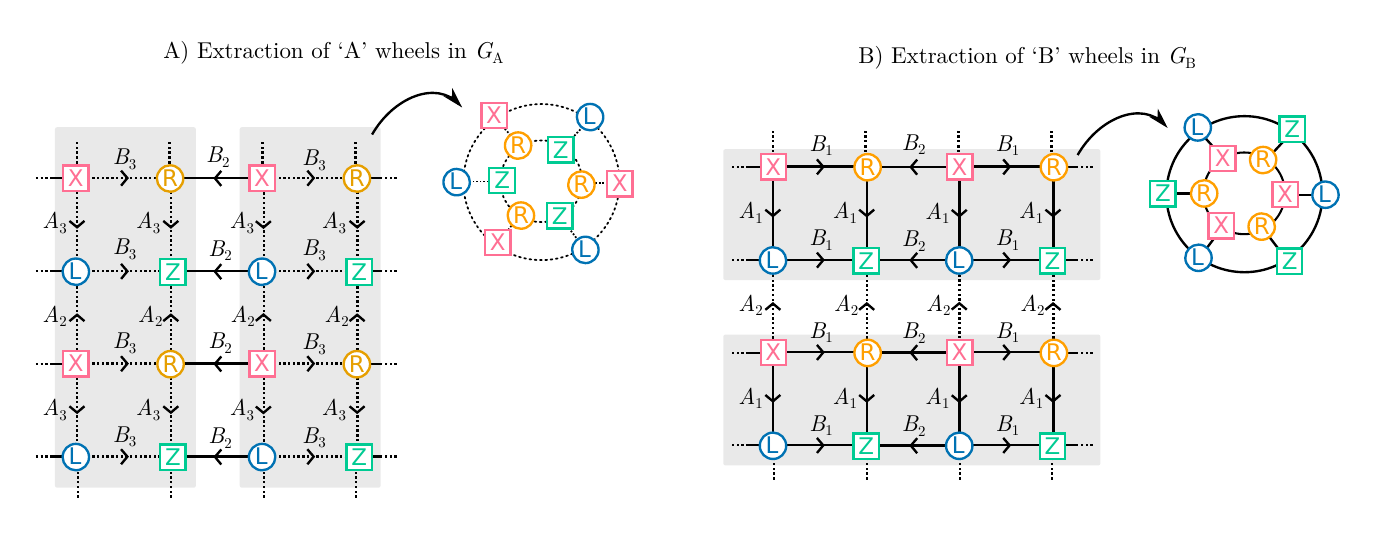}
        \caption{ A). B) Two different grids over a torus defined using different subsets of $A_1,A_2,A_3,B_1,B_2,B_3$.  Edge labels indicate adjacency matrices that generate the respective edges. By extracting either horizontal or vertical strips from these grids, we obtain planar `wheel graphs' whose union contains all edges in the Tanner graph. The `A' wheels (dashed lines) cover $A_2,A_3,B_3$ and the `B' wheels (solid lines) cover $B_1,B_2,A_1$. To avoid clutter, each grid shows only a subset of edges present in the Tanner graph.}
        \label{fig:wheel_extraction}
\end{figure}

\begin{figure}[ht]
     \centering
    \includegraphics[width=0.6\textwidth]{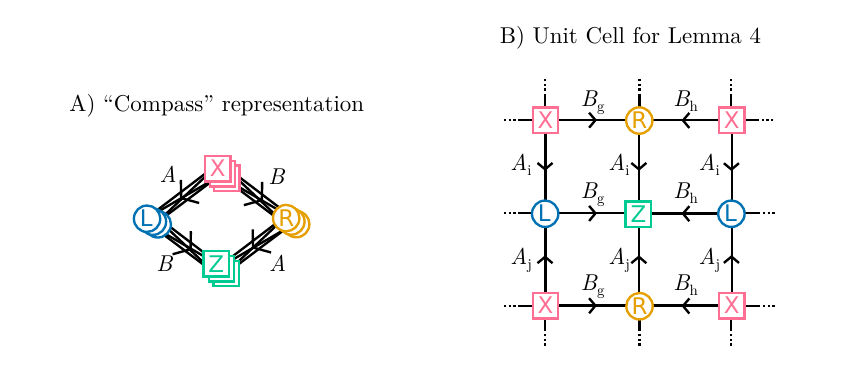}
        \caption{A) ``Compass'' diagram that shows the direction in which matrices $A,B$ are applied to travel between different nodes.  B) The unit cell of the construction of a toric layout in the proof of \lem{toric_layout}.}
        \label{fig:navigation}
\end{figure}

\begin{lemma}
\label{lem:thickness}
The Tanner graph $G$ of the code $\qc$ has thickness $\theta\,{\le}\, 2$. 
A decomposition of $G$ into two planar layers can be computed in time $O(n)$. 
Each planar layer of $G$ is a degree-3 graph.
\end{lemma}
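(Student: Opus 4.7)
The plan is to exhibit an explicit partition $E(G)=E_1\sqcup E_2$ of the edge set such that each spanning subgraph $(V,E_i)$ is 3-regular and planar, following the construction sketched in \fig{wheel_extraction} and \fig{navigation}. First I would label each edge by which of the six monomial summands $A_1,A_2,A_3,B_1,B_2,B_3$ generates it: each $A_i$ gives rise to edges of type L--X (between $q(L,\alpha)$ and $q(X,A_i^T\alpha)$) and R--Z (between $q(R,\beta)$ and $q(Z,A_i\beta)$), while each $B_i$ yields L--Z and R--X edges. Consequently every vertex is incident to exactly three ``$A$-edges'' and three ``$B$-edges''.

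Next I would define the partition so that each layer contains three of the six monomials, with two from one family and one from the other (as in \fig{wheel_extraction}, where the dashed layer uses $\{A_2,A_3,B_3\}$ and the solid layer uses $\{A_1,B_1,B_2\}$). A direct count of incidences through the edge classification above shows that every vertex retains exactly three incident edges in each layer, so both $G_1$ and $G_2$ are 3-regular. The partition is determined by a constant-time rule applied to each of the $O(n)$ edges, giving the claimed $O(n)$ runtime both for the partition and for the associated embedding, which is periodic in a fixed unit cell.

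The main work --- and where I expect the main obstacle to lie --- is proving planarity of each layer. Here the plan is to realize $G_i$ as a toric embedding with a constant-size unit cell containing one vertex of each type $L,R,X,Z$ (\fig{navigation}B). Within this layout two of the three monomials in each group serve as ``grid generators'' tiling the torus $\ZZ_\ell\times\ZZ_m$, while the third produces short ``spoke'' edges fitting inside a single row of grid cells, so that $G_i$ takes the form of a stack of wheel strips. The claim is that cycles along one of the two grid directions are not traversed by any layer-$i$ edge, so one can cut the torus along such a cycle and unfold $G_i$ into a disjoint union of planar cylindrical ``wheel'' pieces, whose disjoint union is itself planar. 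The delicate step will be verifying that the chosen triple of monomials admits such a cut for every BB code, not just the specific $A$ and $B$ of \fig{wheel_extraction}: the argument must account for the various ways the exponents of the three monomials can align on $\ZZ_\ell\times\ZZ_m$, and must confirm that in each case the remaining ``spoke'' monomial never winds all the way around the transverse direction, so that a transverse cycle avoiding every layer-$i$ edge always exists.
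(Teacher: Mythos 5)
Your edge partition into the layers $\{A_2,A_3,B_3\}$ and $\{A_1,B_1,B_2\}$, the 3-regularity count, and the $O(n)$ claim all coincide with the paper. The gap is exactly where you expect it: planarity of each layer is not established, and the route you sketch would not close it for a general BB code. You propose to embed each layer in a torus with two of its three monomials acting as independent ``grid generators'' and then cut along a cycle avoided by the layer's edges. But within a layer the two same-family monomials (say $A_2,A_3$) determine only a single translation direction, namely the ratio $A_3A_2^T$, and there is no second ratio available inside the layer to serve as an independent grid generator; moreover a global toric/grid layout of the Tanner graph is not guaranteed for every BB code --- \lem{toric_layout} gives only a sufficient condition, and the paper exhibits a connected code $[[784,24,\le 24]]$ that fails it --- yet \lem{thickness} is asserted for all BB codes. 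So a planarity proof routed through a torus embedding, a choice of cut direction, and a case analysis of how the exponents align either needs extra hypotheses or substantial repair.

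The paper avoids all of this with a purely component-wise argument requiring no torus embedding, no cut, and no condition on the exponents. In the layer $G_A$, with check matrices $[A_2{+}A_3\,|\,B_3]$ and $[B_3^T\,|\,A_2^T{+}A_3^T]$, each connected component is a ``wheel'': an outer cycle alternating $X$-checks and $L$-qubits whose edges alternate $A_3$ and $A_2^T$, of length determined by $\ord{A_3A_2^T}$; an inner cycle alternating $Z$-checks and $R$-qubits of the same length; and radial $B_3/B_3^T$ edges forming a perfect matching between the two cycles, whose consistency is precisely the commutativity $A_iB_j=B_jA_i$ (the $4$-cycles $(B_3,A_2,B_3^T,A_2^T)$ and $(B_3^T,A_3,B_3,A_3^T)$ close). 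A wheel is planar, a disjoint union of wheels is planar, and the same argument applies verbatim to $G_B$ with the ratio $B_1B_2^T$ in place of $A_3A_2^T$ and $A_1$ as the radial matrix. In particular, your worry about the ``spoke'' monomial winding around a transverse direction never arises: the component structure is dictated solely by one ratio and one radial permutation, so planarity is automatic for every choice of $A$ and $B$. To complete your write-up, replace the torus-cutting step with this wheel decomposition.
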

\begin{proof}
Let $G\,{=}\,(V,E)$ be the Tanner graph. 
Partition $G$ into subgraphs $G_A\,{=}\,(V,E_A)$ and $G_B\,{=}\,(V,E_B)$ that describe CSS codes with check matrices
\be
\mbox{Tanner graph $G_A$:} \quad 
 H^X_A=[A_2+A_3|B_3]  \quad \mbox{and} \quad  H^Z_A = [B_3^T|A_2^T+A_3^T]
\ee
\be
\mbox{Tanner graph $G_B$:} \quad 
H^X_B = [A_1 |B_1+B_2] \quad \mbox{and} \quad H^Z_B = [B_1^T+B_2^T|A_1^T].
\ee
Since $A=A_1+A_2+A_3$ and $B=B_1+B_2+B_3$,
every edge of $G$ appears either in $G_A$ or $G_B$, where the two subgraphs are named by whether they contain more $A_i$ edges or more $B_i$ edges. 
Then  $G_A$ and $G_B$ are regular degree-3 graphs (since $A_i$ and $B_j$ are permutation matrices).
%Partition the set of data qubits as $[n]\,{=}\,LR$, where $L$ and $R$ are the left and right blocks of $n/2$ data qubits.

Consider the graph $G_A$. 
Each $X$-check vertex  is connected to
a pair of data vertices $i_1,i_2\in L$ via the matrices $A_2, A_3$
and a data vertex $i_3\in R$ via the matrix $B_3$.
Each $Z$-check vertex is connected to a pair of data vertices $i_1,i_2\in R$ via the matrices $A_2^T, A_3^T$ and a data vertex $i_3\in L$
via the matrix $B_3^T$.

We claim that each connected component of $G_A$ can be represented by a ``wheel graph" illustrated in \fig{wheel_extraction}.
A wheel graph consists of two disjoint cycles of the same length $p$
 interconnected by $p$ radial edges.
The outer cycle alternates between $X$-check 
and $L$-data vertices.

Edges of the outer cycle alternate between those generated by $A_3$ 
(as one moves from a check  to a data vertex)
and $A_2^T$ (as one moves from a data to a check vertex).
The length of the outer cycle is equal to the order of the matrix $A_3 A_2^T$,
that is, the smallest integer $p$ such that $(A_3 A_2^T)^p=I_{\ell m}$.
For example, consider the code $[[144,12,12]]$ from \tab{codes}.
Then $A=x^3+y+y^2$, $A_2=y$, and $A_3=y^2$.
Thus $A_3 A_2^T = y^2 y^{-1} = y$ which has order $m =6$.
The inner cycle of a wheel graph alternates between $Z$-check and $R$-data vertices.

Edges of the inner cycle  alternate between those generated by $A_3^T$
(as one moves from a check to a data vertex) and $A_2$ (as one moves from a data to a check vertex). 
The length of the inner cycle is equal to the order of the matrix $A_3^T A_2$ which is just the transpose of $A_3A_3^T$ considered earlier. Thus both inner and outer cycles have the same length $m$. The two cycles are interconnected by $m$ radial edges as shown in \fig{wheel_extraction} A). 
Radial edges are generated by the matrix $B_3$,
as one moves  towards the center of the wheel. 
The wheel graph contains $4$-cycles
generated by tuples of edges $(B_3,A_2,B_3^T,A_2^T)$ and
$(B_3^T,A_3,B_3,A_3^T)$.
Commutativity between $A_i$ and $B_j$
ensures that traversing any of these $4$-cycles implements the identity matrix, that is,
the graph is well defined. 
Clearly, the wheel graph is planar. Since $G_A$ is a disjoint union of wheel graphs, $G_A$ is planar. 
The same argument shows that $G_B$ is planar: see \fig{wheel_extraction} B).
\end{proof}

We empirically observed that  BB codes reported in \tab{codes} have no weight-$4$ stabilizers.
The presence of
such stabilizers
is known to have a negative impact on the performance
of belief propagation decoders~\cite{panteleev2021degenerate}, which we use here.

The definition of code $\qc$ does not guarantee that its Tanner graph is connected. Some choices of $A$ and $B$ lead to a code that is actually several separable code blocks. This manifests as a Tanner graph with several connected components. For instance, although all codes in \tab{codes} are connected, taking any of them with even $\ell$ and replacing every instance of $x$ with $x^2$ creates a code with two connected components.

\begin{lemma} \label{lem:connected}
The Tanner graph of the code $\qc$ is connected if and only if $S=\{A_iA_j^T:i,j\in\{1,2,3\}\}\cup\{B_iB_j^T:i,j\in\{1,2,3\}\}$ generates the group $\mathcal{M}$. The number of connected components in the Tanner graph is $\ell m/|\langle S\rangle|$, and all components are graph isomorphic to one another.
\end{lemma}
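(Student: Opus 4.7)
My plan is to exploit the translation symmetry of the Tanner graph to decompose it into orbits, and then identify the connected component of a single vertex with a coset of $\langle S\rangle$. Observe that the monomial set $\mathcal{M}$ forms an abelian group under multiplication (isomorphic to $\mathbb{Z}_\ell\times \mathbb{Z}_m$), and for each $g\in\mathcal{M}$ the map $\tau_g:q(T,\alpha)\mapsto q(T,g\alpha)$ preserves all four edge types $q(L,\alpha)\sim q(X,A_i^T\alpha)$, $q(L,\alpha)\sim q(Z,B_i\alpha)$, $q(R,\beta)\sim q(X,B_i^T\beta)$, $q(R,\beta)\sim q(Z,A_i\beta)$, precisely because $\mathcal{M}$ is abelian. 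So $\tau_g$ is a graph automorphism. The action on each type $T\in\{L,R,X,Z\}$ is free and transitive, giving four orbits of size $\ell m$ each.

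The first substantive step is to compute the set of $L$-labels reachable from $q(L,1)$. I will classify closed walks at $q(L,1)$ that terminate in an $L$-vertex according to their sequence of types. A direct excursion $L\to X\to L$ contributes a factor of $A_j A_i^T$; a direct excursion $L\to Z\to L$ contributes $B_j^T B_i$; a detour $L\to X\to R\to X\to L$ contributes a product $(A_l A_i^T)(B_j B_k^T)$; and a detour $L\to X\to R\to Z\to L$ contributes a product $(A_l A_i^T)(A_j^T B_k)(B_k^{-1}\cdots)$-type factor that, once abelianness is used to commute all the terms, again lies in $\langle S\rangle$. (Analogous analysis for detours starting with $L\to Z$ gives the same conclusion.) Conversely each generator of $S$ is realized by a length-two walk. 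Hence the set of reachable $L$-labels is exactly $H:=\langle S\rangle$.

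From this, the edge rules immediately force the $X$-labels in the component of $q(L,1)$ to be the single coset $A_1^T H$ (any choice of index $i$ gives the same coset because $A_i^T(A_j^T)^{-1}=A_iA_j^T\in H$, using $A_j^T=A_j^{-1}$); similarly the $Z$-labels form $B_1 H$ and the $R$-labels form $B_1 A_1^T H$, where coset-independence of the indices follows by the same abelian check. So the connected component of $q(L,1)$ has exactly $4|H|$ vertices. Applying the automorphism $\tau_g$ sends this component to the component of $q(L,g)$, and two translates coincide iff $g\in H$, so the components are indexed by $\mathcal{M}/H$, all of size $4|H|$, and mutually graph-isomorphic via $\tau_g$. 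The component count is $\ell m/|H|$, and the graph is connected iff $|H|=\ell m$, i.e., iff $S$ generates $\mathcal{M}$.

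The main obstacle is the walk classification in the second paragraph: showing that every walk that revisits $L$ contributes an element of $H$, rather than something strictly larger, requires being systematic about the $R$-detours. Once the abelianness of $\mathcal{M}$ is used to regroup $A$-factors with $A$-factors and $B$-factors with $B$-factors, every such product collapses to an element of the two generating families $\{A_iA_j^T\}\cup\{B_iB_j^T\}$, but writing this out cleanly without redundant case analysis will require some care.
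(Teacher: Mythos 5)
Your proposal is correct and follows essentially the same route as the paper: length-two hops through a check realize the generators $A_iA_j^T$ and $B_iB_j^T$, walks between vertices of the same type contribute elements of $\langle S\rangle$ (after abelian regrouping the transposed and untransposed $A$-factors, and likewise the $B$-factors, occur in equal numbers for such a walk), and the connected components are labeled by the cosets of $\langle S\rangle$, permuted transitively by the translation automorphisms $\tau_g$. The walk-classification step you flag as delicate is exactly the step the paper also dispatches in one line, and your abelian-regrouping observation is the right way to complete it.
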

\begin{proof}
% a simplified proof for the paper
\fig{navigation} is helpful for following the arguments in this proof. We start by proving the reverse implication of the first statement. Note that there is a length 2 path in the Tanner graph from $L$ qubit $\alpha\in \mathcal{M}$ to $L$ qubit $A_iA_j^T\alpha$ and another length 2 path to $L$ qubit $B_iB_j^T\alpha$. These travel through $X$ and $Z$ checks, respectively. Thus, because the $A_iA_j^T$ and $B_iB_j^T$ generate $\mathcal{M}$, there is some path from $\alpha$ to any other $L$ qubit $\beta$. 
A similar argument shows the existence of a path connecting any pair of R qubits. Since each $X$ check and each $Z$ check are connected to at least one $L$ qubit and at least one $R$ qubit, this implies that the entire Tanner graph is connected.
The forward implication of the first statement follows after noticing that, for all $T\in\{L,R,X,Z\}$, the path from a type $T$ node to any other $T$ node is necessarily described as a product of elements from $S$. Connectivity of the Tanner graph implies the existence of all such paths, and so $S$ must generate $\mathcal{M}$. 

If $S$ does not generate $\mathcal{M}$, it necessarily generates a subgroup $\langle S\rangle$ and nodes in connected components of the Tanner graph are labeled by elements of the cosets of this subgroup. This implies the theorem's second statement.
\end{proof}

For the next part, we establish some terminology. A spanning sub-graph of a graph $G$ is a sub-graph containing all the vertices of $G$. Also, the undirected Cayley graph of a finite Abelian group $\mathcal{G}$ (with identity element $0$) generated by set $S\subset\mathcal{G}$ is the graph with vertex set $\mathcal{G}$ and undirected edges $(g,g+s)$ for all $g\in\mathcal{G}$ and all $s\in S,s\neq0$. We say the Cayley graph of $\mathbb{Z}_{a}\times\mathbb{Z}_{b}$ when we mean the Cayley graph of $\mathbb{Z}_{a}\times\mathbb{Z}_{b}$ generated by $\{(1,0),(0,1)\}$. The order $\ord{g}$ of an element $g$ in a multiplicative group is the smallest positive integer such that $g^{\ord{g}}=1$. 

\begin{dfn}
Code $\qc$ is said to have a toric layout if its Tanner graph has a spanning sub-graph isomorphic to the Cayley graph of $\mathbb{Z}_{2\mu}\times\mathbb{Z}_{2\lambda}$ for some integers $\mu$ and $\lambda$.
\end{dfn}

Note that only codes with connected Tanner graphs can have a toric layout according to this definition. An example toric layout is depicted in \fig{2Dlayout}~B).

\begin{lemma}\label{lem:toric_layout}
A code $\qc$ has a toric layout if there exist $i,j,g,h\in\{1,2,3\}$ such that
\begin{enumerate}[label=(\roman*)]
\item $\langle A_iA_j^T,B_gB_h^T\rangle=\mathcal{M}$ and
\item $\ord{A_iA_j^T}\ord{B_gB_h^T}=\ell m$.
\end{enumerate}
\end{lemma}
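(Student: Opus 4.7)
The plan is to exhibit an explicit graph isomorphism from the Cayley graph of $\mathbb{Z}_{2\mu}\times\mathbb{Z}_{2\lambda}$ onto a spanning subgraph of the Tanner graph $G$, with $\mu=\ord{a}$ and $\lambda=\ord{b}$ for $a:=A_iA_j^T$ and $b:=B_gB_h^T$. Hypothesis (ii) already matches the vertex counts, $4\mu\lambda=4\ell m=|V(G)|$. Combined with (i) this forces $\langle a\rangle\cap\langle b\rangle=\{1\}$: from $|\langle a,b\rangle|\le|\langle a\rangle|\,|\langle b\rangle|=\mu\lambda$ and (i) one reads off equality, so the intersection must be trivial. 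Consequently $(u,v)\mapsto a^u b^v$ is a group isomorphism $\mathbb{Z}_\mu\times\mathbb{Z}_\lambda\xrightarrow{\,\sim\,}\mathcal{M}$.

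I would then define the embedding $\phi\colon\mathbb{Z}_{2\mu}\times\mathbb{Z}_{2\lambda}\to V(G)$ one parity class at a time:
\begin{align*}
\phi(2u,2v) &= q(L,a^ub^v), & \phi(2u+1,2v) &= q(X,A_j^T a^ub^v),\\
\phi(2u,2v+1) &= q(Z,B_g a^ub^v), & \phi(2u+1,2v+1) &= q(R,B_g A_j^T a^ub^v).
\end{align*}
Since left-multiplication by each of $A_j^T$, $B_g$, $B_g A_j^T$ is a permutation of $\mathcal{M}$, each parity class is matched bijectively with its vertex type in $G$, so $\phi$ is a vertex bijection.

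The bulk of the proof is to verify that every Cayley edge lands on a Tanner edge, i.e., that the four grid neighbors of $\phi(2u,2v)$, $\phi(2u+1,2v)$, $\phi(2u,2v+1)$, $\phi(2u+1,2v+1)$ are indeed incident to $\phi(\cdot)$ in $G$. All sixteen such checks reduce to commutativity in $\mathcal{M}$ together with $A_k^T=A_k^{-1}$, $B_k^T=B_k^{-1}$, from which the two identities $A_j^T a^{-1}=A_i^T$ and $B_g b^{-1}=B_h$ drop out immediately. For example, the two horizontal neighbors of $\phi(2u,2v)=q(L,\alpha)$ (with $\alpha=a^ub^v$) are $q(X,A_j^T\alpha)$ and $q(X,A_j^T a^{-1}\alpha)=q(X,A_i^T\alpha)$, both among the three X-checks incident to $q(L,\alpha)$ in $G$; the vertical neighbors are $q(Z,B_g\alpha)$ and $q(Z,B_h\alpha)$, two of the three incident Z-checks. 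The R-qubit at $\phi(2u+1,2v+1)$ sits in Z-checks $q(Z,A_k\beta)$ and X-checks $q(X,B_k^T\beta)$ for $k\in\{1,2,3\}$; the four grid neighbors are picked out by $k\in\{i,j\}$ for the two horizontal (Z-check) neighbors and $k\in\{g,h\}$ for the two vertical (X-check) neighbors. The X- and Z-check cases mirror the L- and R-cases by symmetry.

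I do not expect a genuine conceptual obstacle; the main ``difficulty'' is purely bookkeeping across the sixteen edge checks, each of which collapses to a one-line monomial identity via commutativity. The one design decision that requires care is choosing, in the definition of $\phi$, which of the three available $A$- or $B$-indices to put on each grid edge so that all the identities line up; the assignment above is the one that works.
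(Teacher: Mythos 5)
Your proof is correct and follows essentially the same route as the paper: the same parity-class assignment (L at even-even, X at odd-even shifted by $A_j^T$, Z at even-odd shifted by $B_g$, R at odd-odd shifted by $B_gA_j^T$), with (i) giving existence and (ii) giving uniqueness of the exponents, and the Tanner edges $A_i,A_j^T,B_g,B_h^T$ realizing the Cayley edges. The only cosmetic difference is that you phrase uniqueness via triviality of $\langle A_iA_j^T\rangle\cap\langle B_gB_h^T\rangle$ and spell out the edge checks, where the paper invokes the pigeonhole principle and its navigation figure.
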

\begin{proof}
We let $\mu=\ord{A_iA_j^T}$ and $\lambda=\ord{B_gB_h^T}$. We associate qubits and checks in the Tanner graph of $\qc$ with elements of $\mathcal{G}=\mathbb{Z}_{2\mu}\times\mathbb{Z}_{2\lambda}$. For $L$ qubit with label $\alpha\in \mathcal{M}$, because of (i), there is $(a,b)\in\mathbb{Z}_{\mu}\times\mathbb{Z}_{\lambda}$ such that $\alpha=(A_iA_j^T)^a(B_gB_h^T)^b$. Because of (ii) and the pigeonhole principle, this choice of $(a,b)$ is unique. We associate $L$ qubit $\alpha$ with $(2a,2b)\in\mathcal{G}$. Similarly, an $R$ qubit with label $\alpha A^T_jB_g$ is associated with $(2a+1,2b+1)\in\mathcal{G}$, $X$-check $\alpha A^T_j$ with $(2a+1,2b)$, and $Z$-check $\alpha B_g$ with $(2a,2b+1)$. Edges in the Tanner graph $A_i,A^T_j,B_g,$ and $B_h^T$ can now be drawn as in \fig{navigation} (B) and correspond to edges in the Cayley graph of $\mathcal{G}$. For instance, to get from $(2a+1,2b+1)$, an $R$ qubit, to $(2a+2,2b+1)$, a $Z$ check, we apply $A_i$, taking $R$ qubit labeled $\alpha A^T_jB_g$ to the $Z$ check labeled $(\alpha A^T_jB_g)A_i=\alpha(A_iA_j^T)B_g$.
\end{proof}

All codes in \tab{codes} have a toric layout with $\mu=m$ and $\lambda=\ell$. Most of these codes satisfy \lem{toric_layout} with $i=g=2$ and $j=h=3$. The exception is the $\llbracket90,8,10\rrbracket$ code, for which we can take $i=2,g=1$ and $j=h=3$.

However, we also note two interesting cases. First, there are codes with connected Tanner graphs that do not satisfy the conditions for a toric layout given in \lem{toric_layout}. 
One example of such a code is
$\qc$
with $\ell,m=28,14$,
$A=x^{26}+y^6 + y^8$, and
$B=y^7+x^9+x^{20}$
which has parameters $[[784,24,\le 24]]$.
Second, for a code satisfying the conditions of \lem{toric_layout}, it need not be the case that the set $\{\ord{A_iA_j^T},\ord{B_gB_h^T}\}$ and the set $\{\ell,m\}$ are equal. For example, the $[[432,4,\le22]]$ code with $\ell,m=18,12$ and $A=x+y^{11}+y^3$, $B=y^2+x^{15}+x$ only satisfies \lem{toric_layout} with $\mu,\lambda=36,6$ (take $i=g=1$ and $j=h=2$ for instance). 

\newcommand{\maindocument}{}
\ifdefined\maindocument
\else
    \documentclass{article}
    \input{packages.tex}

\title{Supplemental Material for\\ ``High-threshold and low-overhead fault-tolerant quantum memory''}

 \author[1]{Sergey~Bravyi}
 \author[1]{Andrew~W.~Cross}
 \author[1]{Jay~M.~Gambetta}
  \author[1]{Dmitri~Maslov}
 \author[2]{Patrick~Rall}
 \author[1]{Theodore~J.~Yoder} 
 \affil[1]{\large{IBM Quantum, IBM T.J. Watson Research Center, Yorktown Heights, NY 10598 (USA)}}
 \affil[2]{IBM Quantum, MIT-IBM Watson AI Lab, Cambridge, MA 02142 (USA)}

    \begin{document}
    \maketitle
    
    \tableofcontents
    
   \newcommand{\figlayout}{1 }
   \newcommand{\figwheelextraction}{4 }
   \newcommand{\eqnAB}{1}
   \newcommand{\lemkd}{1 }
   \newcommand{\figsyndromecircuit}{2 }
   \newcommand{\figsmallldpc}{3 A) }
   \newcommand{\figldpcvssurface}{3 B) }
   \newcommand{\tabcodesintro}{1 }
   \newcommand{\tabcodes}{2 }
   \newcommand{\fignavigation}{5 }
   \newcommand{\lemconnected}{3 }

\fi

\newcommand{\ifmain}[2]{\ifdefined\maindocument#1\else#2\fi}

% Use this to highlight edits
%\newcommand{\edit}[1]{{\color{red}{#1}}}

\section{Syndrome measurement circuit}
\label{sec:syndrome_circuit}

The next step is to furnish the code $\qc$ with a syndrome measurement (SM) circuit that repeatedly measures the syndrome of each check operator.  Here we describe a SM circuit that requires $2n$ physical qubits in total: $n$ data qubits and $n$ ancillary check qubits used to record the measured syndromes.  The circuit only applies $\cnotgate$s to pairs of qubits that are connected in the Tanner graph. 

The SM circuit is defined as a periodically repeated sequence of {\em syndrome cycles} (SC).  A single SC is responsible for measuring syndromes of all $n$ check operators of the code.  Let $N_c$ be the number of syndrome cycles. We envision that $N_c\,{>}\,1$.  The circuit begins and ends with a special initialization and measurement cycle responsible for initializing logical qubits in a suitable initial state and measuring logical qubits in a suitable basis.  Here we focus on the optimization of the SC circuit.  Logical initialization and measurements are discussed in \sec{logicals}.

The SC circuit is divided into $N_r$ {\em rounds} such that each round is a depth-1 circuit composed of $\cnotgate$s and single-qubit operations.  The latter include initializing a qubit in the $X$ or $Z$ basis and measuring a qubit in the $X$ or $Z$ basis.  $\cnotgate$s can be applied only to pairs of qubits which are nearest neighbors in the Tanner graph.  Some qubits remain idle during some rounds, although we try to minimize such occurrences by squeezing more useful computations in as little time as possible.  Our notations are summarized in \tab{tab3}.
% By definition, a single syndrome cycle has depth $N_r$.

\begin{table}[t]
\begin{center}
\begin{tabular}{c|c}
\hline
Notation & Operation  \\
\hline
\hline
$\cnot{c}{t}$  & $\cnotgate$ with control qubit $c$ and target qubit $t$ \\
\hline
$\initX{q}$  & Initialize qubit $q$ in the state $|+\ra=(|0\ra+|1\ra)/\sqrt{2}$\\
\hline
$\initZ{q}$ &  Initialize qubit $q$ in the state $|0\ra$\\
\hline
$\measX{q}$ & Measure qubit $q$ in the $X$-basis $|+\ra,|-\ra$\\
\hline
$\measZ{q}$ & Measure qubit $q$ in the $Z$-basis $|0\ra,|1\ra$\\
\hline
$\idle{q}$ & Identity gate on qubit $q$  \\ 
\hline
\end{tabular}
\caption{Elementary operations used for syndrome measurements.}
\label{table:tab3}
\end{center}
\end{table}

\begin{table}[ht]
\begin{center}
\begin{tabular}{c|c||c|c}
\hline
Round  & Circuit & Round    & Circuit \\
\hline
\hline
1 & 
{\centering
\begin{minipage}{7cm}
	\begin{algorithmic}
	\State{}
	\For{$i=1$ to $n/2$}
	\State{$\initX{q(X,i)}$}
	\State{$\cnot{q(R,A_1^T(i))}{q(Z,i)}$}
	\State{$\idle{q(L,i)}$}
	\EndFor
	\State{}
	\end{algorithmic}
\end{minipage}}
&  5 &
{\centering
\begin{minipage}{7cm}
	\begin{algorithmic}
	\State{}
	\For{$i=1$ to $n/2$}
	\State{$\cnot{q(X,i)}{q(R,B_3(i))}$}
	\State{$\cnot{q(L,B_3^T(i))}{q(Z,i)}$}
	\EndFor
	\State{}
	\end{algorithmic}
\end{minipage}}
 \\
\hline
2 & 
{\centering
\begin{minipage}{7cm}
	\begin{algorithmic}
	\State{}
	\For{$i=1$ to $n/2$}
	\State{$\cnot{q(X,i)}{q(L,A_2(i))}$}
	\State{$\cnot{q(R,A_3^T(i))}{q(Z,i)}$}
	\EndFor
	\State{}
	\end{algorithmic}
\end{minipage}}
& 6 & 
{\centering
\begin{minipage}{7cm}
	\begin{algorithmic}
	\State{}
	\For{$i=1$ to $n/2$}
	\State{$\cnot{q(X,i)}{q(L,A_1(i))}$}
	\State{$\cnot{q(R,A_2^T(i))}{q(Z,i)}$}
	\EndFor
	\State{}
	\end{algorithmic}
\end{minipage}}
\\
\hline
3 & 
{\centering
\begin{minipage}{7cm}
	\begin{algorithmic}
	\State{}
	\For{$i=1$ to $n/2$}
	\State{$\cnot{q(X,i)}{q(R,B_2(i))}$}
	\State{$\cnot{q(L,B_1^T(i))}{q(Z,i)}$}
	\EndFor
	\State{}
	\end{algorithmic}
\end{minipage}}
& 7 &
{\centering
\begin{minipage}{7cm}
	\begin{algorithmic}
	\State{}
	\For{$i=1$ to $n/2$}
	\State{$\cnot{q(X,i)}{q(L,A_3(i))}$}
	\State{$\measZ{q(Z,i)}$}
	\State{$\idle{q(R,i)}$}
	\EndFor
	\State{}
	\end{algorithmic}
\end{minipage}}
 \\
\hline
4 & 
{\centering
\begin{minipage}{7cm}
	\begin{algorithmic}
	\State{}
	\For{$i=1$ to $n/2$}
	\State{$\cnot{q(X,i)}{q(R,B_1(i))}$}
	\State{$\cnot{q(L,B_2^T(i))}{q(Z,i)}$}
	\EndFor
	\State{}
	\end{algorithmic}
\end{minipage}}
& 8 & 
{\centering
\begin{minipage}{7cm}
	\begin{algorithmic}
	\State{}
	\For{$i=1$ to $n/2$}
	\State{$\measX{q(X,i)}$}
	\State{$\initZ{q(Z,i)}$}
	\State{$\idle{q(L,i)}$}
	\State{$\idle{q(R,i)}$}
	\EndFor
	\State{}
	\end{algorithmic}
\end{minipage}}
\\
\hline
\end{tabular}
\end{center}
\caption{Depth-8 syndrome measurement cycle circuit.}
\label{table:syndromecircuit}
\end{table}

Below we describe a SC circuit with effectively $N_r\,{=}\,8$ rounds\footnote{The operator $\initZ{q(Z,i)}$ must be executed before the first application of this SC circuit, raising the depth of the first stage to 9.  However, each following syndrome cycle takes Z-check initialization from the previous round. Last syndrome cycle needs not apply the Z-check state initialization.  Total SM circuit depth with $N_c$ syndrome cycles is thus $8N_c\,{+}\,1$.}.  Ignoring single-qubit initialization and measurement operations, the SC circuit is a depth-7 $\cnotgate$ circuit.  By designing the circuit for an explicit family of LDPC codes we are able to leverage the symmetries and reduce computational depth to $7$ from what otherwise would be $14\,{=}\,2{\cdot}6\,{+}\,2$, as shown by previous authors \cite[Theorem 1]{tremblay2022constant}.  Our notations are as follows.  We divide $n$ data qubits into the left and the right registers $q(L)$ and $q(R)$ of size $n/2$ each.  Each check operator acts on three data qubits from $q(L)$ and three data qubits from $q(R)$.  The SM circuit uses $2n$ physical qubits in total: $n$ data qubits and $n$ ancillary check qubits that record  the syndrome of each check operator.  Let $q(X)$ and $q(Z)$ be the ancillary registers of size $n/2$ that span $X$-check and $Z$-check qubits respectively.
Thus the physical qubits are partitioned into four registers,  $q(X)$, $q(L)$, $q(R)$, and $q(Z)$, of size $n/2$ each. Label qubits in each register by integers $i=1,2,\ldots,n/2$.
We write $q(X,i)$ for the $i$-th qubit of the register $q(X)$ with similar notations for $q(L)$, $q(R)$, and $q(Z)$.
Each permutation matrix $A_p$ and $B_q$ from \ifmain{Eq.~(\ref{AB})}{Eq.~(\eqnAB) in the main text} defines a one-to-one map from the set $\{1,2,\ldots,n/2\}$ onto itself.

We identify a permutation matrix and the corresponding one-to-one map. For example, we write $j\,{=}\,A_1(i)$ if the matrix $A_1$ has a one at row $i$ and column $j$ (this is well defined since $A_1$ is a permutation matrix).
Likewise, we write $j\,{=}\,A_1^T(i)$ if the transposed matrix $A_1^T$ has a one at the row $i$ and column $j$.
In this notation, the $i$-th $X$-check operator acts on data qubits $q(L,A_p(i))$ and $q(R,B_p(i))$ with $p=1,2,3$.  The $i$-th $Z$-check operator acts on data qubits $q(L,B_p^T(i))$ and $q(R,A_p^T(i))$ with $p=1,2,3$.

\ifdefined\maindocument
\begin{figure}[t]
  \centering
    \includegraphics[height=8cm]{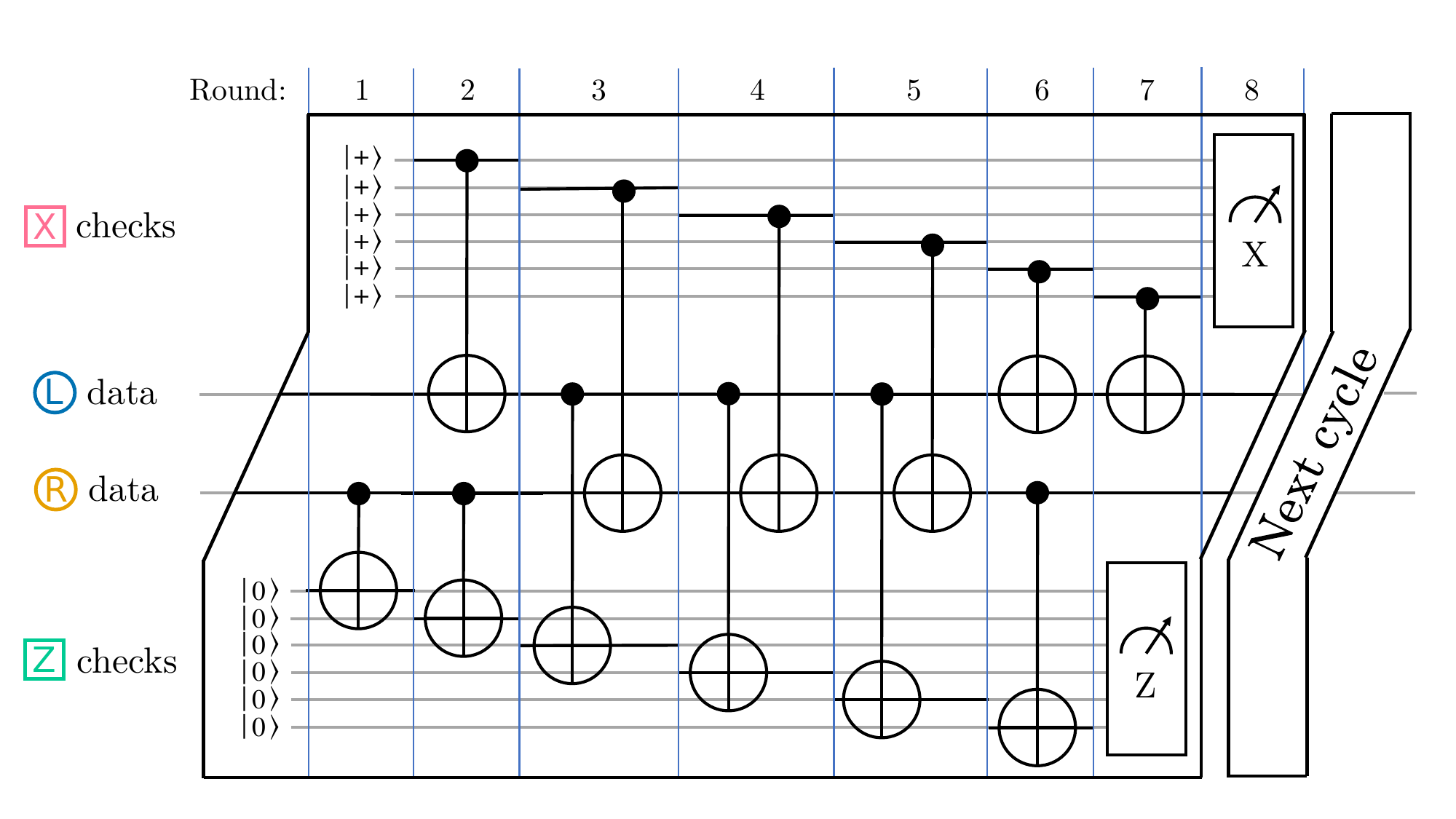}
     \caption{Depth-8 syndrome measurement cycle circuit.}
  \label{fig:syndromecircuit}
\end{figure}
\else
\fi

Our depth-8 SC circuit is described in \tab{syndromecircuit} and illustrated in \ifmain{\fig{syndromecircuit}}{Figure~\figsyndromecircuit in the main text}.  Note that within each round all operations act over non-overlapping sets of qubits.  In particular, each round applies at most one layer of $\cnotgate$ gates between $q(X)$ and $q(L)$ registers (Rounds 2, 6, and 7), at most one layer of $\cnotgate$s between $q(X)$ and $q(R)$ registers (Rounds 3, 4, and 5), at most one layer of $\cnotgate$s between $q(Z)$ and $q(L)$ registers (Rounds 3, 4, and 5), and at most one layer of $\cnotgate$s between $q(Z)$ and $q(R)$ registers (Rounds 1, 2, and 6). Qubits from $q(Z)$ are always targets for $\cnotgate$s.  Accordingly, $X$-type errors propagate from data qubits to check qubits in $q(Z)$.  The latter are measured in the $Z$-basis in Round 7 revealing the syndrome of $X$-type errors. Qubits from $q(X)$ are always controls for $\cnotgate$s. Accordingly, $Z$-type errors propagate from data qubits to check qubits in $q(X)$.  The latter are measured in the $X$-basis in Round~8 revealing the syndrome of $Z$-type errors.  We envision that the syndrome cycles are repeated periodically. This justifies applying $\cnotgate$s to $q(Z)$ at Round~1 even though $q(Z)$ is initialized only at Round~8.  Indeed, Round~8 of the previous syndrome cycle goes immediately before Round~1 of the current cycle. Thus $q(Z)$ has been already initialized at the beginning of Round~1.  We were not able to find a depth-8 (or smaller depth) syndrome cycle in which $X$-check and $Z$-check qubits are initialized and measured synchronously. 

Let us now prove that the above SC circuit has the desired functionality.  Since the circuit involves only Clifford operations, its action can be compactly described using stabilizer tableau~\cite{aaronson2004improved}.  We track how the tableau changes as each layer of $\cnotgate$s in the circuit is applied.  Since the $\cnotgate$ gates do not mix Pauli $X$ and $Z$ operators, one may consider tableau describing the action of the circuit on $X$-type and $Z$-type Pauli operators separately.  

Let us begin with $X$-type Pauli.  The corresponding tableau $T$ is a binary matrix of size $n\,{\times}\, 2n$ such that each row of $T$ defines an $X$-type stabilizer of the underlying quantum state.  We partition columns of $T$ into four blocks that represent qubit registers $q(X)$, $q(L)$, $q(R)$, and $q(Z)$.  We partition rows of $T$ into two blocks such that initially the top $n/2$ rows represent weight-$1$ check operators on qubits of the register $q(X)$ initialized in the state $|+\ra$ while the bottom $n/2$ rows represent weight-$6$ check operators on data qubits associated with the chosen code $\qc$. 
Thus, at the beginning of Round~1, when all check qubits in the register $q(X)$ have been initialized in the state $|+\ra$, while data qubits are in some logical state of the code $\qc$, the binary matrix is 
\[
\begin{pmatrix}
I & 0 & 0 & 0 \\
0 & A & B & 0
\end{pmatrix}.
\]
Here $I\,{\equiv}\, I_{n/2}$ is the identity matrix. 
The SC circuit (ignoring qubit initialization and measurements) enacts the transformation
\[
\begin{pmatrix}
I & 0 & 0 & 0 \\
0 & A & B & 0
\end{pmatrix}
\xrightarrow{\text{SC circuit}}
\begin{pmatrix}
I & A & B & 0 \\
0 & A & B & 0 
\end{pmatrix}.
\]
Indeed, the circuit must map a single-qubit $X$ stabilizer $X_j$ on a check qubit $j\,{\in}\, q(X)$ to a product of $X_j$ and the $j$-th $X$-type check operator on the data qubits determined by the $j$-th row of $H^X=[A|B]$.
The eigenvalue measurement of $X_j$ at the final round then reveals the syndrome of the $j$-th check operator.
The bottom $n/2$ rows must be unchanged since the check operators of the code must be the same before and after the syndrome measurement.

Let us verify that the circuit defined in \tab{syndromecircuit} enacts the desired transformation. To accomplish this, we rewrite the SC circuit by removing notations irrelevant to showing the correctness of $X$-checks.  Specifically, we write each $\cnotgate$ in \tab{syndromecircuit} as $\cnotgate_M(a,b)$, where $a,b \,{\in}\, \{1, 2, 3, 4\} = \{q(X), q(L), q(R), q(Z)\}$, and $M \,{\in}\, \{A_1,A_2,A_3,B_1,B_2, B_3\}$. Note that the $\cnotgate$ instructions where the matrix $M^T$ is used instead of $M$ can be written using matrix $M$ by performing the variable renaming $i\,{\gets}\, M(i)$ in the corresponding for loop in \tab{syndromecircuit}.

Using the above compact notation, the unitary part of the SC circuit becomes:
\be
\ba{cc}
\text{Round~1:} & \hfill \cnotgate_{A_1}(3,4) \\
\text{Round~2:} & \cnotgate_{A_2}(1,2), \cnotgate_{A_3}(3,4) \\
\text{Round~3:} & \cnotgate_{B_2}(1,3), \cnotgate_{B_1}(2,4) \\
\text{Round~4:} & \cnotgate_{B_1}(1,3), \cnotgate_{B_2}(2,4) \\
\text{Round~5:} & \cnotgate_{B_3}(1,3), \cnotgate_{B_3}(2,4) \\
\text{Round~6:} & \cnotgate_{A_1}(1,2), \cnotgate_{A_2}(3,4) \\
\text{Round~7:} & \cnotgate_{A_3}(1,2) \hfill \\
\ea
\label{SCunitary_part}
\ee

In the following we apply all seven unitary rounds to verify the correctness of the performed transformation:
\[
\text{Round~1:}
\begin{pmatrix}
I & 0 & 0 & 0 \\
0 & A & B & 0
\end{pmatrix}
\xrightarrow{\cnotgate_{A_1}(3,4)}
\begin{pmatrix}
I & 0 & 0 & 0 \\
0 & A & B & A_1 B
\end{pmatrix}
\]

\[
\text{Round~2:}
\xrightarrow{\cnotgate_{A_2}(1,2)}
\begin{pmatrix}
I & A_2 & 0 & 0 \\
0 & A & B & A_1 B
\end{pmatrix}
\xrightarrow{\cnotgate_{A_3}(3,4)}
\begin{pmatrix}
I & A_2 & 0 & 0 \\
0 & A & B & (A_1{+}A_3)B \\
\end{pmatrix}
\]

\[
\text{Round~3:}
\xrightarrow{\cnotgate_{B_2}(1,3)}
\begin{pmatrix}
I & A_2 & B_2 & 0 \\
0 & A & B & (A_1{+}A_3)B
\end{pmatrix}
\xrightarrow{\cnotgate_{B_1}(2,4)}
\begin{pmatrix}
I & A_2 & B_2 & A_2B_1 \\
0 & A & B & (A_1{+}A_3)B+AB_1 \\
\end{pmatrix}
\]

\[
\text{Round~4:}
\xrightarrow{\cnotgate_{B_1}(1,3)}
\begin{pmatrix}
I & A_2 & B_1{+}B_2 & A_2 B_1 \\
0 & A & B & (A_1{+}A_3)B + AB_1 \\
\end{pmatrix}
\]
\[
\xrightarrow{\cnotgate_{B_2}(2,4)}
\begin{pmatrix}
I & A_2 & B_1{+}B_2 & A_2 (B_1{+}B_2) \\
0 & A & B & (A_1{+}A_3)B+A(B_1{+}B_2) \\
\end{pmatrix}
=
\begin{pmatrix}
I & A_2 & B_1{+}B_2 & A_2 (B_1{+}B_2) \\
0 & A & B & A_2B + AB_3 \\
\end{pmatrix}
\]
Here, we use the identity $(A_1{+}A_3)B+A(B_1{+}B_2)=A_2B + AB_3$, which holds since the sum of first summands and second summands on both sides of the equation gives $AB$, and $AB\,{+}\,AB\,{=}\,0$.

\[
\text{Round~5:}
\xrightarrow{\cnotgate_{B_3}(1,3)}
\begin{pmatrix}
I & A_2 & B & A_2 (B_1{+}B_2) \\
0 & A & B & A_2B + AB_3  \\
\end{pmatrix}
\xrightarrow{\cnotgate_{B_3}(2,4)}
\begin{pmatrix}
I & A_2 & B & A_2 B \\
0 & A & B & A_2B   \\
\end{pmatrix}
\]

\[
\text{Round~6:}
\xrightarrow{\cnotgate_{A_1}(1,2)}
\begin{pmatrix}
I & A_1{+}A_2 & B & A_2 B \\
0 & A & B & A_2B   \\
\end{pmatrix}
\xrightarrow{\cnotgate_{A_2}(3,4)}
\begin{pmatrix}
I & A_1{+}A_2 & B & 0 \\
0 & A & B & 0  \\
\end{pmatrix}
\]

\[
\text{Round~7:}
\xrightarrow{\cnotgate_{A_3}(1,2)}
\begin{pmatrix}
I & A & B & 0 \\
0 & A & B & 0  \\
\end{pmatrix}.
\]

\noindent This is the desired transformation.

So far, we have not considered the action of the SC circuit on the logical qubits of the code.  Let us show that this action is trivial.  Indeed, consider some $X$-type logical operator $X(v)$, where $v\,{\in}\, \FF_2^n$. Write $v\,{=}\,(u,w)$ where $u$ and $w$ are restrictions of $v$ onto the registers 2 and 3 respectively. Commutativity between $X(v)$ and any $Z$-type check operator implies 
\[
uB+wA=0.
\]
Here we consider $u$ and $w$ as row vectors.  Extending $v$ by zeroes on registers 1 and 4 gives the row vector $(0\; u \; w\; 0)$, where $0$ stands for the all-zero row vector of length $n/2$.  Let us follow the same chain of transformations as above starting from the initial vector $(0\; u \; w\; 0)$. All $\cnotgate$s controlled by the register 1, such as $\cnotgate_{A_2}(1,2)$ or $\cnotgate_{B_2}(1,3)$ in Eq.~(\ref{SCunitary_part}), have trivial action on the vector $(0\; u \; w\; 0)$ since all qubits of the control register are zeroes.  Such $\cnotgate$s can be omitted.  The remaining $\cnotgate$s in Eq.~(\ref{SCunitary_part}) such as $\cnotgate_{A_1}(3,4)$ or $\cnotgate_{B_1}(2,4)$ map the initial vector $(0\; u \; w\; 0)$ to $(0\; u \; w\; t)$ for some vector $t$ since the registers 2 and 3 always serve as the controls and the register 4 always serves as the target.  Rounds~1, 2, and 6 in Eq.~(\ref{SCunitary_part})  are equivalent to XORing vectors $wA_1$, $wA_3$, and $wA_2$ respectively to the register 4.  Rounds~3, 4, and 5 in Eq.~(\ref{SCunitary_part}) are equivalent to XORing vectors $uB_1$, $uB_2$, and $uB_3$ respectively to the register 4.  Thus
\[
t = w(A_1{+}A_2{+}A_3) + u(B_1{+}B_2{+}B_3) = wA + uB = 0.
\]
We have shown that the SC circuit maps the vector  $(0\; u \; w\; 0)$ to itself. Hence the circuit acts trivially on logical $X$-type operators. 

To prove the correctness of $Z$-checks, observe that $Z$-checks can be mapped into $X$-checks by conjugation with Hadamards.  When the unitary circuit in \ifmain{\fig{syndromecircuit}}{Figure~\figsyndromecircuit in the main text} is conjugated with Hadamards, this flips controls and targets of all $\cnotgate$ gates.  Thus, to verify $Z$-checks, it suffices to perform a very similar calculation to the one already shown for $X$-checks.  We omit this calculation here.

%{\color{red} THIS PARAGRAPH NEEDS TO BE REWRITTEN: The randomized noisy simulations of our SM circuit suggest that the logical error probability per syndrome cycle scales as $p_L\sim p^{d/2}$ for small error rates $p$, see Figure~\ref{fig:small_ldpc}. Thus we conjecture that this circuit is distance-preserving.  This conjecture is further supported by estimating the minimum number of faulty operations resulting in a logical error using BP-OSD method, as detailed in Section~\ref{sec:decoder}.  The short depth of the single cycle, relying on only seven computational stages, also helps to keep the spread of errors under control.}

The SC circuit shown in \tab{syndromecircuit} is not unique in the following sense: we found 935 depth-7 alternatives to the unitary part of the SC circuit via a computer search. 
%SBB: there are much more than 935  alternatives unless we restrict each layer of gates to be cnot_{A_i} or cnot_{B_j}
These alternatives are obtained from the circuit defined in  Eq.~(\ref{SCunitary_part}) by applying the gate layers $\cnotgate_{A_i}$ and $\cnotgate_{B_j}$ in a different order. 
In the special case of the $[[144,12,12]]$ code,
numerical simulations show that all 936 variants of the syndrome cycle give rise to syndrome measurement circuits with distance $\dcirc\le 10$ explaining our focus on a specific circuit Eq.~(\ref{SCunitary_part}) which we conjecture to have distance $\dcirc\,{=}\,10$.  The short depth of the single cycle, relying on only seven computational stages, helps to keep the spread of errors under control.  Details of calculating upper bounds on the circuit-level distance are provided in \sec{decoder}.

\section{Decoder for the circuit-based noise model}
\label{sec:decoder}

So far we assumed that  the SM circuit is noiseless.
As shown in \sec{syndrome_circuit},
in this case all measured syndromes are zero and the circuit implements the logical identity gate.
Consider now what happens when each operation in the circuit including $\cnotgate$ gates, qubit initializations,
measurements, and idle qubits is subject to noise. To enable efficient decoding and
numerical simulations, we use
the standard circuit-based depolarizing noise model~\cite{fowler2009high}.
It assumes that each operation in the circuit is  ideal or faulty with the
probability $1-p$ or $p$ respectively. Here $p$ is a model parameter
called the error rate. Faults on different operations occur independently.
We define faulty operations as follows.
A faulty $\cnotgate$  is an ideal $\cnotgate$
followed by one of $15$ non-identity Pauli errors on the control and the target qubits
 picked uniformly at random. A faulty initialization
prepares a single-qubit state orthogonal to the correct one.
A faulty measurement is an ideal measurement followed by a classical
bit-flip error applied to the measurement outcome.
A faulty idle  qubit suffers from a Pauli error
$X$ or $Y$ or $Z$ picked uniformly at random.

To perform error correction one needs a decoder — a classical algorithm that takes as input
the measured  error syndrome and outputs a guess of the final Pauli error on the data qubits resulting from all
faults in the SM circuit. The error syndrome may itself be faulty
due to measurement errors. The decoder succeeds if the guessed Pauli error 
coincides with the actual error up to  a product of check operators.
In this case the guessed and the actual error have the same action on any logical state.

Let us show how to adapt Belief Propagation with an Ordered Statistics 
postprocessing step
Decoder (BP-OSD) proposed in~\cite{panteleev2021degenerate,roffe2020decoding}
to the circuit-based noise model.
The decoder consists of two stages.
The first stage takes as input a BB
code $\qc$ equipped with a SM circuit $\calU$  and an error rate $p$. 
It outputs a certain linearized noise model 
that ignores possible cancellations between errors generated by two or more faulty
operations in $\calU$.  This stage is analogous to computing
the decoding graph in error correction algorithms based on the surface code~\cite{fowler2012topological,higgott2023sparse}.
The second (online) stage of the decoder
 takes as input an
error syndrome measured in the experiment
and outputs a guess of the final error on the data qubits. 
This stage
 decodes the linearized noise model
using BP-OSD method~\cite{panteleev2021degenerate,roffe2020decoding}.
\edit{Our linearized noise model is conceptually similar to 
spacetime codes studied by Delfosse and Paetznick~\cite{delfosse2023spacetime}
and detector-based noise model proposed
by McEwen, Bacon, and Gidney~\cite{mcewen2023relaxing}. The online stage of our decoder closely follows Refs.~\cite{higgott2023improved,geher2023tangling}. In particular, Geh\'er, Crawford, and Campbell~\cite{geher2023tangling} applied BP-OSD to study tangled syndrome measurement circuits capable of measuring certain non-local check operators on a hardware with short-range
qubit connectivity.
Higgott et al~\cite{higgott2023improved} showed that the performance of the standard minimum-weight matching decoder can be enhanced by computing  prior error probabilities using BP-decoder as a preprocessing step. 
 }

We begin by describing the offline stage.
Consider a BB code  with parameters $[[n,k,d]]$ and let
$\calU$ be the SM
circuit 
constructed  in \sec{syndrome_circuit} with $N_c$ syndrome cycles.
The circuit $\calU$ contains $6nN_c$ $\cnotgate$s, $nN_c$ initializations and measurements, and $2n N_c$ idle qubit locations.
Let $\calU_1,\calU_2,\ldots,\calU_M$ be the list of all possible faulty realizations of $\calU$
with  exactly one faulty operation. If the faulty operation happens to be $\cnotgate$ or an idle qubit, 
one of the admissible Pauli errors for this operation is specified. 
A simple counting shows that $M=98nN_c$, where $98=15\cdot 6 + 1 + 1 + 3\cdot 2$
accounts for $15$ noisy realizations of each $\cnotgate$, $3$ realizations of memory errors
on idle qubits, noisy initializations and measurements.  
By definition, the list $\calU_1,\calU_2,\ldots,\calU_M$ includes all realizations of $\calU$
that can occur with the probability $O(p)$ in the limit $p\to 0$.
We simulate each circuit $\calU_j$ by propagating the chosen Pauli error towards the final time step
 taking into account qubit initialization and measurement errors (if any).
This simulation can be performed efficiently using the stabilizer formalism.
Let $s_j^U\in \{0,1\}^{nN_c}$ be the full measured syndrome of $\calU_j$ and $E_j$ be the final 
$n$-qubit Pauli error
on the data qubits generated by $\calU_j$. 
Let $s_j^F\in \{0,1\}^n$ be the syndrome of the final error $E_j$.
In other words, if we write $E_j=X(\alpha_j)Z(\beta_j)$ for some vectors $\alpha_j,\beta_j\in \{0,1\}^n$, then
\[
s_j^F= 
\left[
\ba{c}
H^Z \alpha_j \\ 
H^X \beta_j\\
\ea
\right].
\]
Here $H^X$ and $H^Z$ are the check matrices of the chosen code. 
Finally, let $s_j^L\in \{0,1\}^{2k}$ be a {\em logical syndrome} of the final error $E_j$ defined as follows.
Fix some basis set of logical Pauli operators $\overline{P}_1,\overline{P}_2,\ldots,\overline{P}_{2k}$ for the chosen code.
For example, $\overline{P}_1,\overline{P}_2,\ldots,\overline{P}_k$ could be logical $X$-type operators
and  $\overline{P}_{k+1},\overline{P}_{k+2},\ldots,\overline{P}_{2k}$ could be logical $Z$-type operators.
The $i$-th bit of $s^L_j$ is defined as 
\[
(s_j^L)_i = \left\{
\ba{rcl}
1 &\mbox{if}& E_j \overline{P}_i = -\overline{P}_i  E_j,\\
0  &\mbox{if}& E_j \overline{P}_i = \overline{P}_i  E_j,\\
\ea\right.
\]
for $i=1,\ldots,2k$. Note that the pair of syndromes $s_j^F,s_j^L$ uniquely determines the final error $E_j$
modulo check operators. Define a  pair of {\em decoding matrices} $D$ and $D^L$
of size $(nN_c+n) \times M$ and $2k\times M$ respectively
such that the $j$-th column of $D$ is
\[
\left[
\ba{c}
s^U_j  \\ 
s^F_j \\
\ea
\right]
\]
and the $j$-th column of $D^L$ is $s^L_j$.
Let $p_j$ be the probability of a Pauli error that occurred  in the circuit $\calU_j$.
We have $p_j=p/15$ if $\calU_j$ contains a faulty $\cnotgate$, 
$p_j=p/3$ if $\calU_j$ contains a faulty idle qubit,
and $p_j=p$ if $\calU_j$ contains a faulty qubit initialization or measurement.
Suppose $I \subseteq \{1,2,\ldots,M\}$ is a subset of columns of 
$D$ such that  triples of syndromes  $(s^U_j,s^F_j,s^L_j)$ are the same for all
$j\in I$. We merge all columns in $I$ to a single column 
 and assign the value
  $\sum_{j\in I} p_j$ 
to the bit-flip error probability associated with the merged column.
Let $M$ be the number of columns of $D$ after the merging step
and $p_1,p_2,\ldots,p_M$ be the respective error probabilities.

Next,  the decoding matrix $D$ is converted  to a sparse form.
To this end consider a faulty circuit $\calU_j$ and a sequence of syndromes
measured by $\calU_j$ on some check operator. Let this sequence be
$m=(m_1,m_2,\ldots,m_{N_c})\in \{0,1\}^{N_c}$. Since $\calU_j$ contains a single fault, 
the sequence $m$ has only a few locations where the measured
syndromes differ at two consecutive cycles.
For example, if $\calU_j$ contains a Pauli error on some idle data qubit between two syndrome cycles,
the $m$-sequence may look as $(0,0,\ldots,0,1,1,\ldots,1)$.
Such sequence can be made sparse if we represent it by a binary vector
\[
m'=(m_1,m_2\oplus m_1,m_3\oplus m_2,\ldots, m_{N_c}\oplus m_{N_c-1}) \in \{0,1\}^{N_c}.
\]
In other words, $m'$ stores changes in the measured syndrome at a given check operators
at each cycle. We convert the matrix $D$ to a sparse form by applying the map $m\to m'$ to the syndromes measured by each check operator for each faulty circuit $\calU_j$.

Let $\xi_1,\xi_2,\ldots,\xi_M \in \{0,1\}$ be independent random variables such that
$\xi_j$ takes values $0$ and $1$ with the probability $1-p_j$ and $p_j$
respectively.
Define a linearized noise model  that outputs a random triple $(s^U,s^F,E)$,
where
\[
E=\prod_{j=1}^M (E_j)^{\xi_j}
\]
 is an $n$-qubit Pauli error
and
\[
 \left[
\ba{c}
s^U  \\ 
s^F \\
\ea
\right]= 
\sum_{j=1}^M \xi_j  \left[
\ba{c}
s^U_j  \\ 
s^F_j \\
\ea
\right] {\pmod 2}
\]
is a binary vector that represents the error syndrome.
The linearized model is a simplified version of the circuit-based noise that
ignores possible cancellations among errors generated by two or more faulty
operations in $\calU$. Note that such errors occur with the probability only $O(p^2)$.
The decoder attempts to guess the final error $E$
acting on the data qubits
based on the syndrome $s^U$ measured in the experiment  making a simplifying assumption
that that the  pair $(s^U,E)$ was generated using the linearized noise model.
We additionally assume that the decoder
knows the syndrome $s^F$ of the final error $E$.
This syndrome can be acquired  by adding one noiseless  cycle
at the end of the syndrome measurement circuit,
which is a common practice in numerical simulations of error correction.
By definition, we have
\[
D\xi= \left[
\ba{c}
s^U  \\ 
s^F \\
\ea
\right].
\]
Here $\xi=(\xi_1,\xi_2,\ldots,\xi_M)$ is  a column vector and matrix-vector multiplication is modulo two. 
Define a minimum weight error $\xi^*=\xi^*(s)\in \{0,1\}^M$ as 
a solution of an optimization problem
\be
\label{MWD}
\xi^* = \arg \min_{\xi \in \{0,1\}^M} \; \sum_{j=1}^M \log{(1/p_j)} \xi_j \quad
\mbox{subject to} \quad 
D\xi = \left[
\ba{c}
s^U  \\ 
s^F \\
\ea
\right].
\ee
This problem is equivalent to the minimum weight decoding for a length-$M$ linear code with the check matrix $D$,
 bit-flip error probabilities $p_1,p_2,\ldots,p_M$,
and noiseless syndromes. 
Our guess of the unknown logical syndrome is 
\[
s^L = D^L \xi^*.
\]
Let $E^*$ be any $n$-qubit Pauli operator
with the syndrome $s^F$ and the logical syndrome $s^L$.
Note that $E^*$ is defined uniquely modulo multiplication by check operators.
The Pauli $E^*$ is our guess of the  final error on the data qubits.
Let $E$ be the actual final error on the data qubits generated by a noisy realization of $\calU$
without making any simplifications of the noise model.
By definition, Pauli operators  $E$ and $E^*$ have the same syndrome but they may differ
by a logical Pauli operator.
We declare a logical error if $E$ and $E^*$ differ by any non-identity logical operator
(there are $4^k-1$ choices of this logical operator).
Otherwise the decoding is  deemed successful.

It remains to explain how to solve the optimization problem Eq.~(\ref{MWD}).
Since the minimum weight decoding for a linear code
 is known to be NP-hard problem~\cite{berlekamp1978inherent}, finding the exact solution
of Eq.~(\ref{MWD}) might be practically impossible for problem instances with 
several thousand variables that we
have to deal with. 
Furthermore, estimation of the logical error probability $p_L$
by the Monte Carlo method  requires solving $O(1/p_L)$
instances of the problem Eq.~(\ref{MWD}).
This number can be quite large since $p_L$ is a  small parameter. 
To address these challenges,  we employ the BP-OSD 
algorithm~\cite{panteleev2021degenerate,roffe2020decoding}.
Recall that 
belief propagation (BP)  is a heuristic message passing algorithm
aimed at  computing single-bit marginals
of a probability distribution 
\[
P(\xi|\sigma) =\left\{
\ba{rcl}
\frac1{\calZ} \prod_{j=1}^M (1-p_j)^{1-\xi_j} p_j^{\xi_j} & \mbox{if} & D\xi = \sigma,\\
0 && \mbox{otherwise}.\\
\ea
\right.
\]
Here  $\xi \in \{0,1\}^M$ and
 $\calZ$ is a normalization factor chosen such that $\sum_{\xi \in \{0,1\}^M} P(\xi|\sigma)=1$.
In our case $\xi$ represents an unknown error in the linearized
noise model, $D$ is the decoding matrix constructed above,
and $\sigma =\left[
\ba{c}
s^U  \\ 
s^F \\
\ea
\right]$ is the measured error syndrome.
Let $q_j\in [0,1]$ be an estimate of the marginal probability $\mathrm{Pr}[\xi_j=1]$
obtained by the belief propagation method with some fixed number of message
passing iterations. The 
ordered statistics
post-processing step
examines {\em information sets}
which are subsets of bits  $I \subseteq [M]$
such that the linear system $D\xi =\sigma$ has a unique solution $\xi$
supported on $I$, that is, $\xi_j=0$ for all $j\notin I$.
Information sets are ranked according to their {\em reliability}
which is defined as 
\[
\rho(I)= \prod_{j\in I} \max{(q_j,1-q_j)}.
\]
BP-OSD finds an information set $I$ with the largest reliability using a
greedy algorithm~\cite{panteleev2021degenerate}.  The final output
of BP-OSD is a 
solution of the system $D\xi=\sigma$
supported on the most reliable information set $I$.
We replace the minimum weight error $\xi^*$ in Eq.~(\ref{MWD})
by the solution $\xi$
proposed by BP-OSD.

Since BB LDPC codes are of CSS-type, it is natural to decode $X$-type and $Z$-type errors independently.
Accordingly, we solve the minimum weight decoding problem Eq.~(\ref{MWD}) twice
with a pair of decoding matrices $D_X$ and $D_Z$ constructed as above but including only the syndromes of $X$-type and $Z$-type check operators respectively. 
This results in guessed $X$-type and $Z$-type errors $E^*_X$ and $E^*_Z$.
The guessed final error is $E^*=E^*_X E^*_Z$.
We empirically observed that the resulting decoding matrices $D_X$ and $D_Z$
are $(6,35)$-sparse for any BB code, meaning that 
there are at most $6$ nonzeros in each column and at most $35$ nonzeros
in each row of $D_X$ and $D_Z$.
The number of columns  scales as $O(nN_c)$ where the constant coefficient
depends on a particular code. For example, decoding matrices $D_X$ and $D_Z$
describing the code $[[144,12,12]]$ with $N_c=12$ syndrome cycles
have $8857$ and $8785$ columns respectively.

We also employed BP-OSD to compute an upper bound on the code distance $d$.
Consider a  CSS-type LDPC code $[[n,k,d]]$ with check matrices
$H^X$ and $H^Z$.  Assume for simplicity that this code has the same
distance for $X$- and $Z$-type errors (this assumption is satisfied for BB
LDPC codes due to \ifmain{\lem{kd}}{Lemma~\lemkd in the main text}).
Suppose
$Z(\xi)$ is a minimum weight logical $Z$-type operator.
Then $\xi \in \ker{H^X}$
and $\xi \notin \rs{H^Z}$
.
Let $X(\eta)$ be any logical $X$-type operator. 
Here $\eta \in \ker{H^Z}\setminus \rs{H^X}$.
Consider the following optimization problem:
\be
\label{MWD2}
d(\eta) =  \min_{\xi \in 
\ker{H^X}} \; \sum_{j=1}^n \xi_j
 \quad
\mbox{subject to} \quad
 \eta^T \xi = 1.
\ee
Then $d(\eta)\ge d$ for any logical
operator $X(\eta)$ and $d(\eta)=d$
if $X(\eta)$ anti-commutes with some minimum-weight logical operator $Z(\xi)$.
The latter event occurs with the probability $1/2$ if one
picks 
$\eta \in \ker{H^Z}$
 uniformly at random.
In this case
 $d(\eta)=d$
with the probability at least $1/2$
and $d(\eta)\ge d$ with certainty.
Let $d^{\mathsf{BP}}(\eta)$ be an upper bound on $d(\eta)$
obtained by solving the
optimization problem Eq.~(\ref{MWD2}) using BP-OSD method
with a parity check matrix $\left[ \ba{cc}
H^X \\
\eta^T\\
\ea
\right]$
and a syndrome $(0,0,\ldots,0,1)^T$.
We have $d^{\mathsf{BP}}(\eta)\ge d$
with certainty and
 $d^{\mathsf{BP}}(\eta)=d$
  with the probability $1/2$ whenever BP-OSD finds the optimal solution. 
Choose the number trials $T\gg 1$ and pick 
vectors $\eta^1,\eta^2,\ldots,\eta^T\in \ker{H^Z}\setminus \rs{H^X}$
uniformly at random. Then 
\[
d^{\mathsf{BP}} := \min_{a=1,2,\ldots,T} d^{\mathsf{BP}}(\eta^a)
\]
is an upper bound on the distance $d$ that can be systematically improved by 
increasing the number of trials $T$. 

Using the quantity $d^{\mathsf{BP}}$ as an efficiently computable proxy for the code distance
enabled us to search over a large number of candidate BB codes
with $n=O(100)$ qubits. The vast majority of these candidates was discarded due to an
insufficiently large upper bound $d^{\mathsf{BP}}$. 
This left only a few  viable candidates with a sufficiently large value of $d^{\mathsf{BP}}$.
The actual distance of each  candidate code was computed 
using the  integer linear programming method~\cite{landahl2011fault}.

We similarly 
computed an upper bound on the circuit-level distance $\dcirc$. 
Since the SM circuit can break the symmetry between $X$- and $Z$-type errors, the circuit-level distance has to be computed for both types of errors. 
For concreteness, let us discuss
the circuit-level distance 
$\dcirc^Z$ for $Z$-type errors.
The latter is defined as the minimum number of faulty operations in the SM circuit that can generate an undetectable $Z$-type logical error. 
The optimization problem 
Eq.~(\ref{MWD2}) is replaced by
\be
\label{MWD1}
\dcirc^Z(\eta) =  \min_{\xi \in
\ker{D_X}}
\; \sum_{j=1}^M \xi_j \quad
\mbox{subject to} \quad
\eta^T \xi = 1,
\ee
where $D_X$ is the decoding matrix constructed above
and $\eta\in \{0,1\}^M$
is a random linear combination of rows of $D_X$ and rows of $D_L$ that represent logical $X$-type operators. 
Then $\dcirc(\eta)\ge \dcirc^Z$
with certainty and $\dcirc(\eta)=\dcirc^Z$
with the probability at least $1/2$.
Solving the optimization problem
Eq.~(\ref{MWD1}) using BP-OSD method for many  random choices of the vector $\eta$ and taking the minimum value of $\dcirc^Z(\eta)$
provides an upper bound
on $\dcirc^Z$.
One can similarly compute an upper
bound on the circuit-level distance $\dcirc^X$ for $X$-type errors. This provides an upper bound on $\dcirc=\min{(\dcirc^X,\dcirc^Z)}$.

\section{Proof of Lemma 1}

For convenience of the reader we restate the lemma below.
\setcounter{lemma}{0}
\begin{lemma}
The code $\qc$ has parameters $[[n,k,d]]$, where
\[
n = 2\ell m,\quad 
k = 2\cdot\mathrm{dim}\left(\ker{A} \cap \ker{B}\right),
\quad \mbox{and} \quad
d= \min\bigl\{ |v|{:} \,\, v\in \ker{H^X}{\setminus} \rs{H^Z} \bigr\}.
\]
The code offers equal distance for $X$-type and $Z$-type errors.
\end{lemma}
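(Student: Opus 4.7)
The plan is to apply the standard CSS code formulas and then exploit a discrete reflection symmetry on the group $\ZZ_\ell \times \ZZ_m$ that interchanges $H^X$ and $H^Z$ up to invertible transformations. The length $n = 2\ell m$ is immediate from the fact that $A, B$ are $\ell m \times \ell m$ and $H^X = [A|B]$ has $2\ell m$ columns, one per data qubit.

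For the dimension, I would start with the CSS formula $k = n - \rk(H^X) - \rk(H^Z)$ and aim to show $\rk(H^X) = \rk(H^Z)$. The key tool is the involutory ``reversal'' permutation $C_\ell$ of size $\ell$ defined by $C_\ell e_i = e_{-i \bmod \ell}$, which conjugates $S_\ell$ to $S_\ell^T$. Setting $C = C_\ell \otimes C_m$, I would verify that $A^T = CAC$ and $B^T = CBC$ (both reduce to the single identity on shift matrices since $A, B$ are polynomials in $x, y$). This lets me factor
\[
H^Z = [B^T | A^T] = C \, H^X \begin{pmatrix} 0 & C \\ C & 0 \end{pmatrix},
\]
so $H^Z$ is obtained from $H^X$ by left- and right-multiplication with invertible matrices, giving $\rk(H^X) = \rk(H^Z)$. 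Then $k = n - 2\rk(H^Z) = 2(n/2 - \rk(H^Z)) = 2 \dim \ker (H^Z)^T$, and since $(H^Z)^T$ is the vertical stacking of $B$ and $A$, its kernel is $\ker A \cap \ker B$, giving the stated expression for $k$.

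For the distance, I would invoke the standard CSS decomposition $d = \min(d^X, d^Z)$ where $d^X$ and $d^Z$ are defined by the minimization over $\ker H^Z \setminus \rs{H^X}$ and $\ker H^X \setminus \rs{H^Z}$ respectively, and show $d^X = d^Z$ by constructing a weight-preserving bijection between minimum-weight logical representatives of the two types. Specifically, given a minimum-weight $X$-logical $X(f)$ with $f = (\alpha, \beta) \in \ker H^Z$, I pick any anti-commuting $Z$-logical $Z(g)$ with $g = (\gamma, \delta) \in \ker H^X$ and $f^T g = 1$, and I define
\[
h = (C\beta, C\alpha), \qquad e = (C\delta, C\gamma).
\]
Using $A^T = CAC$ and $B^T = CBC$, direct substitution shows $H^X h = 0$ and $H^Z e = 0$, and $h^T e = \alpha^T \gamma + \beta^T \delta = f^T g = 1$, so $Z(h)$ is a non-trivial $Z$-logical operator with $|h| = |\alpha| + |\beta| = |f| = d^X$ (since $C$ is a permutation and preserves Hamming weight). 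Hence $d^Z \le d^X$; a symmetric argument swapping $X \leftrightarrow Z$ gives $d^X \le d^Z$, so $d^X = d^Z = d$, which is by definition the expression stated in the lemma.

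The main obstacle will be verifying the identity $A^T = CAC$ (and the analogous one for $B$) cleanly: this reduces to checking that $C_\ell S_\ell^a C_\ell = (S_\ell^a)^T = S_\ell^{-a}$ for each monomial and that these identities extend linearly through the sum $A = A_1 + A_2 + A_3$; once this is in hand, everything else is bookkeeping. A secondary point to be careful about is confirming that the logical operator $Z(h)$ constructed from $X(f)$ is non-trivial (i.e., not in $\rs{H^Z}$) --- this is where $h^T e = 1$ is essential, since it certifies that $Z(h)$ anti-commutes with the genuine logical operator $X(e)$ and therefore cannot be a stabilizer.
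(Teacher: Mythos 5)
Your proposal is correct and follows essentially the same route as the paper's proof: the same reversal permutation $C=C_\ell\otimes C_m$ with $A^T=CAC$, $B^T=CBC$, the same factorization $H^Z=CH^X\left[\begin{smallmatrix}0&C\\ C&0\end{smallmatrix}\right]$ to get $\rk{H^X}=\rk{H^Z}$ and hence $k=2\dim(\ker{A}\cap\ker{B})$, and the same weight-preserving map $f=(\alpha,\beta)\mapsto h=(C\beta,C\alpha)$ (with the anti-commuting partner $e=(C\delta,C\gamma)$ certifying non-triviality via $h^Te=1$) to establish $d^X=d^Z$. No gaps; the points you flag as needing care (the monomial identity $C_\ell S_\ell^a C_\ell=S_\ell^{-a}$ extended linearly, and non-triviality of $Z(h)$) are exactly the steps the paper handles the same way.
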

\begin{proof}
It is known~\cite{steane1996multiple,calderbank1996good} that 
\[
k = n - \rk{H^X}-\rk{H^Z}.
\]
We claim that $\rk{H^X}\,{=}\,\rk{H^Z}$.
Indeed, define a self-inverse permutation matrix $C_\ell$ of size $\ell \,{\times}\, \ell$ such that the $i$-th column of $C_\ell$ has a single nonzero entry equal to one at the row $j=-i{\pmod \ell}$. Define $C_m$ similarly
and let $C=C_\ell \otimes C_m$. Since $C_\ell S_\ell C_\ell = S_\ell^T$
and $C_m S_m C_m = S_m^T$, one gets
\be
\label{ABC2}
A^T = C AC \quad \mbox{and} \quad B^T = CBC.
\ee
Therefore one can write
\[
H^Z= [B^T|A^T] = [CBC |CAC] = C[A|B]\left[ \ba{cc} 0 & C  \\ C & 0\\ \ea \right]
=CH^X\left[ \ba{cc} 0 & C  \\ C & 0\\ \ea \right].
\]
Thus $H^Z$ is obtained from $H^X$ by multiplying on the left and on the right by invertible matrices.
This implies $\rk{H^X}=\rk{H^Z}$. Therefore
\begin{align*}
k& = n-2{\cdot}\rk{H^Z}=n-2{\left(\frac{n}2-\dim{(\ker{(H^Z)^T)})}\right)} = n - 2{\left(\frac{n}2-\dim{(\ker{A}\cap \ker{B})}\right)}\\
&= 2\cdot\mathrm{dim}\left(\ker{A} \cap \ker{B}\right).
\end{align*}
Here we noted that $H^Z$ has size $(n/2)\,{\times}\, n$ and $\ker{(H^Z)^T)}=\ker{A}\cap \ker{B}$ since $H^Z=[B^T|A^T]$.

It is known~\cite{steane1996multiple,calderbank1996good}  that 
a CSS code with check matrices $H^X$ and $H^Z$ has distance
$d\,{=}\,\min{(d^X,d^Z)}$, where 
$d^X$ and $d^Z$ are the code distances for $X$-type and $Z$-type errors
defined as
\[
d^X =  \min\bigl\{ |v|{:} \,\, v\in \ker{H^Z}{\setminus} \rs{H^X} \bigr\}
\quad \mbox{and} \quad
d^Z =  \min\bigl\{ |v|{:} \,\, v\in \ker{H^X}{\setminus} \rs{H^Z} \bigr\}.
\]
We claim that $d^Z\,{\le}\, d^X$.
Indeed, let $X(f)\,{=}\,\prod_{j=1}^n X_j^{f_j}$ be a minimum weight logical $X$-type Pauli operator
such that $|f|\,{=}\,d^X$. 
Then $H^Z f\,{=}\,0$ and $f\,{\notin}\, \rs{H^X}$.
Thus there exists a logical $Z$-type operator $Z(g)\,{=}\,\prod_{j=1}^n Z_j^{g_j}$ anti-commuting with $X(f)$. In other words, $H^X g\,{=}\,0$ and $f^T g \,{=}\, 1$. Here, $f$ and $g$ are length-$n$ binary vectors.
Write $f\,{=}\,(\alpha,\beta)$ and $g\,{=}\,(\gamma, \delta)$, where $\alpha,\beta,\gamma,\delta$ are length-$(n/2)$ vectors.
Conditions $H^Z f =0$ and $H^X g=0$ are equivalent to 
\be
\label{alpha_beta_gamma_delta}
B^T \alpha = A^T \beta \quad \mbox{and} \quad A \gamma = B \delta.
\ee
Here and below all arithmetics is modulo two.
Define length-$n$ vectors 
\be
e = (C\delta, C\gamma) \quad \mbox{and} \quad h=(C \beta, C\alpha).
\ee
From Eqs.~(\ref{ABC2},\ref{alpha_beta_gamma_delta}) one gets
\[
H^X h = [A|B]\left[\ba{c} C \beta \\ C \alpha\\ \ea\right]
=AC\beta+BC\alpha = C(A^T \beta + B^T \alpha) = 0.
\]
Likewise, 
\[
H^Z e = [B^T|A^T]\left[\ba{c} C \delta \\ C \gamma\\ \ea\right]
=B^T C\delta + A^T C\gamma = C(B \delta + A \gamma) = 0.
\]
Furthermore, 
\[
h^T e = \beta^T C C \delta + \alpha^T C C\gamma = \beta^T \delta + \alpha^T \gamma =f^T g=1.
\]
Thus $X(e)$ and $Z(h)$ are non-identity logical operators. 
It follows that $d^Z\le |h|$. We get
\[
d^Z\le |h|=|C\beta|+|C\alpha|=|\beta|+|\alpha|=|f|=d^X.
\]
Thus $d^Z\le d^X$. Similar argument shows that $d^X\le d^Z$, that is, $d^X=d^Z$.
\end{proof}
\noindent
We note that
the equality $d^X=d^Z$ can also be established using the machinery of
Ref.~\cite{panteleev2021quantum} 
by viewing $\qc$ as a Lifted Product code.

\section{Numerical simulation details}\label{sec:numerics}

%compare to surface codes, compare to earlier work on LDPC codes. For LDPC codes, \cite[Table I]{tremblay2022constant} implies they use 49 physical qubits per logical qubit to achieve $10^{-15}$ logical error with $10^{-4}$ physical error. It seems that the [[144,12,12]] code should achieve a similar logical error of $1.7926{\times}10^{-14}$ at the same physical error value, while using half the physical qubits compared to \cite{tremblay2022constant}.

Data reported in \ifmain{\fig{numerics} A)}{Figure~\figsmallldpc in the main text} was generated using BP-OSD software developed by Roffe et al.~\cite{roffe2020decoding,Roffe_LDPC_Python_tools_2022}.
The decoder was extended to the circuit-based noise model as described in \sec{decoder}.
The simulations were performed  using MIN-SUM belief propagation with the limit of $10,000$ iterations and 
 combination sweep version of OSD, as described in~\cite{roffe2020decoding}.
All data points except for those with the smallest error rate accumulated at least 100 logical errors to estimate the logical error rate  $p_L$ with the error bars $\approx p_L/10$.
The fitting formula
$p_L(p)=p^{\dcirc/2} e^{c_0 + c_1 p + c_2 p^2}$ 
with fitting parameters $c_0,c_1,c_2$
was proposed in~\cite{bravyi2013simulation} in the context of surface code simulations. We observed that the same fitting formula 
works well for BB LDPC codes. The fitting parameters $c_i$ of the considered codes are provided in \tab{fit}.
\edit{We note that the logical error rate achieved by the combination of a distance-preserving SM circuit
and an optimal decoder is expected to follow an exponential decay $p_L(p)=\exp[-d\cdot f(p)]$, where
$f(p)$ is an unknown function such that $f(p)\,{>}\,0$ in the sub-threshold regime.
The function $f(p)$ must have a logarithmic singularity $f(p) \approx -(1/2)\log{p}$ for small $p$ since one expects degree-$(d/2)$ error suppression for a distance-$d$ code. The fitting formula for $p_L(p)$ approximates the remaining non-singular terms in $f(p)$ by a low-degree polynomial in $p$. Coefficients of the polynomial are considered as fitting parameters. 
Since our SM circuit is not distance-preserving, the code distance $d$ in the fitting formula of Ref.~\cite{bravyi2013simulation} is replaced by the circuit-level distance $\dcirc$.}
%To estimate how rapidly the logical error probability per cycle saturates, we performed the simulation for the $[[144,12,12]]$
%code with a fixed error rate $p=0.004$ and $N_c\in \{2,3,\ldots,14\}$ syndrome cycles, see Figure~\ref{fig:saturation}. 
%This figure shows the ratio
%$p_L(p,N_c)/N_c$ as a function of %$N_c$. 
%The data confirms a common belief that the logical error probability per cycle can be estimated by simulating roughly
%$N_c=d$ syndrome cycles for a %distance-$d$ code.

%\begin{figure}[h]
%  \centering
%    \includegraphics[height=7cm]{%saturation_plot.png}
%     \caption{Logical error probability for syndrome cycle for the $[[144,12,12]]$ code with the error rate $p=0.004$.
%     Error bars are computed as $p_L/\sqrt{M}$, where $M$ is the number of logical errors accumulated for a given data point.}
%\end{figure}

Surface code data reported in \ifmain{\fig{numerics}~B)}{Figure~\figldpcvssurface in the main text} was generated using software developed by one of the authors and Alexander Vargo in~\cite{bravyi2013simulation}. The simulation was performed for the rotated surface code with parameters $[[d^2,1,d]]$,
where  
\edit{$d\in\{9,11,13,15\}$},
and the standard SM
 circuit~\cite{fowler2009high}.
Let $P_{L,1}$ be the logical error probability 
for the surface code encoding one logical qubit and SM circuit with $N_c=d$ syndrome cycles. 
Encoding $k=12$ logical qubits into $12$ separate patches of the surface code results in a logical error probability
\[
P_{L,12} =
 1-(1-P_{L,1})^{12}.
\]
\ifmain{\fig{numerics}~B)}{Figure~\figldpcvssurface in the main text} shows the logical error rate $p_L$ defined as the logical error probability per syndrome cycle,
\[
p_L = 1-(1-P_{L,12})^{1/N_c} = 1-(1-P_{L,1})^{12/d}.
\]

\begin{table}[t]
\begin{center}
\begin{tabular}{c|c|c|c}
\hline
$[[n,k,d]]$ & $c_0$ & $c_1$ & $c_2$ \\
\hline
\hline
$[[72,12,6]]$ & $11.09$  & $365.6$ &  $-16088$  \\
\hline 
$[[90,8,10]]$ & $15.08$  & $524.8$ &  $-12670$  \\
\hline
$[[108,8,10]]$ & $13.91$  & $895$ & $-46137$    \\
\hline
$[[144,12,12]]$ & $18.04$  &  $1337$ & $-96007$    \\
\hline
$[[288,12,18]]$ & $32.04$  & $3522$ & $-294482$    \\
\end{tabular}
\end{center}
\caption{Parameters
$c_0,c_1,c_2$ in the fitting formula 
$p_L(p)=p^{\dcirc/2} e^{c_0 + c_1 p + c_2 p^2}$ 
for BB LDPC codes shown in \ifmain{\tab{codes_intro}}{Table~\tabcodesintro in the main text}.}
\label{table:fit}
\end{table}

\section{Logical memory capabilities}\label{sec:logicals}

In this section we give evidence that BB LDPC codes have the required features for an effective quantum memory or storage unit. Although there are few ways of performing computations on stored qubits, there are fault tolerant operations for initialization and measurement of individual qubits, and most importantly transfer of data into and out of the code via quantum teleportation. These capabilities are based on a combination of two different techniques.
%SBB: the next sentence is a bit confusing. Do we perform syndrome measurements using techniques from ref 47 ? PR: Fixed.
First, we follow \cite{breuckmann2022foldtransversal} to derive fault tolerant unitary operations that require only the connectivity already necessary to perform syndrome measurements. Second, we give low-overhead extensions of the Tanner graph based on work by  \cite{cohen2022lowoverhead} which enable measurement of a single logical operator while preserving the thickness-2 implementability criterion. Together, these capabilities allow us to address any logical qubit.

\begin{figure}[b!]
  \centering
    \includegraphics[height=7cm]{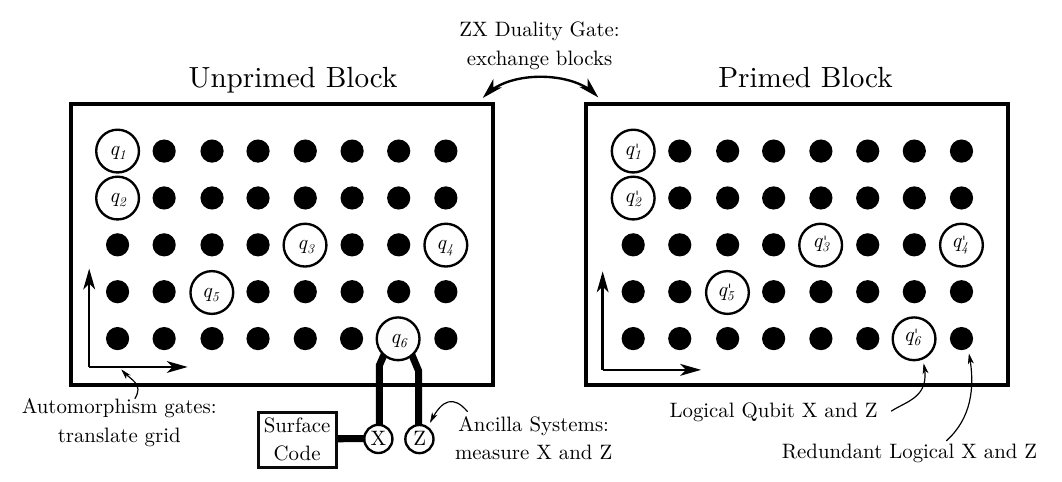}
     \caption{Conceptual diagram depicting the manner by which logical operators can be loaded into and out of a BB LDPC code. In \ssec{logical_pauli_operators} we derive that there are two blocks of logical Pauli operators corresponding to a 2d grid. Some subset of these grid elements can be chosen as logical qubits (large dots) and the other elements correspond to various Pauli products (small dots). In \ssec{automorphisms} we show that there are fault tolerant logical gates based on automorphisms that translate the grid of operators within each block, and in \ssec{ZX_duality} we give a fault tolerant gate based on a ZX-duality \cite{breuckmann2022foldtransversal} that swaps the two blocks. Finally, in \ssec{logical_measurements} we show that there exists an ancilla system based on \cite{cohen2022lowoverhead} that can measure one logical $X$ operator. We can think of this system as a probe that can access one logical qubit. Together, these operations allow external access of every logical qubit and many of their products.} 
	  
  \label{fig:ops_summary}
\end{figure}

A conceptual representation of the logical operators is shown in \fig{ops_summary}. We first derive logical Pauli operators for BB LDPC codes, and find that the logical qubits divide into an ``unprimed'' and a ``primed'' block with equal size and symmetrical structure. We visualize the primed and unprimed block as two sheets featuring a 2D grid of logical operators. Some of these grid cells contain one of the $k/2$ logical qubits per sheet.

Next, we show that there exists a set of fault tolerant depth-four circuits that implement a small family of commuting logical $\cnotgate$ circuits. These gates are based on automorphisms of the code: permutations of the data qubits that commute with the stabilizer. Based on their group structure we can think of the automorphism gates as translations of a 2D grid of operators within each of the primed and unprimed blocks. Furthermore, we follow  \cite{breuckmann2022foldtransversal} to derive a fault tolerant operation based on a ZX-duality that also allows us to swap the two blocks while also applying Hadamard gates to all qubits.

Finally, we show how to leverage techniques from \cite{cohen2022lowoverhead} to extend the Tanner graph to a larger Tanner graph allowing fault-tolerant measurement of one logical $X$ and one logical $Z$ operator. Various subgraphs of this extended Tanner graph contain either this logical $X$ or $Z$ operator as a stabilizer. This construction acts as a ``probe'' that gives us access to one of the logical qubits. 

Measurements of both logical $X$ and $Z$ operators on any qubit can be realized by conjugating this measurement by gates based on automorphisms and the ZX-duality. We can think of this as shifting any desired qubit to be the target of the probe using translation and exchange of the two blocks.

Logical $X$ and $Z$ measurement of any logical qubit also enables transfer of data into and out of the code using a teleportation circuit. This teleportation can be realized through a product measurement of the logical Pauli in the BB code, and a logical Pauli in another quantum error correction code. While the Tanner graph of the BB code demands thickness-2, we show how the ancilla system corresponding to the logical $X$ measurement can be implemented in an ``effectively planar'' Tanner graph. This makes it possible two connect this ancilla system to another quantum error correction code, like a surface code, within a thickness-2 implementation. This capability indicates the suitability of BB LDPC codes as a fault tolerant quantum memory.

\edit{On the other hand, this construction incurs rather significant resource overhead that undercuts the compactness of the error correction codes introduced in this paper. For example, to equip the $[[144,12,12]]$ code with ancilla systems capable of measuring $X$ and $Z$, we require $2 \,{\times}\, 30 \,{\times}\, (2d{-}1) = 1380$ additional qubits on top of the original 288. However, the argument for fault-tolerance from \cite{cohen2022lowoverhead} is designed to be very broadly applicable, and hence may demand excessively many resources for any particular error correction code. We consider it very likely that the size of these systems can be significantly reduced. We leave resource optimization of this scheme for future work.}

\subsection{Logical Pauli Operators}
\label{ssec:logical_pauli_operators}

In this section we derive that the logical Pauli matrices of BB LDPC codes split into a ``primed'' and an ``unprimed'' block with $|\mathcal{M}| = \ell m$ many $X$ operators and $Z$ operators each. Operators in the primed block commute with operators in the unprimed block, and the two blocks have identical commutation structure.

We begin by introducing some new notation for Pauli matrices acting on the data qubits. We denote with $\mathbb{F}_2^{\mathcal{M}}$ the set of polynomials over $\mathbb{F}_2$ with monomials from $\mathcal{M}$. This is equivalent to the quotient ring obtained from $\mathbb{F}_2[x,y]$ by identifying $x^\ell= y^m = 1$. With $x = S_m\otimes I$ and $y = I \otimes S_\ell$, the elements of $\mathbb{F}_2^{\mathcal{M}}$ have natural matrix representations, and can also be interpreted as sets since the coefficient on any particular $x \in \mathcal{M}$ is either 0 or 1.

For $P,Q \in \mathbb{F}_2^{\mathcal{M}}$, we can consider the set of qubits $q(L,\alpha)$ for $\alpha \in P$ and $q(R,\beta)$ for $\beta \in Q$.   We write $X(P,Q)$ to denote a Pauli matrix acting as $X$ on this collection of qubits, and identity elsewhere. Similarly, $Z(P,Q)$ denotes $Z$ acting on $q(L,\alpha)$ for $\alpha \in P$ and $q(R,\beta)$ for $\beta \in Q$. For example, we can recall that $q(L,\beta)$ is connected to $q(X,\alpha)$ whenever $\beta \in A\alpha$, and see that the stabilizer corresponding to $q(X,\alpha)$ becomes $X(\alpha A, \alpha B)$. Similarly, the stabilizer corresponding to $q(Z,\alpha)$ can be written as $Z(\alpha B^T,\alpha A^T)$. There is also the following useful fact:

\begin{lemma} $X(P,Q)$ anticommutes with $Z(\bar P,\bar Q)$ if and only if $1 \in P\bar P^T + Q\bar Q^T$.
\end{lemma}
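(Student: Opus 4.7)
The plan is to unpack the notations and reduce the claim to the standard fact that two Pauli operators (built purely from $X$'s and $Z$'s) anticommute iff they overlap on an odd number of qubits where one acts by $X$ and the other by $Z$. Since $X(P,Q)$ and $Z(\bar P,\bar Q)$ have disjoint types on the two registers ($X$ here vs.\ $Z$ there), I just need to count, modulo two, the qubits in $q(L,\cdot)$ at which both act nontrivially plus the qubits in $q(R,\cdot)$ at which both act nontrivially. This overlap count is exactly $|P\cap \bar P| + |Q\cap \bar Q| \pmod 2$, where I interpret each polynomial as the support set of monomials with coefficient $1$.

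Next I would translate that overlap into the constant coefficient of the polynomial $P\bar P^T + Q\bar Q^T$. The only slightly non-routine observation is what $\bar P^T$ means inside $\mathbb{F}_2^{\mathcal{M}}$: in the matrix representation we have $x = S_\ell \otimes I$ and $y = I \otimes S_m$, and both $S_\ell$ and $S_m$ are permutation matrices whose transposes equal their inverses. Hence for any monomial $\alpha = x^ay^b$ one has $\alpha^T = x^{-a}y^{-b} = \alpha^{-1}$, and by linearity $\bar P^T = \sum_\alpha \bar P_\alpha \alpha^{-1}$. Multiplying out,
\[
P\bar P^T \;=\; \sum_{\alpha,\beta} P_\alpha \bar P_\beta \, \alpha\beta^{-1},
\]
so the coefficient of the identity monomial $1$ in $P\bar P^T$ is $\sum_\alpha P_\alpha \bar P_\alpha = |P\cap \bar P| \pmod 2$. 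The same identity holds with $Q,\bar Q$.

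Combining the two steps, the constant term of $P\bar P^T + Q\bar Q^T$ equals $|P\cap\bar P| + |Q\cap \bar Q| \pmod 2$, which in turn is the parity of the number of qubits on which the $X$-support of $X(P,Q)$ meets the $Z$-support of $Z(\bar P,\bar Q)$. That parity is $1$ iff the two Paulis anticommute, giving the equivalence claimed. The only place one could stumble is the identification $\alpha^T = \alpha^{-1}$ for monomials in $\mathcal{M}$; once that is written down, the rest is a one-line bookkeeping argument, so I do not anticipate any real obstacle beyond being careful with the notational conventions introduced just before the lemma.
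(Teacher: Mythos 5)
Your proposal is correct and follows essentially the same route as the paper's proof: reduce anticommutation to the parity of the overlap between the $X$- and $Z$-supports on the $q(L)$ and $q(R)$ registers, then identify that parity with the constant (identity-monomial) coefficient of $P\bar P^T + Q\bar Q^T$. Your explicit remark that $\alpha^T=\alpha^{-1}$ for monomials is exactly the observation the paper uses implicitly when it writes the constant term of $P\bar P^T$ as $\sum_\alpha p_\alpha \bar p_\alpha$, so there is no substantive difference.
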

%SBB: I've added a proof. This might help the reader to understand the monomial notations better. PR: Looks great, thank you!
\begin{proof}
Write 
\[
P=\sum_{\alpha\in \calM} p_\alpha \alpha,
\quad 
\bar{P}=\sum_{\alpha\in \calM} \bar{p}_\alpha \alpha,
\quad
Q=\sum_{\alpha\in \calM} q_\alpha \alpha,
\quad
\bar{Q}=\sum_{\alpha\in \calM} \bar{q}_\alpha \alpha,
\]
where  $p_\alpha,\bar{p}_\alpha,q_\alpha,\bar{q}_\alpha\in \FF_2$ are coefficients.
Pauli operators $X(P,Q)$ and $Z(\bar{P},\bar{Q})$ overlap on a qubit
$q(L,\alpha)$  iff $p_\alpha \bar{p}_\alpha=1$
and overlap on a qubit $q(R,\alpha)$ iff $q_\alpha\bar{q}_\alpha=1$.
Thus  $X(P,Q)$ and $Z(\bar{P},\bar{Q})$ anti-commute
iff  $\sum_{\alpha \in \calM} p_\alpha \bar{p}_\alpha + q_\alpha \bar{q}_\alpha$
is odd. 
We have 
\[
P\bar{P}^T = \sum_{\alpha \in \calM} p_\alpha \bar{p}_\alpha 1 + \ldots
\quad \mbox{and} \quad
Q\bar{Q}^T =  \sum_{\alpha \in \calM} q_\alpha \bar{q}_\alpha 1 + \ldots
\]
where dots represent all monomials different from $1$. By linearity,
\[
P\bar{P}^T + Q\bar{Q}^T
=\sum_{\alpha \in \calM}
(p_\alpha \bar{p}_\alpha
+q_\alpha \bar{q}_\alpha)1+\ldots.
\]
Thus $X(P,Q)$ and $Z(\bar{P},\bar{Q})$ anti-commute
iff $P\bar{P}^T + Q\bar{Q}^T$ contains the monomial $1$.
\end{proof}

Without loss of generality, we can express logical Pauli matrices as either $X(P,Q)$ or $Z(Q^T, P^T)$ via a choice of $P,Q \in \mathbb{F}_2^{\mathcal{M}}$. The operator $X(P,Q)$ commutes with the stabilizer $Z(\alpha B^T, \alpha A^T)$ whenever $1 \not\in P(\alpha B^T)^T + Q (\alpha A^T)^T = \alpha^T (PB + QA) $. This is equivalent to $\alpha \not\in PB + QA$. Since we must have $\alpha \not\in PB + QA$ for all $\alpha$, we see that $X(P,Q)$ commutes with the stabilizer whenever $PB + QA$ vanishes. Similarly we can derive that $Z(Q^T, P^T)$ commutes with the stabilizer when $PB + QA = 0$. 

We aim to construct a family of solutions to $PB + QA = 0$ which give rise to a basis of logical qubits defined by a set of operators $\{\bar X_1, \bar X_2, ..., \bar Z_1, \bar Z_2, ... \}$ with the correct commutation relations.  To do so, let us make some observations about Pauli operators defined via solutions to $PB + QA = 0$. First, if $P,Q$ are a solution, then so are $\alpha P, \alpha Q$ for any $\alpha \in \mathcal{M}$, so each $P,Q$ immediately gives rise to a family of $|\mathcal{M}| = lm$ logical operators for both $X$ and $Z$. Second, consider using the same $P,Q$ to define both $X(\alpha P, \alpha Q)$  and $Z(\beta Q^T, \beta P^T)$. Then these operators always commute because $\beta \alpha^T \in PQ + QP = 0$ never holds. So we require at least two solutions to $PB+QA = 0$ to define a set of operators with nontrivial commutation relations. 

For reasons described later in \ssec{logical_measurements}, we would like a logical $X$ operator with no support on $q(R)$. To this end, we select $f,g,h \in \mathbb{F}_2^{\mathcal{M}}$ that satisfy $Bf = 0$ and $gB + hA = 0$, yielding two solutions to the equation $PB+QA = 0$ with $P,Q = f,0$ and $P,Q = g,h$. These yield the following family of logical operators for all $\alpha \in \mathcal{M}$:
\begin{equation}
\begin{aligned}
	 \bar X_\alpha &:= X(\alpha f,0) \hspace{1.2cm} \bar Z_\alpha := Z(\alpha h^T, \alpha g^T)\\
    \bar X'_\alpha &:= X(\alpha  g, \alpha  h) \hspace{1cm} \bar Z'_\alpha := Z(0, \alpha f^T) 
\end{aligned}\label{eq:logicalpaulis}
\end{equation}

For all $\alpha,\beta $, we see that $\bar X_\alpha, \bar Z'_\beta$ always commute because $f0^T + 0f^T = 0$, and $\bar X'_\beta,\bar Z_\alpha$ always commute because $gh+hg=0$. Furthermore, $\bar X_\alpha, \bar Z_\beta$ and $\bar X'_\alpha, \bar Z'_\beta$ form anticommuting pairs when $\alpha^T\beta \in fh$. We see that we have constructed two independent blocks of logical operators with symmetrical structure. It follows that each of these blocks must contain a set of operators that define $k/2$ qubits. We name these the ``unprimed'' and ``primed'' logical blocks with $\bar X_\alpha, \bar Z_\beta$ and $\bar X'_\alpha, \bar Z'_\beta$ respectively.

Not all choices of $f,g,h$ span all $k$ logical qubits, but valid choices are readily enumerated in software. Solutions to $Bf = 0$ and $gB + hA = 0$ correspond to null spaces of $B$ and $\begin{bmatrix}B \\ A\end{bmatrix}$ respectively, which can be constructed by Gaussian elimination. Gaussian elimination can also be used to check if the operators $\bar X_\alpha, \bar Z_\alpha, \bar X'_\alpha, \bar Z'_\alpha$ together span $k$ qubits up to the stabilizer. We find all codes in \ifmain{\tab{codes}}{Table~\tabcodes in the main text} admit several such choices of $f,g,h$. In \tab{f_polys} we show a particularly favorable choice of $f,g,h$ for the $[[144,12,12]]$ code where the resulting logical operators have minimum weight.

%SBB: perhaps we should clarify what do we mean by the weight of a polynomial.  PR: or we could just refer to the weight of X(f,0) instead. % PR: Yup!

To identify logical qubits we can enumerate choices of monomials $\{n_1, n_2, \ldots,n_{k/2}\}$ and $\{m_1, m_2, \ldots,m_{k/2}\}$ such that $n_i^T m_j \in fh$ exactly when $i = j$.  That way, $\bar X_{n_i}, \bar Z_{m_i}$ as well as $\bar X'_{n_i}, \bar Z'_{m_i}$  for $i = 1...k/2$ form a set of $k$ logical qubits: $\bar X_{n_i}$ anticommutes with $\bar Z_{m_j}$ exactly when $i = j$. A brute force search readily finds choices of $\{n_i\},\{m_i\}$.

\begin{table}
	\begin{center}
\begin{tabular}{|c|c|c|}
	\multicolumn{3}{c}{Polynomials for Logical Pauli Matrices in the $[[144,12,12]]$ Code}\\[2mm]\hline
	$f$ & $g$ & $h$ \\ \hline
	\resizebox{0.3\textwidth}{!}{\begin{tikzpicture}
			\draw (-0.4,-0.75) node[anchor=south west] {$x$};
			\draw (-0.75,-0.45) node[anchor=south west] {$y$};
			\foreach \x in { -1,0,1,2,3,4,5,6,7,8,9,10,11 }
			\draw (\x / 2 + 0.5,-0.5) -- (\x / 2 + 0.5,3.0);
			\foreach \x in { 0,1,2,3,4,5,6,7,8,9,10,11 }
			\draw (\x / 2 + 0.25,-0.5) node[anchor=south] { \x };
			\foreach \y in { -1,0,1,2,3,4,5 }
			\draw (-0.5, \y / 2 + 0.5) -- (6.0, \y / 2 + 0.5);
			\foreach \y in { 0,1,2,3,4,5 }
			\draw (-0.5, \y / 2 + 0.25) node[anchor=west] { \y };
			\foreach \x / \y in { 1 / 0,0 / 0,2 / 0,3 / 0,6 / 0,7 / 0,8 / 0,9 / 0,1 / 3,5 / 3,7 / 3,11 / 3 }
			\draw[fill=black] (\x / 2 + 0.25,\y / 2 + 0.25) circle (0.15);
			\draw[transparent] (-0.5, 3.25) -- (6.0,3.25);
	\end{tikzpicture}}  & \resizebox{0.3\textwidth}{!}{\begin{tikzpicture}
			\draw (-0.4,-0.75) node[anchor=south west] {$x$};
			\draw (-0.75,-0.45) node[anchor=south west] {$y$};
			\foreach \x in { -1,0,1,2,3,4,5,6,7,8,9,10,11 }
			\draw (\x / 2 + 0.5,-0.5) -- (\x / 2 + 0.5,3.0);
			\foreach \x in { 0,1,2,3,4,5,6,7,8,9,10,11 }
			\draw (\x / 2 + 0.25,-0.5) node[anchor=south] { \x };
			\foreach \y in { -1,0,1,2,3,4,5 }
			\draw (-0.5, \y / 2 + 0.5) -- (6.0, \y / 2 + 0.5);
			\foreach \y in { 0,1,2,3,4,5 }
			\draw (-0.5, \y / 2 + 0.25) node[anchor=west] { \y };
			\foreach \x / \y in { 1 / 2,0 / 4,2 / 3,1 / 0,0 / 2,2 / 1 }
			\draw[fill=black] (\x / 2 + 0.25,\y / 2 + 0.25) circle (0.15);
			\draw[transparent] (-0.5, 3.25) -- (6.0,3.25);
	\end{tikzpicture}}  & \resizebox{0.3\textwidth}{!}{\begin{tikzpicture}
			\draw (-0.4,-0.75) node[anchor=south west] {$x$};
			\draw (-0.75,-0.45) node[anchor=south west] {$y$};
			\foreach \x in { -1,0,1,2,3,4,5,6,7,8,9,10,11 }
			\draw (\x / 2 + 0.5,-0.5) -- (\x / 2 + 0.5,3.0);
			\foreach \x in { 0,1,2,3,4,5,6,7,8,9,10,11 }
			\draw (\x / 2 + 0.25,-0.5) node[anchor=south] { \x };
			\foreach \y in { -1,0,1,2,3,4,5 }
			\draw (-0.5, \y / 2 + 0.5) -- (6.0, \y / 2 + 0.5);
			\foreach \y in { 0,1,2,3,4,5 }
			\draw (-0.5, \y / 2 + 0.25) node[anchor=west] { \y };
			\foreach \x / \y in { 0 / 3,0 / 2,1 / 3,0 / 1,0 / 0,1 / 1 }
			\draw[fill=black] (\x / 2 + 0.25,\y / 2 + 0.25) circle (0.15);
			\draw[transparent] (-0.5, 3.25) -- (6.0,3.25);
	\end{tikzpicture}}   \\ \hline
	\multicolumn{3}{c}{}\\
\end{tabular}
	\end{center}
	\caption{\label{table:f_polys} Choices of polynomials $f,g,h$ such that $\bar X_\alpha := X(\alpha f,0)$ and $\bar Z_\alpha = Z(\alpha h^T, \alpha g^T)$ as defined in equation \ref{eq:logicalpaulis} are minimum-weight logical Pauli operators. The dots represent the monomials of the form $x^iy^j$ with coefficient 1.  If we let $\{n_i\} = \{1, y, x^2y, x^2y^5, x^3y^2, x^4\}$ and $\{m_i\} = \{ y, y^5, xy, 1, x^4, x^5y^2 \}$, then $\bar X_{n_i},\bar Z_{m_j}$ anticommute exactly when $i = j$. When used to construct an ancilla system as in Section~\ref{ssec:logical_measurements}, these polynomials give a system with 60 qubits per layer.
	} 
\end{table}
%SBB: what is the meaning of "Cost" in this table ? I guess the cost has something to do with the size of the ancilla system. PR: Ah, I described in the main text but it looks like I forgot to describe it in the table caption.

\subsection{Logical Gates based on Automorphisms}
\label{ssec:automorphisms}

An automorphism of an error correction code is a permutation of the physical qubits that is equivalent to a permutation of the checks 
%SBB: added a comment. PR: Yes, this works. We could also say "a permutation of the  ... that commutes with the stabilizer." Maybe that's simpler?
(more generally, an automorphism can map a check operator to a product of check operators).  
We focus on permutations that are implementable using fault tolerant circuits within the connectivity already required for syndrome measurements.

The existing connectivity admits some natural fault tolerant circuits implementing a particular family of permutations on the data qubits. BB LDPC codes feature two data registers $q(L),q(R)$ and two
%SBB: perhaps we should use the term "check register" instead of "ancillary register" for consistency with other sections. PR: agreed.
check registers $q(X),q(Z)$. We consider circuits that transfer the qubits from the data registers to the ancilla registers, and back again on a different path. The adjacency matrices describing the connectivity between the data and the ancilla registers are given by $A$ and $B$, which are the sum of three monomials $A_1, A_2, A_3$ and $B_1,B_2,B_3$ in $\mathcal{M}$. Each monomial is a permutation and thus describes a vertex-disjoint set of edges between the data and ancilla block. Hence, all swaps along these edges can be parallelized.
In a single circuit we can either swap along the edges defined by $A$ which are $q(L)\leftrightarrow q(X)$ and $q(R)\leftrightarrow q(Z)$, or along edges defined by $B$ which are $q(L)\leftrightarrow q(Z)$ and $q(R)\leftrightarrow q(X)$. See also \ifmain{\fig{navigation} A)}{Figure~\fignavigation A) in the main text}. 
%SBB: Figure 5C does not exist, as far as I can tell. PR: Fixed.

The monomial defining the particular set of edges in each of these sets of swaps can be chosen independently for each stage of the permutation (data $\to$ ancilla or ancilla $\to$ data), and on each side of the Tanner graph. 
 %SBB: if I understand correctly, we can go from q(L) to q(X) by applying A_j^T and then go from q(X) to q(L) by applying A_k. I don't see how to get from q(L) to q(X) by applying A_j. PR: Yes, some transposes were missing.
For example, we can select any $A_j, A_k, A_{j'}, A_{k'}$ and move $q(L) \to_{A_j^T} q(X) \to_{A_{k}} q(L)$ and simultaneously move $q(R) \to_{A_{k'}} q(Z) \to_{A_{j'}^T} q(R)$. However, we will see later that it is necessary to select $A_j = A_{j'}$ and $A_k = A_{k'}$. Furthermore, these swaps admit a standard optimization: if we initialize the check registers $q(X),q(Z)$ to the $|0\rangle$ state, then circuits implementing these permutations have $\cnotgate$ depth four. If we also reset qubits to the $|0\rangle$ state in between the swaps wherever possible, we obtain circuits whose errors cannot propagate between physical qubits, and are hence fault tolerant. See \tab{automorphisms}.

We now verify that the permutations implemented by the circuits described above are indeed automorphisms. After having applied an `A' type permutation based on $A_j,A_k$, the qubits are permuted by $q(L, \alpha) \leftrightarrow q(L,A^T_k A_j \alpha )$ and $q(R, \alpha) \leftrightarrow q(R,A^T_{k}A_{j}\alpha )$. We see that this transforms a Pauli matrix by $X(P,Q) \to X( A_j A_k^T P, A_j A_k^T Q  )$. Consequently, the stabilizers are transformed as $X(\alpha A, \alpha B) \to X(\alpha  A_j A_k^T A, \alpha A_j A_k^T B  )$, which is the same as permuting the $X$ checks by $\alpha \to \alpha A_j A_k^T$. The $Z$ stabilizers are also permuted by $\alpha \to \alpha A_j A_k^T$, so the described circuit indeed implements an automorphism. Notice also that this only works because the $q(L)$ and $q(R)$ blocks were transformed by the same $A_j A_k^T$. The `B' type permutations can be verified to be automorphisms in the same manner, permuting the checks by some $B_jB_k^T$.

These automorphisms allow us to fault tolerantly implement a subgroup of the Clifford gates. As we saw in \ifmain{\lem{connected}}{Lemma~\lemconnected in the main text}, shifts of the form $A_j A_k^T$ or $B_jB_k^T$ generate the entire group $\mathcal{M}$ whenever the Tanner graph is connected.
%SBB: I don't think we can implement all permutations in M. We can only implement shifts by x and y. Note that there are (ell m)! permutations of M and there are only ell m shifts. PR: Yes, the we cannot implement arbitrary permutations of M - we can only implement a group of permutations congruent to M. Maybe I can make this more clear....
Therefore, by leveraging these permutations as generators, we can perform all translations of the tori containing $q(L), q(R)$ using fault tolerant circuits of varying depth. An automorphism defined by an element $s \in \mathcal{M}$ transforms $\bar X_{n_i} \to \bar X_{sn_i}$, $\bar Z_{m_i} \to \bar Z_{sm_i}$ and similarly for the primed logical Pauli matrices.  This capability is critical for addressing all logical qubits.

We can also comment on the nature of these operations as logical gates, although they are less useful in this sense. There is one such operation per element in $\mathcal{M}$, and since $\mathcal{M}$ is Abelian the subgroup of Clifford gates implemented by these automorphisms must be Abelian as well.  A transformation of this form cannot act like the logical identity so all of these gates (except $s= 1$) are nontrivial. Since automorphism operations take $\bar X$ to $\bar X$ and $\bar Z$ to $\bar Z$, and they must hence be logical $\cnotgate$ circuits up to a logical Pauli correction. While it is not clear how to use these $\cnotgate$ circuits to facilitate useful computations, they may make for interesting test cases in an implementation.

\begin{table}[t]
\begin{center}
\begin{tabular}{cc}
\multicolumn{2}{c}{`A' type automorphism based on any $A_j,A_k$}\\
{\centering
\begin{minipage}{7cm}
	\begin{algorithmic}
	\State{}
	\For{$\alpha \in \mathcal{M}$}
	\State{$\initZ{q(X,\alpha)}$}
	\State{$\cnot{q(L,A_j\alpha )}{q(X,\alpha)}$}
	\State{$\cnot{q(X,\alpha)}{q(L,A_j\alpha)}$}
	\State{$\initZ{q(L,A_k^T\alpha)}$}
	\State{$\cnot{q(X,\alpha)}{q(L,A_k\alpha)}$}
	\State{$\cnot{q(L,A_k\alpha)}{q(X,\alpha)}$}
	\EndFor
	\State{}
	\end{algorithmic}
\end{minipage}}
&
{\centering
\begin{minipage}{7cm}
	\begin{algorithmic}
	\State{}
	\For{$\alpha \in \mathcal{M}$}
	\State{$\initZ{q(Z,\alpha)}$}
	\State{$\cnot{q(R,\alpha)}{q(Z,A_{j}\alpha)}$}
	\State{$\cnot{q(Z,A_{j}\alpha)}{q(R,\alpha)}$}
	\State{$\initZ{q(R,A_k\alpha)}$}
	\State{$\cnot{q(Z,A_{k}\alpha)}{q(R,\alpha)}$}
	\State{$\cnot{q(R,\alpha)}{q(Z,A_{k}\alpha)}$}
	\EndFor
	\State{}
	\end{algorithmic}
\end{minipage}}
 \\ \hline \\
 \multicolumn{2}{c}{`B' type automorphism based on any $B_j,B_k$}\\
{\centering
\begin{minipage}{7cm}
	\begin{algorithmic}
	\State{}
	\For{$\alpha \in \mathcal{M}$}
	\State{$\initZ{q(X,\alpha)}$}
	\State{$\cnot{q(R,B_j\alpha)}{q(X,\alpha)}$}
	\State{$\cnot{q(X,\alpha)}{q(R,B_j\alpha)}$}
	\State{$\initZ{q(R,B^T_k\alpha)}$}
	\State{$\cnot{q(X,\alpha)}{q(R,B_k\alpha)}$}
	\State{$\cnot{q(R,B_k\alpha)}{q(X,\alpha)}$}
	\EndFor
	\State{}
	\end{algorithmic}
\end{minipage}}
&
{\centering
\begin{minipage}{7cm}
	\begin{algorithmic}
	\State{}
	\For{$\alpha \in \mathcal{M}$}
	\State{$\initZ{q(Z,\alpha)}$}
	\State{$\cnot{q(L,\alpha)}{q(Z,B_{j}\alpha)}$}
	\State{$\cnot{q(Z,B_{j}\alpha)}{q(L,\alpha)}$}
	\State{$\initZ{q(L,B_k\alpha)}$}
	\State{$\cnot{q(Z,B_{k}\alpha)}{q(L,\alpha)}$}
	\State{$\cnot{q(L,\alpha)}{q(Z,B_{k}\alpha)}$}
	\EndFor
	\State{}
	\end{algorithmic}
\end{minipage}}
 \\ 
\end{tabular}
\end{center}
\caption{\label{table:automorphisms} Circuits implementing automorphisms of a BB LDPC code within the connectivity already present for syndrome checks. These circuits are fault tolerant and have $\cnotgate$ depth four. If $s = A_j A_k^T$ or $s = B_j B_k^T$, then the logical gate implemented by these automorphisms performs the transformation $\bar X_\alpha, \bar Z_\alpha, \bar X'_\alpha, \bar Z'_\alpha \to \bar X_{s\alpha}, \bar Z_{s\alpha}, \bar X'_{s\alpha}, \bar Z'_{s\alpha}$.}
\end{table}
%SBB: I am not sure if these circuits are correct. For example, the first CNOT in the top left panel does not appear to be compatible with the Tanner graph. We have an edge between q(X,alpha) and q(L,A_j alpha), as discussed in the second paragraph of section 8.1. However, we don't have an edge between q(X,alpha) and q(L,A_j^T alpha). PR: Yes, everything was transposed. This was probably because of the convention switching...

\subsection{Accessing the Primed Block via a ZX-duality}
\label{ssec:ZX_duality}
 
%SBB: perhaps we can change "stabilizer" to "check" in this sentence for consistency. PR: Yup! Also made the definition a bit more general
A ZX-duality is a permutation of the logical qubits that commutes with the stabilizer, except that it turns $X$ checks into $Z$ checks and $Z$ checks into $X$ checks. A physical circuit implementing this permutation and then applying Hadamard to all data qubits always acts as a logical gate \cite{breuckmann2022foldtransversal}. In this section we focus on the implementation of a particular ZX-duality with applications for readout. We leave discovery and implementation of other ZX-dualities for future work. In particular, we derive a general method for constructing fault tolerant circuits for implementing a particular ZX-duality that is present in all BB LDPC codes. While the circuits from this construction are generally quite expensive, they may be amenable to further optimization and can be used sparingly in practice.

Consider a permutation of data qubits that swaps $q(L,\alpha)$ with $q(R,\alpha^{T})$ for all $\alpha \in \mathcal{M}$. A check qubit $q(X,\beta)$ which previously implemented the stabilizer $X(\beta A, \beta B )$ now is connected to the qubits $q(L, (\beta B)^T)$ and $q(R,(\beta A)^T)$ instead, corresponding to the check $Z(\beta^T B^T , \beta^T A^T)$. We see that this permutation switches the stabilizer implemented by $q(X,\beta)$ with the stabilizer implemented by $q(Z,\beta^T)$, so this permutation is indeed a ZX-duality.

We can also see that implementing this permutation and applying Hadamard to all qubits takes logical Pauli matrices to logical Pauli matrices. In particular, the operation swaps $\bar X_\alpha = X(\alpha f,0)$ with $\bar Z'_{\alpha^T} = Z(0,\alpha^T f^T)$, as well as $\bar Z_\alpha = Z(\alpha h^T, \alpha g^T)$ with $\bar X'_{\alpha^T} := X(\alpha^T g, \alpha^T h)$. This operation swaps the primed and unprimed logical blocks, transposes the grid of operators, and applies logical Hadamard to all qubits. Since we can measure logical $X$ for all qubits in the unprimed block using the ancilla system described in \ssec{logical_measurements}, we can use this operation to measure logical $Z$ for qubits in the primed block.

For the rest of this section we describe a fault tolerant method for implementing this operation. We begin with exchanging $q(L)$ and $q(R)$: since these blocks are connected by pairs of edges in $q(X)$ and $q(Z)$, for any $A_i \in A$ and $B_j \in B$ there exists a loop connecting the qubits $q(L,\alpha) \to q(X,A_i^T \alpha) \to q(R, B_j A_i^T \alpha) \to q(Z, B_j \alpha) \to q(L,\alpha)$.
A circuit identical in shape to those in \tab{automorphisms} hence performs a fault tolerant exchange of $q(L,\alpha)$ and $q(R, B_j A_i^T\alpha)$ for all $\alpha$. The additional shift of $B_j A_i^T$ can be removed via an additional automorphism gate. It remains to exchange $q(L,\alpha) \leftrightarrow q(L,\alpha^T)$, as well as $q(R,\alpha)\leftrightarrow q(R,\alpha^T)$ for all $\alpha$, which is significantly more complicated. We focus on $q(L,\alpha) \leftrightarrow q(L,\alpha^T)$ in our discussion but it will be clear the exact same transformations are implementable on $q(R)$ in parallel with those on $q(L)$.

\begin{figure}[t]
  \centering
    \includegraphics[height=8cm]{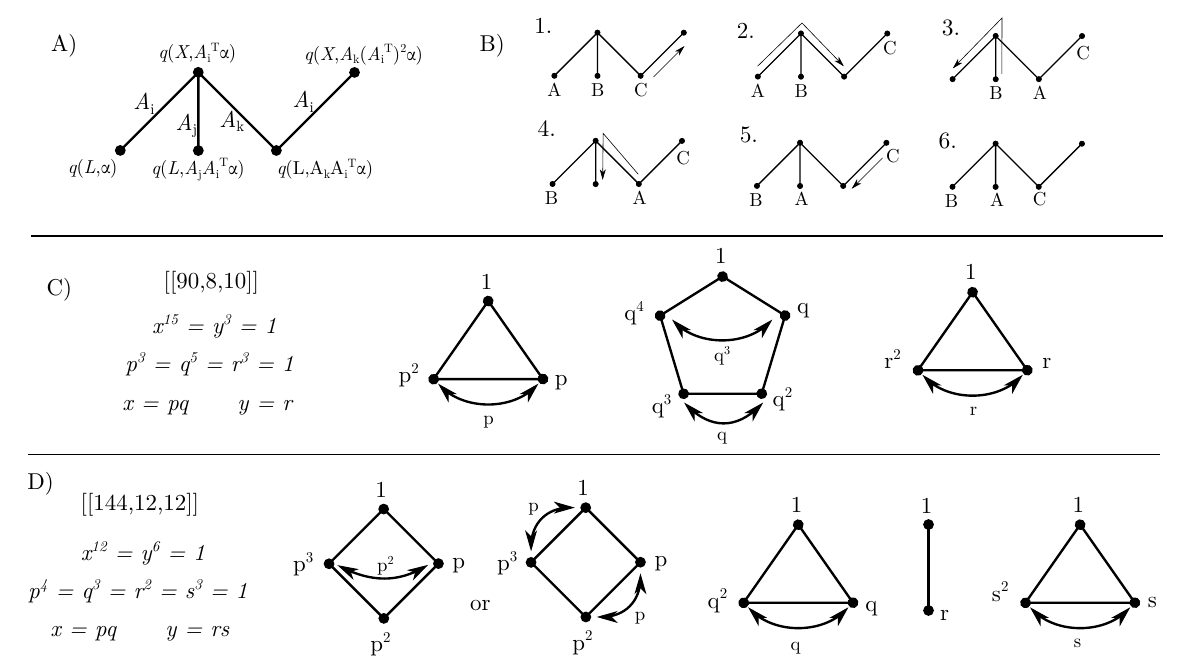}
     \caption{Diagrams for the description of the implementation of the ZX-duality permutation. A) A subgraph of the Tanner graph providing enough connectivity to fault tolerantly swap $q(L,\alpha)$ and $q(L,A_j A_i^T\alpha)$. B) A sequence of shifts of the data on the qubits in the $q(L)$ block that performs the desired exchange without interacting qubits directly. A naive implementation of this sequence has $\cnotgate$ depth 12. C) D) Decomposition of the generators of $\mathcal{M}$ via the classification of finite Abelian groups for two different codes. Drawing the Cayley graph of the subgroup for each generator reveals the ratios defining pairs of qubits that must be exchanged to implement the permutation $q(L,\alpha) \leftrightarrow q(L,\alpha^T)$.}
  \label{fig:duality_implementation}
\end{figure}
%SBB: I was confused by panel A). Based on the definition in the second paragraph of section 8.1, the Tanner graph has an edge between q(X,alpha) and q(L,A_j alpha). However there is no edge between q(X,A_j alpha) and q(L,alpha). PR: Yes, somehow I had everything transposed again. Should be fixed now.

Fault tolerant implementation of the permutation $q(L,\alpha) \leftrightarrow q(L,\alpha^T)$ can be achieved using a more sophisticated version of the fault tolerant circuits in \tab{automorphisms} used for implementing automorphisms. These circuits relied on the existence of a connected loop of alternating check and data qubits, enabling a short depth fault tolerant circuit implementing a cyclic permutation of the data qubits therein. The same connectivity can be leveraged to implement a fault tolerant nearest neighbor swap of two data qubits connected by a check qubit. The fault tolerance of these circuits relies on the same principle: while a swap gate acting on two qubits containing data is not fault tolerant, moving a data qubit onto a blank qubit is. \fig{duality_implementation} A) shows a subgraph of the Tanner graph consisting of several connected qubits in the $q(L)$ and $q(X)$ block, and \fig{duality_implementation} B) shows a sequence of operations where two data qubits can be exchanged without ever interacting directly. This gives us the following capability: whenever the circuits in \tab{automorphisms} can implement the cyclic permutation $q(L,\alpha)\to q(L,s\alpha )$ for all $\alpha$, there also exists a circuit that can swap $q(L,\alpha)$ and $q(L,s\alpha)$ for a particular $\alpha$. Matching circuits exist for $q(R)$, and can be implemented simultaneously.

To decompose $q(L,\alpha) \leftrightarrow q(L,\alpha^T)$ into a sequence of swaps, it will be helpful to consider the group structure of $\mathcal{M}$. Consider for example the $[[90,8,10]]$ code with $x^{15}= y^3 = 1$. 
%SBB: We should use consistent notations for the cyclic group. Cyclic group of order k is denoted $\mathbb{Z}_k$, see e.g. Definition 1 on page 13. PR: Yes, this makes sense.
Following the classification of finite Abelian groups we see that $\mathcal{M} \cong \mathbb{Z}_3 \times \mathbb{Z}_5 \times \mathbb{Z}_3$. We can re-express elements of $\mathcal{M}$ using generators $p,q,r$ with $p^3 = q^5 = r^3 =1$ where $x = pq$ and $y = r$. Transforming $\alpha$ to $\alpha^T$ amounts to decomposing $\alpha$ as $\alpha = p^i q^j r^k$ and exchanging the qubit with $\alpha^T = p^{-i} q^{-j} r^{-k}$.

This exchange $p^i q^j r^k \leftrightarrow p^{-i} q^{-j} r^{-k}$ can be split into a sequence of swaps that are implementable with the method described above using \fig{duality_implementation} A) and B). It suffices to be able to exchange for any $\alpha = q^j r^k$ the qubits $q(L,p^i \alpha) \leftrightarrow q(L,\alpha p^{-i})$, as well as for any $\alpha = p^i r^k$ the qubits $q(L,q^j \alpha) \leftrightarrow q(L,\alpha q^{-j})$, and finally for any $\alpha = p^i q^j$ the qubits $q(L,r^k \alpha) \leftrightarrow q(L,\alpha r^{-k})$. This, for any $i,j,k$, creates a sequence of qubits $q(L,p^i q^j r^k) \leftrightarrow q(L,p^{-i} q^{j} r^{k})  \leftrightarrow q(L,p^{-i} q^{-j} r^{k})  \leftrightarrow q(L,p^{-i} q^{-j} r^{-k}) $ where swaps are possible along each nearest neighbor.  This is sufficient for swapping the first and last qubit in the chain. The implementation of the individual generators like $q(L,\alpha q^i) \leftrightarrow q(L,\alpha q^{-i})$ swaps may also involve additional intermediate qubits, but this only lengthens the chain and does not prohibit implementation.

The resources required for swapping $q(L,p^i \alpha) \leftrightarrow q(L,\alpha p^{-i})$ where $p^3 = 1$ and similarly for other generators depends on the order of the generator $p$ as well as the ratios $A_iA_j^T$ that can be formed using terms $A_i,A_j \in  A$ or similar ratios from $B$. See \fig{duality_implementation} C). Plotting the Cayley graph of the cyclic subgroup spanned by $p$ immediately reveals that since $p$ is order three, only a single ratio $B_iB_j^T = p$ is needed in order to swap any qubits marked $p^1$ and $p^2$, while leaving $p^0$ qubits in place. Indeed since $B = 1 + x^2 + x^7 = 1 + p^2 q^2 + p q^2$ we can implement $p = (p^2 q^2)(p q^2)^T$ in a single layer of transforms in \fig{duality_implementation} B). 

The exchange $q(L,\alpha q^i ) \leftrightarrow q(L,\alpha q^{-i})$ with $q^5 = 1$ requires two such ratios $q$ and $q^2$, the minimal depth expression of which demands the chaining together of two such transforms each. In other codes, like the $[[144,12,12]]$ code, we encounter generators $p,q,r,s$ of order $p^4 = q^3 = r^2 = s^3 = 1$. Elements of order two like $r$ require no swaps at all, and elements of order four like $p$ can be implemented either using the ratio $p^2$ or just $p$, as shown in \fig{duality_implementation} D). Numerical searches can quickly compute the most efficient decompositions of the required swaps. We give the orders of the generators, the ratios defining the required swaps, and the number of transforms required to implement them in \tab{duality_implementations}.  

We emphasize that the swap $q(L,p^i \alpha) \leftrightarrow q(L,\alpha p^{-i})$ can be performed for all $\alpha = q^j r^k$ simultaneously in parallel. This stems from the structure of the exchange circuit in \fig{duality_implementation} B). This circuit performs the swap $q(L,\alpha) \leftrightarrow q(L,A_jA_i^T \cdot \alpha)$ while using the qubit $q(L,A_kA_i^T \cdot \alpha)$ as scratch space. However, we can simultaneously want to swap $q(L,A_kA_i^T \cdot \alpha) \leftrightarrow q(L,A_jA_i^T\cdot A_kA_i^T \cdot \alpha)$ since the first step of the exchange circuit in \fig{duality_implementation} B) is to move the data marked `A' away from the qubit holding it, just as if it were a piece of data marked `C' for a different exchange.

For clarity, we compute the total depth of the circuit for the $[[144,12,12]]$ code without any further optimization. The ratios $p,q,s$ can be implemented using two ratios each via $p = x^{-3} y \cdot y^{-2} y$,   $q = y^{-3}x^2 \cdot y^{-3}x^2$ and $s = y^{-2} y \cdot y^{-2} y$. This results in a chain $q(L,\alpha) \leftrightarrow q(L, \alpha')\leftrightarrow q(L, \alpha'') ... \leftrightarrow   q(L,\alpha^T)$ of length six (counting the number of $\leftrightarrow$s). We can swap the qubits at the ends of a chain of length $n$ using $2n{-}1$ many nearest neighbor swaps. Each swap circuit of the form \fig{duality_implementation} B) can be implemented in $\cnotgate$ depth twelve, resulting in $\cnotgate$ depth $(2\cdot 6 -1)\cdot 12 = 132$ to implement $q(L,\alpha  ) \leftrightarrow q(L,\alpha^T)$. 

%SBB: My understanding is that there is a small depth circuit that implements a swap q(L,alpha)<-->q(L,alpha^T) for any fixed alpha. Can we implement all these swaps in parallel ? If not, does it mean that we have to turn off error correction while we are done with implementing all these swaps? PR: You are correct. While I think that the swaps for each generator individually e.g. q(L,p^i q^j r^k)<-->q(L,p^-i q^j r^k) can be done in parallel for all q^j r^k, the interchanges corresponding for p, q, r cannot be parallelized. This results in a rather deep circuit with no intermediate error correction. I clarified this point in the next paragraph.

Despite its fault tolerance, the implementation of this logical operation is clearly significantly more expensive than that of the automorphisms. Since the intermediate permutations corresponding to each of the generators $p,q,r$ are not ZX dualities in general, it will not be possible in general to perform error correction during this long operation. However, the significant overhead of this operation may be worth such a large cost, since it grants us the capability of accessing the primed block of qubits, effectively doubling the storage capacity of the code. This operation can also be used significantly more sparingly than the automorphism gates, and may be amenable to additional optimization. \edit{Alternatively, additional connections and qubits beyond those necessary for the Tanner graph could be introduced to more directly implement the ZX-duality, though it is likely this will sacrifice the thickness-2 property.}

\begin{table}
	\begin{center}
		\begin{tabular}{|c|c|c|c|c|}
			\hline
			Code & Base Order & Reduced Order & Required Ratios & Swap Chain Length \\ \hline \hline
			$[[72,12,6]]$ & $x^6, y^6$ &  $p^2, q^3, r^2, s^3$ & $q,s$ & 4 \\ \hline
			$[[90,8,10]]$ & $x^{15}, y^{3}$ & $p^3, q^5, r^{3}$ & $p, q, q^2, r$ & 6 \\ \hline
			$[[108,8,10]]$ & $x^9, y^6$ &  $p^9, q^2, r^3$ & $p, p^3, p^5, p^7, r$ & 9 \\ \hline
			$[[144,12,12]]$ & $x^{12}, y^6$ & $p^4, q^3, r^2, s^3$  & $p, q, s$ & 6 \\ \hline
			$[[288,12,18]]$ & $x^{12}, y^{12}$ & $p^4, q^{3}, r^4, s^3$ & $p,q,r,s$ & 10 \\ \hline
			$[[360,12,\leq24]]$ & $x^{30}, y^{6}$ & $p^2, q^3, r^5, s^2, t^3$ & $q, ps, psr^2, psr^3, t$ & 11 \\ \hline
			 %[("E",q),("E",p*s),("E",p*s*r**2),("E",p*s*r**3),("E",t*s)])
			% $[[756,16,\leq24]]$ & $x^{21}, y^{18}$ & ... prohibitive \\ \hline
		\end{tabular}
	\end{center}
     \caption{\label{table:duality_implementations} Table deriving the steps in the circuit implementing the $q(L,\alpha) \leftrightarrow q(L,\alpha^T)$, permutation for the ZX-duality for several codes. The generators of $\mathcal{M}$ are decomposed into generators following the decomposition of finite Abelian groups. Following \fig{duality_implementation} B) and C) these generators demand a set of ratios of terms in $A$ or $B$ which define fault tolerantly implementable exchanges of qubits with corresponding labels. The result is a decomposition of $q(L,\alpha) \leftrightarrow q(L,\alpha^T)$ into a chain $q(L,\alpha) \leftrightarrow q(L, \alpha')\leftrightarrow q(L, \alpha'') \leftrightarrow ... \leftrightarrow   q(L,\alpha^T)$ with length as shown (counting the number of arrows, $\leftrightarrow$).  Note the special implementation of the ratios in the $[[360,12,\leq24]]$ code: the ratios $r^2,r^3$ are not implementable, but $psr^2, psr^3$ are. This is fine if we can also perform the $ps$ ratio on its own to remove the additional transformation on some of the qubits.}
\end{table}

\subsection{Logical Measurements}
\label{ssec:logical_measurements}

\begin{figure}[t]
  \centering
    \includegraphics[width=0.9\textwidth]{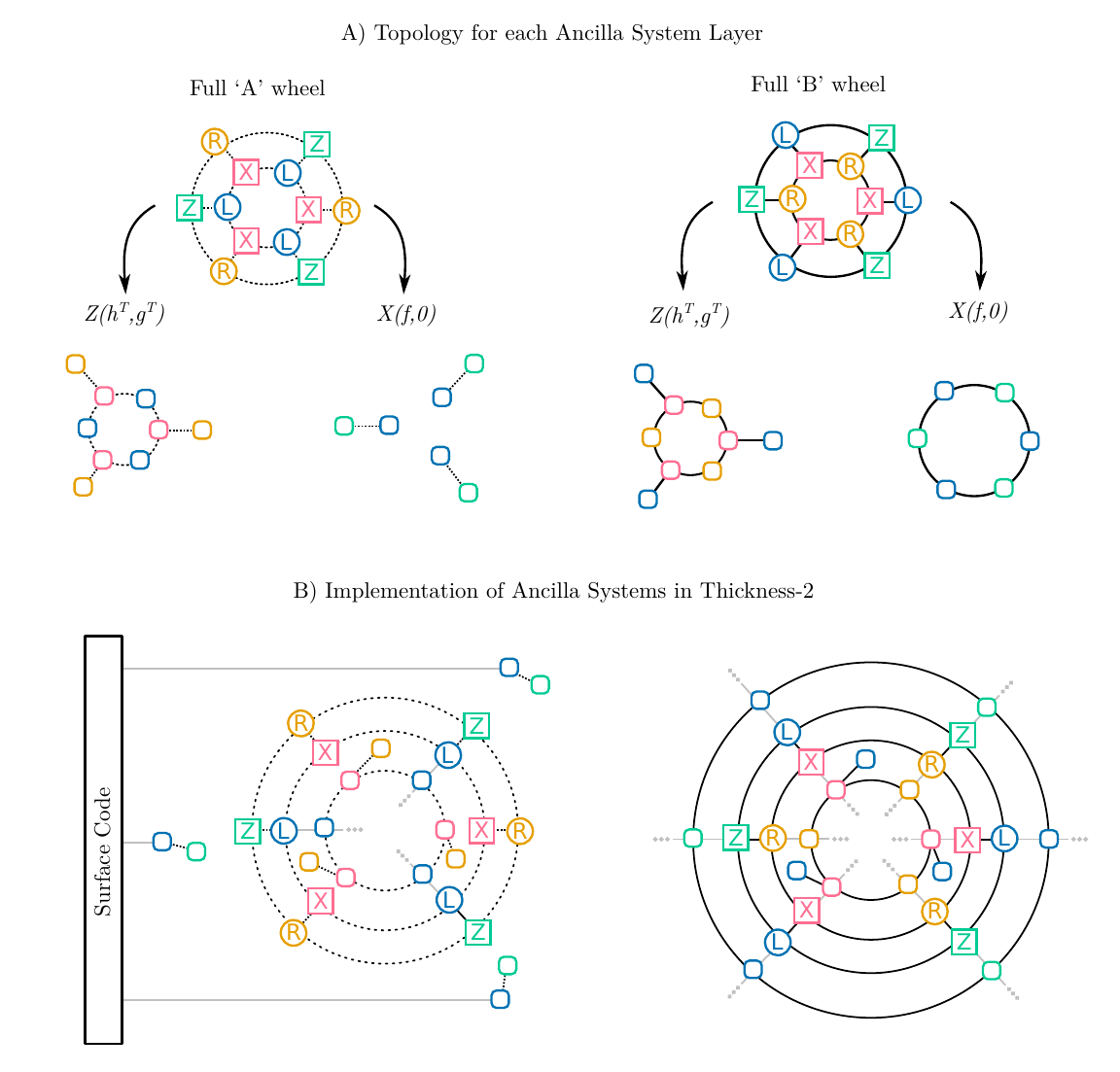}
     \caption{\edit{Illustration} of the thickness-2 property of the Tanner graph of BB LDPC codes. A)  Planar embedding of each layer of the ancilla system from \ifmain{\fig{2Dlayout}~C)}{Figure~\figlayout~C) in the main text} via truncating the `A' and `B' wheels. B) Implementation of the graph from \ifmain{\fig{2Dlayout}~C)}{Figure~\figlayout~C) in the main text} within thickness-2 by nesting wheels. Wheels corresponding to $X(f,0)$ are placed on the outside and wheels corresponding to $Z(h^T,g^T)$ are placed on the inside. The $X(f,0)$ system can be connected to a surface code in the `A' plane. Here we only show one layer per system, but this construction can be repeated for arbitrarily many layers. }
	  
  \label{fig:ancilla_system}
\end{figure}

In this section we describe how to leverage methods from \cite{cohen2022lowoverhead} to implement fault-tolerant measurements of the operators $\bar X_1 = X(f,0)$ and $\bar Z_1 = Z(h^T,g^T)$. As described above, this capability suffices to measure $\bar X$ and $\bar Z$ for all logical qubits. We can also use this technique to measure various Pauli product operators by measuring $\bar X_\alpha, \bar Z_\alpha, \bar X'_\alpha, \bar Z'_\alpha$ for $\alpha$ not corresponding to logical qubits.

The measurement is facilitated by an ancilla system that extends the Tanner graph of the original code. The code defined by this extended Tanner graph contains the logical operator of interest as a stabilizer, enabling its fault tolerant measurement. A sketch of the structure of this ancilla system is given in \ifmain{\fig{2Dlayout}~C)}{Figure~\figlayout~C) in the main text}. For the logical operator $X(f,0)$, we consider a subgraph of the Tanner graph consisting of $q(L,f)$ as well as $q(Z,\alpha)$ operators corresponding to checks with support on $q(L,f)$. Similarly for the logical operator $Z(h^T, g^T)$ we consider a subgraph consisting of $q(L,h^T), q(R,g^T)$ as well as $q(X,\alpha)$ for the relevant $\alpha$. These subgraphs are copied several times and are connected together as shown in the figure: we call the resulting construction an ancilla system.  With enough copies, the code defined by the extended Tanner graph has the same distance as the original code. 

Furthermore, the ancilla system for measuring $X(f,0)$ can be connected to another quantum error correction code, such as a surface code. This enables a joint $\bar X \bar X$ measurement between a surface code qubit and any qubit within the BB code. A subsequent measurement of $Z(h^T,g^T)$ and some additional Pauli corrections then achieves a quantum teleportation circuit.

The main challenge of implementing these ancilla systems, in addition to minimizing their size, is to show that the extended Tanner graph satisfies the thickness-2 constraint. If our goal is to leverage the $X(f,0)$ ancilla system to measure a Pauli product measurement with a surface code qubit, then arguably a thickness-2 extension of the Tanner graph does not suffice since there is no obvious way of connecting it to the surface code qubit as in \ifmain{\fig{2Dlayout}~C)}{Figure~\figlayout~C) in the main text}. To this end, we show how to make the subgraph corresponding to the $X(f,0)$ ancilla system ``effectively planar'': while the graph has thickness-2, the planar graph in one plane consists entirely of connected components with two vertices. Given this property of the embedding of the $X(f,0)$ ancilla system, a connection between this system and a surface code may be facilitated by a construction that is thickness-2 overall.

An effectively planar embedding of the ancilla system relies on the fact that the logical operator $X(f,0)$ has no support on the $q(R)$ block. An implementation of more general logical operators is possible, but would require a graph that renders many ancilla qubits inaccessible from the outside. 

\edit{We briefly give a self-contained description for the construction of the ancilla system from \cite{cohen2022lowoverhead}, following their notation. Suppose we are interested in measuring a logical operator $\bar X$ that is supported on some set of qubits $V_{\bar X}$. Then, let $C_{\bar X}$ be the collection of Pauli-$Z$ checks that have support on any of the $V_{\bar X}$. If we view these as sets of vertices in a Tanner graph, and let $E_{\bar X}$ contain the edges between $V_{\bar X}$ and $C_{\bar X}$, then $\mathcal{G}_{\bar X} := (V_{\bar X},C_{\bar X}, E_{\bar X})$ forms a subgraph of the Tanner graph of the BB code.

The ancilla system is constructed out of copies of `primal layers' isomorphic to $\mathcal{G}_{\bar X}$, and `dual layers' isomorphic to $\mathcal{G}_{\bar X}^{T} := (V^T_{\bar X},C^T_{\bar X}, E^T_{\bar X})$ defined as follows: each $v \in V_{\bar X}$ has a corresponding $v^{T} \in C^{T}_{\bar X}$, each $c \in C_{\bar X}$ has a corresponding $c^{T} \in V^{T}_{\bar X}$, and each edge $(v,c) \in E_{\bar X}$ has a corresponding $(v^T,c^T) \in E^T_{\bar X}$. For some parameter $r$, the final Tanner graph is that of the BB code, plus $r$ additional copies of the dual graph labeled $\mathcal{G}_{\bar X}^{T}[j]$ for $1 \leq j \leq r$, and $r-1$ additional copies of the primal graph labeled $\mathcal{G}_{\bar X}[j]$ for $2 \leq j \leq r$. We regard the $\mathcal{G}_{\bar X}$ within the original code as $\mathcal{G}_{\bar X}[1]$. We also add additional connections between $\mathcal{G}_{\bar X}[j]$ and $\mathcal{G}_{\bar X}^T[j]$ for $j \leq r$, as well as $\mathcal{G}_{\bar X}^T[j]$ and $\mathcal{G}_{\bar X}[j+1]$ for $j < r$: specifically, we connect the associated pairs of $v,v^T$ and $c,c^T$.

It is shown by \cite{cohen2022lowoverhead} that the resulting Tanner graph defines an error correction code of distance $d$ when $r = d$. We construct two such ancilla systems: one for $\bar X := X(f,0)$ and one for $\bar Z := Z(h^T,g^T)$.} \tab{f_polys} shows a choice of $f,g,h$ for the $[[144,12,12]]$ code, defining $X(f,0)$ and $Z(h^T,g^T)$ such that these operators are all minimum weight, and define $\mathcal{G}_{\bar X}$ and $\mathcal{G}_{\bar Z}$ with 30 qubits each. \edit{To achieve $d = 12$ we hence require $2 \times 30 \times (2d-1) = 1380$ additional qubits. We suspect that significantly more efficient variations of this constructions are possible, but leave their development for future work.

The construction presented above is complicated by the fact that vertices in the Tanner graph take on alternating roles in each layer: in the primal layers the vertices $v$ are physical qubits, whereas in the dual layers the $v^T$ are checks. However, for the purposes of giving a thickness-2 decomposition we need not concern ourselves with this. If we do not distinguish between checks and physical qubits, then the primal layers $\mathcal{G}_\mathcal{X}$ and dual layers $\mathcal{G}^T_\mathcal{X}$ have isomorphic Tanner graphs.  Hence, for the purposes of the following, we view all layers as identical.

}

\edit{We now show why a} thickness-2 embedding of the $Z(h^T,g^T)$ ancilla system, and an effectively planar embedding of the $X(f,0)$ ancilla system is possible. This argument is best understood in reference to \fig{ancilla_system}. We begin by understanding the thickness-2 decomposition of each layer of the ancilla systems, leveraging \ifmain{\fig{wheel_extraction}}{Figure~\figwheelextraction in the main text}. In \fig{ancilla_system} A), we can see that \edit{$\mathcal{G}_{\bar Z}$ for} the $Z(h^T,g^T)$ system decomposes into `hairy rings' in both the `A' plane and the `B' plane since it has no support on $q(Z)$. \edit{$\mathcal{G}_{\bar X}$ for} the $X(f,0)$ system is a collection of connected pairs in the `A' plane and collection of rings in the `B' plane, since it has no support on $q(X)$.

\edit{Since $\mathcal{G}_{\bar X},\mathcal{G}_{\bar Z}$ are subgraphs of the BB code's Tanner graph, and its Tanner graph has thickness-2, and since $\mathcal{G}_{\bar X},\mathcal{G}_{\bar Z}$ and $\mathcal{G}^T_{\bar X},\mathcal{G}^T_{\bar Z}$ are isomorphic if we do not distinguish between qubits and checks, we see that each layer of the ancilla construction must be thickness-2 individually. The main challenge is to show that the connections between the layers can be facilitated without introducing any crossings.}

\fig{ancilla_system} B) shows how to connect several layers of the two ancilla systems to both the wheel graphs of the BB code, and also an ancillary surface code. We arrange the wheels of the BB code such that $q(X),q(L)$ are on the inside of the `A' wheels, and that $q(X),q(R)$ are on the inside of the `B' wheels. \edit{$\mathcal{G}_{\bar Z}$ and $\mathcal{G}^T_{\bar Z}$ for} of the $Z(h^T,g^T)$ system can be repeatedly nested inside of the wheels of the BB code. The $q(L)$ qubits can be connected together on the `A' plane, and the $q(R)$ and $q(X)$ qubits can be connected on the `B' plane. As for \edit{$\mathcal{G}_{\bar X}$  and $\mathcal{G}^T_{\bar X}$ for the} $X(f,0)$ system, the rings in the `B' plane can be wrapped around the wheels of the BB code which already allows connection of the required $q(L)$ and $q(Z)$ qubits. This leaves the pairs of connected qubits in the `A' plane completely free of any connections between the layers, making them available to be connected to a surface code system.

\edit{We have considered just two ancilla systems here for measuring $X(f,0)$ and $Z(h^T,g^T)$. However, using additional ancilla systems, especially if their size can be reduced, or equipping these two ancilla systems with additional connections to the $X(g,h)$ and $Z(0,f^T)$ logical operators are potential ways to eliminate the need for the error-prone ZX-duality from \ssec{ZX_duality} and access all logical qubits. On the other hand, it is not clear that either approach would preserve the thickness-2 property.}

\ifdefined\maindocument
\else
    \bibliographystyle{unsrt}
    \bibliography{mybib}
    \end{document}
\fi

\section{Conclusion}
\label{sec:conclusions}

In summary, we offered a new perspective on how a fault-tolerant quantum memory 
could be realized using  near-term quantum processors with a small qubit overhead.
Our approach complements a concatenation-based scheme by 
Pattison,  Krishna, and Preskill
~\cite{pattison2023hierarchical}
where each data qubit of a high-rate  LDPC code is additionally encoded by the surface code.
 Although the  concatenation approach makes use of the high error threshold of the surface code and its geometric locality to address quantum
hardware  limitations such as  a relatively high noise rate and  limited qubit connectivity,
 the additional surface code encoding incurs a significant qubit overhead,
partially negating the advantages offered by LDPC codes. 
Here we have shown that the concatenation step can be avoided by introducing examples of high-rate LDPC codes which have nearly the same error threshold as the surface code itself. 
Although these LDPC codes are not geometrically local, qubit connectivity required for 
syndrome measurements is described by a thickness-two graph which can be implemented
using two planar degree-3 layers of qubit couplers. 
 This is a valid architectural solution for platforms based on superconducting qubits. Numerical simulations performed for the circuit-based noise model
indicate that the proposed LDPC codes compare favorably with the surface code
in the practically relevant range of error rates $p\ge 0.1\%$ 
offering the same level of error suppression with nearly 15$\text{x}$ reduction in the
qubit overhead.

The key hardware challenges to enable the new codes with superconducting qubits are:
\begin{enumerate}
    \item the development of a low-loss second layer,
    \item the development of qubits that can be coupled to 7 connections (6 buses and 1 control line), and
    \item the development of long-range couplers.
\end{enumerate} 

These are all difficult to solve but not impossible. For the first challenge, we can imagine a small change to the packaging \cite{bravyi2022future} which was developed for the IBM Quantum Eagle processor \cite{chow2021}. The simplest would be to place the extra buses on the opposite side of the qubit chip. This would require the development of high Q through substrate vias (TSV) which would be part of the coupling buses and as such would require intensive microwave simulation to make sure these TSVs could support microwave propagation while not introducing large unwanted crosstalk.

The second challenge is an extension of the number of couplers from the heavy hex lattice arrangement \cite{Nation2021} which is 4 (3 couplers and 1 control) to 7. The implication of this is that the cross-resonance gate, which has been the core gate used in large quantum systems for the past few years, would not be the path forward. This is due to the fact that the qubits in the cross-resonance gate are not tunable and as such for a large device with a large number of connections the probability of energy collisions (not just the qubit levels but also higher levels of the transmon) trends to one very fast \cite{Hertzberg2021}. This is because of frequency requirements for the gate to work properly and intrinsic device variability, which is fundamental to Josephson junction fabrication. However, with the tunable coupler \cite{PhysRevApplied.6.064007, PhysRevLett.127.080505}, which was used in the IBM Quantum Egret and is now being developed for the IBM Quantum Heron, this problem no longer exists as the qubits can be designed to be further apart. This new gate is also similar to the gates used by Google Quantum AI \cite{Arute2019}, which have shown that a square lattice arrangement is possible. Extending the coupling map to 7 connections will require significant microwave modeling; however, typical transmons have about 60fF of capacitance and each gate is around 5fF to get the appropriate coupling strengths to the buses, so it is fundamentally possible to develop this coupling map without changing the properties of the transmon qubits which have been shown to have larger coherence and are stable.

The final challenge is the most difficult. For the buses that are short enough so that the fundamental mode can be used the standard circuit QED model holds. However, to demonstrate the 144-qubit code some of the buses will be long enough that we will require frequency engineering. One way to achieve this is with filtering resonators, and a proof of principle experiment was demonstrated in Ref.~\cite{McKay2015}.

Our work leaves several open questions concerning BB LDPC codes and their applications. 
\begin{enumerate}
\item  What are the tradeoffs between the code parameters $n,k,d$ and can one achieve a constant non-zero encoding rate and a growing distance? 
\item  Are there more general LDPC codes compatible with our syndrome measurement circuit(s)? We expect that the same circuit applies to any two-block LDPC code  based on an Abelian group~\cite{lin2023quantum,wang2023abelian}. However, our circuit analysis breaks down 
for non-Abelian groups. 
\item  Our work gives a depth-7 syndrome measurement circuit,  as measured by the number of $\cnotgate$ layers. Is it possible to reduce the circuit depth? Numerical experiments performed for the code $[[72,12,6]]$ indicate that this code may have no depth-6 syndrome measurement (SM) circuit~\cite{AlexanderPrivateComm}. 
%Dmitri: computational experiment [55] was incomplete, i.e., searched only part of the search space. 

\item We observed that a depth-7 SM circuit is not unique.
A natural next step is 
identifying a SM circuit that works best for a particular code.
In addition, it may be possible to improve the circuit-level distance by
using different SM circuits in different syndrome cycles.
Even though some low-weight fault paths are not detectable by any single circuit, such fault paths may be detected if two circuits are used in tandem.

\item How much would the error threshold change for a noise biased towards measurement errors? Note that measurements are  the dominant source of noise for superconducting qubits. Since the considered BB codes have a highly redundant set of check operators, one may expect that they offer extra protection against measurement errors. 
\item The general-purpose BP-OSD decoder used here may not be fast enough to perform error correction in real time. Is there a faster decoder making use of the special structure of BB codes? 
\item How to apply logical gates? While our work gives a fault-tolerant implementation of certain logical gates, these gates offer very limited computational power and are primarily useful for implementing memory capabilities. 
\end{enumerate}

\section*{Acknowledgements}
The authors thank Ben Brown, Oliver Dial, Alexander Ivrii, Tomas Jochym-O'Connor, Yunseong Nam, Matthias Steffen, Kevin Tien, and John Blue for stimulating discussions at various stages of this project.  
The authors acknowledge the IBM Research Cognitive Computing Cluster service for providing resources that have contributed to the research results reported within this paper.

\end{document}